\theoremstyle{plain}
\newtheorem{theorem}{Theorem}[section]
\newtheorem{lemma}[theorem]{Lemma}
\theoremstyle{definition}
\newtheorem{definition}[theorem]{Definition}
\newtheorem{remark}[theorem]{Remark}
\newtheorem*{ASRn}{Assumption ($\mathbb{R}_n$)}
\newtheorem*{ASS1}{Assumption ($\mathbb{S}_1$)}
\newtheorem*{ASS2}{Assumption ($\mathbb{S}_2$)}
\newtheorem*{ASR}{Assumption ($\mathbb{RHO}$)}
\newtheorem*{ASE}{Assumption ($\mathbb{ELL}$)}
\newcommand{\E}{{\mathbb{E}}}
\newcommand{\N}{{\mathbb{N}}}
\newcommand{\Q}{{\mathbb{Q}}}
\newcommand{\R}{{\mathbb{R}}}
\newcommand{\D}{{\mathbb{D}}}
\newcommand{\F}{\mathcal{F}}
\newcommand{\I}{\mathcal{I}}
\DeclareRobustCommand*\cal{\@fontswitch\relax\mathcal}
\newcommand{\be}{\begin{equation}}
\newcommand{\ee}{\end{equation}}
\newcommand{\bea}{\begin{eqnarray}}
\newcommand{\eea}{\end{eqnarray}}
\newcommand{\beast}{\begin{eqnarray*}}
\newcommand{\eeast}{\end{eqnarray*}}
\newcommand{\bproof}{\begin{proof}}
\newcommand{\eproof}{\end{proof}}
\numberwithin{equation}{section}
\numberwithin{table}{section}
\def\e{\mathrm{e}}
\def\ud{\ensuremath{\mathrm{d}}}
\def\dt{\ud t}
\def\ds{\ud s}
\def\dx{\ud x}
\def\dW{\ud W}
\def\lib{LIBOR\xspace}
\begin{document}

\title[Expansion formulas for European quanto options]{Expansion formulas for European quanto options\\in a local volatility FX-LIBOR  model}
\author[J. Hok]{Julien Hok}
\author[P. Ngare]{Philip Ngare}
\author[A. Papapantoleon]{Antonis Papapantoleon}

\address{Credit Agricole CIB, Broadwalk House, 5 Appold St, EC2A 2DA London, United Kingdom}
\email{julienhok@yahoo.fr}

\address{School of Mathematics, University of Nairobi, PO Box 30197-0010, Nairobi, Kenya}
\email{pngare@uonbi.ac.ke}

\address{Department of Mathematics, National Technical University of Athens, Zografou Campus, 15780 Athens, Greece}
\email{papapan@math.ntua.gr}


\date{} \frenchspacing

\keywords{European quanto derivatives, convexity adjustment, volatility skew/smile, local volatility FX-LIBOR model, expansion formula, analytical approximations, Malliavin calculus.}


\begin{abstract} 
We develop an expansion approach for the pricing of European quanto options written on LIBOR rates (of a foreign currency). 
We derive the dynamics of the system of foreign LIBOR rates under the domestic forward measure and then consider the price of the quanto option. 
In order to take the skew/smile effect observed in fixed income and FX markets into account, we consider local volatility models for both the LIBOR and the FX rate. 
Because of the structure of the local volatility function, a closed form solution for quanto option prices does not exist. 
Using expansions around a proxy related to log-normal dynamics, we derive approximation formulas of Black--Scholes type for the price, that have the benefit of giving very rapid numerical procedures.  
Our expansion formulas have the major advantage that they allow for an accurate estimation of the error, using Malliavin calculus, which is directly related to the maturity of the option, the payoff, and the level and curvature of the local volatility function.
These expansions also illustrate the impact of the quanto drift adjustment, while the numerical experiments show an excellent accuracy.
\end{abstract}

\maketitle

\section{Introduction}
\label{sec:introduction}

We are interested in the pricing of European quanto options on LIBOR rates. 
These correspond to a type of derivative in which the underlying rate is denominated in one currency (foreign currency) but the payment is made in another currency (domestic currency). 
Such products are attractive for speculators and investors who wish to have exposure to a foreign asset, but without the corresponding exchange rate risk. 
Think, for instance, of a Euro-based investor who is seeking exposure on the GBP LIBOR rate, but does not want to be exposed to changes of the GBP/EUR foreign exchange rate. 
A European quanto option on the GBP LIBOR rate is a very suitable financial product for her, as it has the payoff of a standard non-quanto option on the GBP LIBOR rate and converts the payout with a guaranteed rate of 1 from GBP into Euro at maturity.  

In an arbitrage-free framework, the pricing of quanto options  can be performed under the domestic forward measure. 
In order to express the dynamics of the underlying LIBOR rate under this pricing measure, one has to apply Girsanov's theorem, leading to a drift term which depends on the volatility of the LIBOR rate, on the volatility of the FX rate, and on the correlation between the LIBOR and the FX rate. 
This drift term leads to an adjustment in the pricing that is referred to as \textit{quanto adjustment} and falls into the more general category of what is called in mathematical finance \textit{convexity adjustment}. 

This class of contracts, termed \textit{exotic European options} by \citet{Pelsser00}, are widely traded over the counter (OTC).
The correct pricing and risk management of European quanto options constitutes an important issue in the financial industry. 
The consideration of the market skew/smile for interest rates and FX rates is fundamental for a correct valuation of European quanto derivatives as discussed in \citet{Romo12}.   
In reference textbooks and articles, see \textit{e.g.} \citet{MusielaRut05}, \citet{BrigiMer06} or \citet{Reiner92}, a simplified Black--Scholes model is considered in order to obtain analytical formulas. 
A similar practical approach is commonly used in the financial industry; see Section \ref{commonmktappro} or \textit{e.g.} \citet{Romo12} and \citet{ChrisJac04} for more details. 
However, it does not take into account properly the skew/smile effects of the underlying assets in the quanto drift adjustment. 
These issues with the commonly used approach and the importance of incorporating the skew/smile properly in the valuation of European quanto options are studied and discussed in \textit{e.g.} \citet{Romo12}, \citet{Jackel10}, \citet{Giese12} or \citet{VongMateo14}.     

Local volatility models, either parametric or non-parametric, see \textit{e.g.} \citet{Dupire94,DerKa98,Rubi94,Jac08} or \citet{Cox75}, usually capture the surface of implied volatilities more precisely than other approaches, such as stochastic volatility models; see \textit{e.g.} \citet{MadQianRen07} or \citet{Romo12} for discussions. 
Moreover, the findings in \citet{Romo12} or \citet{HullSuo02} indicate that the local volatility model can be a correct approach to price European quanto derivatives in the presence of volatility skew/smile. 

Motivated by the discussions above, we propose to evaluate European quanto options in a general local volatility framework.  
Because of its generality, it is often difficult to get analytical formulas for pricing, especially in a high-dimensional case. 
In general, the effective pricing requires the use of a numerical method, based either on PDE (partial differential equation) techniques or Monte Carlo simulations, which can be prohibitively time-consuming for real-time applications. 
Only in very few cases does one have closed-form formulas; \textit{cf.} \citet{ACCL01}. 
In the case of homogeneous volatility, singular perturbation techniques (\citet{HaganWood99}) have been used to obtain asymptotic expressions for the price of vanilla options (call, put). 
Implied volatility formulas are derived using asymptotic expansion methods for short maturities, as in \citet{BeresBusFlorent02}, \citet{PHLabordere05} and \citet{ACCL01}. 
In a more general diffusion setting, approximations of the density function and option prices are derived based on the small disturbance asymptotics, see \textit{e.g.} \citet{KumiTaka92,Yoshi92b} or \citet{Taka95,Taka99}  or \cite{Taka09} for a review. 
By adapting the singular perturbation method used in \citet{HaganWood99}, several authors have developed expansion formulas for the density function of the underlying process and option prices in a general local volatility model; see \textit{e.g.} \citet{PagPas12} and \citet{FoPagPas13}.     

The purpose of the present article is to provide simple and accurate approximation formulas for quanto options in a general local volatility model.
Towards this end, we apply the perturbation method using a proxy introduced in \citet{MiriGobetBen02}.
This method has been applied and extended in many directions, see \textit{e.g.} \citet{MiriGobetBen00,MiriGobetBenIR02}, \citet{GobetMiri14}, \citet{GobetHok14} and \citet{GobetBompis14}. 
We derive expansion formulas, which are of Black--Scholes type, and develop the analysis using Malliavin calculus, to provide accurate estimations of the errors.  
We believe that rigorous error estimates are of prime importance because the accuracy of our expansion formulas depends on the regularity of the payoff function. Once done, this brings confidence in the derived expansions and sheds light on the needed assumptions; see our main results in Theorems \ref{ApproxFormulaProbaApproa2order} and \ref{ApproxFormulaProbaApproa3order}.

A major advantage of our expansion formulas is that they clearly illustrate the impact of the quanto drift adjustment and provide very rapid numerical procedures for its implementation.
The numerical tests, see Section \ref{NumericalExperiments}, show that our formulas constitute a very accurate approximation. 
Our interest in such problems was motivated by specific applications to European quanto derivatives on LIBOR rates, hence we specialize our study to that setting.
However, the approximation methodology and results could be applied to other financial assets as well. 
Moreover, we focus on single-curve LIBOR models as they constitute the basis for multi-curve models.
The extension to multiple curves is straightforward given the analysis and results of the present paper, however it is also very tedious.

This article is structured as follows: Section \ref{sec:model-setup} introduces local volatility models for simultaneously modeling FX and LIBOR rates, as well as quanto options.
Section \ref{sec:expa-quanto} outlines the approach to quanto pricing via expansions around a proxy model and states the main results, which are second and third order expansions for the prices of quanto options.
Section \ref{section:Proofs} provides an error analysis and the derivation of the second order expansion formula, while Section \ref{NumericalExperiments} provides numerical results.
Finally, the Appendices contains some auxiliary results and the derivation of the third order expansion formula.





\section{FX-LIBOR models and European quanto options}
\label{sec:model-setup}

\subsection{A local volatility FX-LIBOR framework}

Let $(\Omega,\mathcal{F},\mathbb{F},\mathbb{Q}_N)$ denote a filtered probability space where the filtration $\mathbb{F}=(\F_t)_{t\in{[0,T_N]}}$ satisfies the 
usual conditions and $T_N$ denotes a finite time horizon. 
Let also $\mathcal{T}=\{0=T_0<T_1<\dots<T_N\}$ denote a discrete tenor structure where $\delta_i=T_{i+1}-T_i$ is the accrual fraction for the period $[T_i,T_{i+1}]$, and define $\mathcal{I}=\{1,\dots,N\}$. 
The dates $(T_i)_{i\in\I}$ correspond to maturity dates of traded instruments.

We assume the existence of an arbitrage-free system of domestic and foreign zero coupon bonds, denoted respectively $(B(\cdot,T_i))_{i\in\mathcal{I}}$ and 
$(B^f(\cdot,T_i))_{i\in\mathcal{I}}$. 
We further consider domestic and foreign forward martingale measures, denoted by $(\Q_i)_{i\in\mathcal{I}}$ and $(\Q_i^f)_{i\in\mathcal{I}}$, where the corresponding zero coupon bond acts as the numeraire for each forward measure. 
Let $W=(W_t)_{t\in{[0,T_N]}}$ and $W^f=(W^f_t)_{t\in{[0,T_N]}}$ denote standard $d$-dimensional Brownian motions relative to the domestic and foreign terminal forward measures $\Q_N$ and $\Q_N^f$ respectively.

Let $(L_i)_{i\in\I}$ and $(L^f_i)_{i\in\I}$ denote the domestic and foreign forward \lib rates, \textit{i.e.} the discretely compounded forward rates for investing in the time period $[T_i,T_{i+1}]$ in the domestic and foreign market. 
Their relation to zero coupon bonds is classically given by
\begin{align}
1 + \delta_i L_i(t) =  \frac{B(t,T_i)}{B(t,T_{i+1})}
  \ \ \text{ and } \ \
1 + \delta_i L_i^f(t) =  \frac{B^f(t,T_i)}{B^f(t,T_{i+1})}.
\end{align}
The dynamics of the system of foreign LIBOR rates $(L_i^f)_{i\in\I}$ is provided by a local volatility model of the form
\begin{align}\label{equ1a}
\ud L^f_i(t) &= L^f_i(t) \lambda_i\big(t,L^f_i(t)\big) \ud W^{f, i+1}_t,
\end{align}
where the $\Q^f_{i+1}$-Brownian motion is related to the terminal Brownian motion via
\begin{equation}\label{eq:for-ter-BM}
W^{f,i+1} 
  = W^{f,N} - \sum_{k=i+1}^{N} \int_0^\cdot 
    \frac{\delta_k L^f_k(t)}{1+\delta_k L^f_k(t)} 
    \lambda_k\big(t,L^f_k(t)\big) \dt,
\end{equation} 
for all $t\in[0,T_{i+1}]$. 
The functions $\lambda_i:[0,T_i]\times \R\to\R^d_+$, $i\in\I$, are continuous, deterministic and satisfy a suitable linear growth condition, \textit{cf.} \citet[\S10.3]{BrigiMer06}. 
They represent the local volatility of the foreign forward LIBOR rate $L_i^f$.

Let $(X(t))_{t\in[0,T_N]}$ denote the foreign exchange (FX) rate expressed in terms of units of domestic currency per unit of foreign currency. 
The FX forward rate for settlement at time $T_i$, denoted by $(X_i(t))_{t\in[0,T_i]}$, is defined by no-arbitrage arguments and provided by 
\begin{equation}\label{equ4}
 X_i(t) = \frac{B^f(t,T_i) X(t)}{B(t,T_i)},
\end{equation}
for all $t\in[0,T_i]$. 
The FX forward rate is, per definition, a $\Q_i$-martingale, and we assume it follows again a local volatility model of the form
\begin{equation}\label{equ5}
 \ud X_i(t)=  X_i(t) \sigma_i\big(t,X_i(t)\big) \ud W^i_t,
\end{equation} 
where $W^i$ is the $\Q_i$-Brownian motion and $\sigma_i:[0,T_i]\times \R\to\R^d_+$, $i\in\I$, is a continuous, deterministic function satisfying a suitable linear growth condition, and represents the local volatility of the FX forward rate. 

The domestic and foreign forward Brownian motions are related via
\begin{align}
W^{f,i} = W^i - \int_0^\cdot \sigma_i\big(t,X_i(t)\big) \dt,
\end{align}
for all $i\in\I$, and this equation together with \eqref{eq:for-ter-BM} determines also the relations between the domestic forward Brownian motions; see \citet{Schloegl02} for the details (in particular Fig. 2). 
Therefore, the dynamics of the foreign LIBOR rate $L_i^f$ under the domestic forward measure $\Q_{i+1}$ are provided by
\begin{equation}
\begin{split}
\ud L^f_i(t) 
  = - L^f_i(t) \lambda_i\big(t,L^f_i(t)\big) \sigma_{i+1}\big(t,X_{i+1}(t)\big) \dt 
  + L^f_i(t) \lambda_i\big(t,L^f_i(t)\big) \ud W_t^{i+1}. 
\end{split}   
\end{equation}

\subsection{European quanto options and a local volatility model}

A quanto cap is a series of quanto caplets, where each quanto caplet is a call option on the foreign \lib rate struck at the domestic currency. 
In other words, a quanto caplet with strike price $K$ and expiry date $T_i$ pays at time $T_{i+1}$ the amount 
\begin{equation}\label{equ1}
\delta_i \big( L^f_i(T_i) - K \big)^+
\end{equation}
in units of domestic currency. 
Therefore, the price of a quanto caplet is provided by
\begin{equation}\label{equ6}
\mathbb{QC}(T_i,K) = \delta_i B(0,T_{i+1})  \,
	     \E_{i+1}\big[\big(L^f_i(T_i)-K\big)^+\big],
\end{equation}
where $\E_{i+1}$ denotes the expectation with respect to the domestic forward measure $\Q_{i+1}$.

In the sequel we will consider a local volatility model where each \lib rate and each FX forward rate are driven by `their own' one-dimensional, correlated 
Brownian motions, resulting in the following system of SDEs:
\begin{equation}\label{quantoeq}
\left\{
\begin{array}{rcl}
\ud L^f_i(t) 
  &=& -L^f_i(t) \lambda_i\big(t,L^f_i(t)\big) 
      \sigma_{i+1}\big(t,X_{i+1}(t)\big) \rho_i \dt \\
  &&\quad + L^f_i(t)\lambda_i\big(t,L^f_i(t)\big)\ud W^{L,i+1}_t \\
\ud X_{i+1}(t)
  &=& X(t, T_{i+1}) \sigma_{i+1}(t, X_{i+1}(t)) \ud W^{X,i+1}_t\\
\langle W^{L,i+1}, W^{X,i+1}\rangle &=& \rho_i, 
\end{array}
\right.
\end{equation}
with initial values $L_i^f(0),X_{i+1}(0)\in\R_+$, where $\lambda_i$ and $\sigma_i$ are $\R_+$-valued volatility functions and $\rho_i\in[-1,1]$.

Assuming that all the coefficients in \eqref{quantoeq} are deterministic, $L_i^f(t)$ is log-normally distributed and the price of a quanto caplet in \eqref{equ6} is given by a Black--Scholes type formula; see \textit{e.g.} \citet{BrigiMer06} or \citet{MusielaRut05}. 
In order to take into account the skew and smile effects observed in the fixed income and foreign exchange markets, we will consider a local volatility model and suppose that $\lambda_i(t, L^f_i(t))$ and $\sigma_{i+1}(t,X_{i+1}(t))$ are functions of $L^f_i(t)$ and $X_{i+1}(t)$ respectively. 
In that case, a closed form solution does not exist anymore and computing \eqref{equ6} numerically by Monte Carlo simulations or PDE methods is time consuming. 
Our objective therefore is to provide an approximation formula for \eqref{equ6} which is accurate enough and allows for an efficient \text{imp}lementation.

\begin{remark}
We assume throughout that the correlation between the forward \lib and the forward FX rate is deterministic and maturity-dependent. 
The extension to a time- and maturity-dependent correlation is straightforward.
\end{remark}

\section{An expansion approach to quanto pricing}
\label{sec:expa-quanto}

The main idea of expansion approaches to option pricing is to derive an asymptotic expansion of the option price in terms of quantities that are known and can be computed quickly, such as prices in a Black--Scholes model and Greeks.
This leads to a numerical scheme for the option price that is faster to compute than the corresponding PDE or Monte Carlo methods, while its accuracy can be \text{imp}roved by including additional terms in the expansion.
This section performs an analogous expansion for the price of a (generic) quanto option, and provides formulas for this option price in terms of a Black--Scholes model and Greeks, while the correlation between FX and LIBOR plays a crucial role.

In order to s\text{imp}lify notation, we will suppress the sub- and super-scripts related to the tenor and currency, and make use of the following notation: 
\begin{align}
L_t=L^f_i(t), \ X_t=X_{i+1}(t), \ W^{\cdot}=W^{\cdot,i+1}, \ \E=\E_{i+1}, \ \rho=\rho_i \ \text{ and } \ \ T=T_i .	
\end{align}
Considering the logarithms of the \lib and the FX rate, denoted by $Y=\ln L$ and $Z=\ln X$, the system of SDEs \eqref{quantoeq} takes the form
\begin{equation}\label{quantoeq3}
\left\{
\begin{array}{rclc}
\ud Y_t &=& \alpha(t, Y_t, Z_t)\dt + \lambda(t,Y_t)\ud W_t^L, 
  & Y_0 = \ln L_0 =y_0 \\
\ud Z_t &=& \beta(t, Z_t)\dt + \sigma(t, Z_t)\ud W_t^X,
  & Z_0 = \ln X_0 =z_0,\\ 	
\langle W^L, W^X \rangle &=& \rho, &
\end{array}
\right.
\end{equation}
where
\begin{equation}\label{quantoeq3-1}
\left\{
\begin{array}{rcl}
\alpha(t,y,z) &=& -[\frac{1}{2}\lambda^2(t,y) + \rho\lambda(t,y)\sigma(t,z)] \\
\lambda(t,y)  &=& \lambda_i(t,\e^y) \\
\beta(t,z)    &=& -\frac{1}{2}\sigma^2(t,z)\\
\sigma(t,z)   &=& \sigma_{i+1}(t,\e^z).
\end{array}
\right.
\end{equation}
Moreover, the price of a quanto option with generic payoff function $h:\R\to\R_+$ is provided by
\begin{equation}\label{Gprice}
\mathbb{QO}_h(T) = \delta B(0,T) \, \E[h(Y_T)]. 
\end{equation}
Taking $h(y)=(\e^y-K)^+$, we recover the quanto caplet in \eqref{equ1}--\eqref{equ6}.

\subsection{Closed-form formula under log-normal dynamics}

Assume that the local volatility coefficients are deterministic, time-dependent functions, in particular $\lambda(t,y)\equiv\lambda(t,y_0)$ and $\sigma(t,z)\equiv\sigma(t,z_0)$, for all $t\in[0,T]$. 
Then, $Y_T$ follows a normal distribution and we can derive a closed-form formula for the price of the quanto caplet defined in \eqref{equ1}. 

Indeed, using standard results from stochastic calculus, we have that
\begin{equation}
Y_T = y_0 -\frac12\Lambda(T) - \Sigma(T) + \int_0^T \lambda(t,y_0) \dW^L_t,
\end{equation}
where 
\begin{equation}
\Lambda(T) = \int_0^T \lambda^2(t,y_0) \dt
\ \ \text{ and } \ \
\Sigma(T) = \rho \int_0^T \lambda(t,y_0)\sigma(t,z_0) \dt.
\end{equation}
Thus, exactly as in \citet{BrigiMer06}, the price of the quanto caplet is given by the following, Black--Scholes type, formula:
\begin{align}\label{BSPriceFormula}
\mathbb{QC}^{BS}(T,K) 
 &= \delta B(0,T) \, \E\big[\big(\e^{Y_T}-K\big)^+\big] \nonumber \\
 &= \delta B(0,T) \, \mathbb C^{BS}(y_0)
\end{align} 
with
\begin{equation}
\mathbb C^{BS}(y_0) 
	= \e^{y_0-\Sigma(T)} \Phi(d_1) - \e^k\Phi(d_2),
\end{equation}
where $\Phi$ denotes the cdf of the standard normal distribution, $k = \ln K$, and
\begin{equation}
d_1 = \frac{y_0 - k - \Sigma(T) + \frac{1}{2}\Lambda(T)}{\sqrt{\Lambda(T)}}
\ \ \text{ and } \ \
d_2 = d_1 - \sqrt{\Lambda(T)}.
\end{equation}

\subsection{Expansion formulas under general dynamics}
\label{section:ApproximationFormula}

The aim of this subsection is to provide expansion formulas that approximate the price of a quanto option when we consider general local volatility dynamics 
for the LIBOR and the FX rate. 
Let us first introduce some notation, and some assumptions that allow us to derive these formulas.

\begin{ASRn}
The volatility functions $\lambda(\cdot,y)$ and $\sigma(\cdot,z)$ are of class $C^n$ in $y$ and $z$ respectively, for some $n\in\mathbb N$. 
In addition, these functions and their derivatives are uniformly bounded. 
\end{ASRn}

Let us introduce the following constants
\begin{align}
M_1^{\lambda} &:=  \max_{1 \leq i \leq n} \, \sup_{(t,y) \in [0,T] \times \R} 
|\partial^i_y \lambda(t,y)|, \ \ \ \
M_1^{\sigma} :=  \max_{1 \leq i \leq n} \, \sup_{(t,z) \in [0,T] \times \R} 
|\partial^i_z \sigma(t,z)|   \\
M_0^{\lambda} &:= \max\{ M_1^{\lambda}, \sup_{(t,y) \in [0,T] \times \R} 
|\lambda(t,y)| \}, \ \
M_0^{\sigma} := \max\{ M_1^{\sigma}, \sup_{(t,z) \in [0,T] \times \R} 
|\sigma(t,z)| \}   
\end{align}
and also
\begin{align}
M_1 &:= \max\{ M_1^{\lambda},M_1^{\sigma} \}, \ \ \ \
M_0 := \max\{ M_0^{\lambda},M_0^{\sigma} \}.   
\end{align}
Let us denote by $\mathcal{H}$ the space of functions with growth being at most exponential. 
In other words, a function $h$ belongs to $\mathcal{H}$ if $|h(x)| \leq c_1 \e^{c_2 |x|}$, for any $x$, for two positive constants $c_1$ and $c_2$.
Moreover, let $h^{(k)}$ denote the $k$-th derivative of the function $h$.

We will separate our analysis according to the smoothness of the payoff function, and distinguish between two cases: 
\begin{ASS1}
The payoff function $h$ belongs to $C_0^{\infty}(\R,\R)$, the space of real-valued infinitely differentiable functions with compact support. 
\end{ASS1}

\begin{ASS2}
The payoff function $h$ is almost everywhere differentiable. 
In addition $h$ and $h^{(1)}$ belong to $\mathcal{H}$. 
\end{ASS2}

\begin{remark}
The first assumption corresponds to (idealized) smooth payoff functions, while the second one corresponds to call and put options.
\end{remark}

In real markets, the correlation between the forward \lib rate and the forward FX rate is typically not very large. 
As an example, the empirical study in \citet{BoenSch03} found its value in the range $[-0.2, 0.2]$. 
Therefore, the following assumption is consistent with real market data.

\begin{ASR}
The correlation between the forward \lib rate and the forward FX rate is not perfect, \textit{i.e.}
\begin{equation}\label{hyporho}
|\rho| <1.
\end{equation}
\end{ASR}

In order to perform the infinitesimal analysis in the error estimates, we rely on smoothness properties which are not provided by the payoff functions, but 
rather by the law of the underlying stochastic models; this is related to Malliavin calculus. 
The following ellipticity assumption on the volatility of the forward \lib combined with Assumption $(\mathbb{R}_4)$---\textit{i.e.} Assumption $(\mathbb{R}_n)$ with $n=4$---guarantees that sufficient smoothness is available.

\begin{ASE}
The volatility of the forward \lib rate $\lambda$ does not vanish and for a positive constant $C_E$, one has
\begin{equation}
1 \leq \frac{\| \lambda\|_{\infty}}{\lambda_{\inf}} \leq C_E,
\end{equation}
where $\| g \|_{\infty} = \sup_{(t,y) \in [0,T] \times \R } | g(t,y) |$ and $\lambda_{\inf} = \inf_{(t,y) \in [0,T] \times \R } |\lambda(t,y)|$.
\end{ASE}


We consider now the following `proxy' or `Black--Scholes' processes:
\begin{eqnarray}\label{proxy0}
\left\{
\begin{array}{rclc}
\ud Y_t^0 &=& \alpha(t, y_0, z_0) \dt + \lambda(t, y_0) \dW_t^L, & Y_0^0=y_0, \\
\ud Z_t^0 &=& \beta(t,z_0) \dt + \sigma(t, z_0) \dW_t^X, & Z_0^0=z_0, \\
\langle W^L,W^X \rangle &=& \rho, &
\end{array}
\right.
\end{eqnarray}
and introduce a family of parametrized processes $(Y^\eta,Z^\eta)$, for $\eta\in[0,1]$, via the system of SDEs:
\begin{eqnarray}\label{pproxy1}
\left\{
\begin{array}{rclc}
\ud Y_t^\eta &=& \alpha\big( t, \eta Y^\eta_t+(1-\eta)y_0,\eta Z_t^\eta + (1-\eta)z_0\big)\dt &\\
	&& + \ \lambda\big(t, \eta Y_t^\eta+(1-\eta)y_0\big)\dW_t^L, 
	& Y_0^\eta=y_0,\\
\ud Z_t^\eta &=& \beta\big(t,\eta Z_t^\eta+(1-\eta)z_0\big)\dt & \\
	&& + \ \sigma \big(t, \eta Z_t^\eta+(1-\eta)z_0\big) \dW_t^X, & Z_0^\eta=z_0,\\
\langle W^L,W^X \rangle &=& \rho. &
\end{array}
\right.
\end{eqnarray}
Setting $\eta=1$, we recover the dynamics of the local volatility model in \eqref{quantoeq3} since $Y_t^{1}=Y_t$ and $Z_t^{1}=Z_t$, while for $\eta=0$ we 
recover the Black--Scholes proxy in \eqref{proxy0}. 

Assumption $(\mathbb R_4)$ yields that, almost surely for any $t\in[0,T]$, $\frac{\partial}{\partial \eta}(Y^\eta_t,Z^\eta_t)$ is $C^3$ with respect to $\eta$; see \textit{e.g.} \citet{Bell06} or \citet{Kun97}. 
Setting $Y^\eta_{i,t}=\frac{\partial^i Y_t^\eta}{\partial\eta^i},\ Z_{i,t}^\eta=\frac{\partial^i Z_t^\eta}{\partial\eta^i}$ and by a direct differentiation of the SDEs \eqref{pproxy1}, we get that 
\begin{eqnarray}
\left\{
\begin{array}{rcl}
\ud Z^\eta_{1,t} &=& (\eta Z^\eta_{1,t}+Z^\eta_t-z_0)[\beta_z\dt
  +\sigma_z \dW^X_{t}], \\
\ud Y_{1,t}^\eta &=& [(\eta Y^\eta_{1,t}+Y^\eta_t-y_0)\alpha_y+(\eta 
Z^\eta_{1,t}+Z_t^\eta-z_0)\alpha_z] \dt \\
 &&+ (\eta Y_{1,t}^\eta+Y^\eta_t-y_0)\lambda_y \dW^L_{t}, 
\end{array}
\right.
\end{eqnarray}
with $Y^\eta_{1,0} = Z^\eta_{1,0} = 0$, and
\begin{eqnarray}\label{SDESecondDerivative}
\left\{
\begin{array}{rcl}
\ud Z^\eta_{2,t} &=& (2Z^\eta_{1,t}+\eta Z^\eta_{2,t})
  [\beta_z \dt + \sigma_z \dW^X_t] \\
&&+ (\eta Z^\eta_{1,t}+Z^\eta_t-z_0)^2[\beta_{zz}\dt+\sigma_{zz}\dW^X_{t}], \\
\ud Y^\eta_{2,t} &=& (2Y^\eta_{1,t}+\eta Y^\eta_{2,t})[\alpha_y \dt + \lambda_y 
  \dW_t^L] + (2Z^\eta_{1,t}+\eta Z_{2,t}^\eta)\alpha_z\dt\\
&&+ 2(\eta Y^\eta_{1,t}+Y^\eta_{t}-y_0)(\eta 
  Z^\eta_{1,t}+Z_t^\eta-z_0)\alpha_{yz}\dt\\
&&+ [(\eta Y^\eta_{1,t}+Y^\eta_t-y_0)^2\alpha_{yy}+(\eta 
  Z^\eta_{1,t}+Z_t^\eta-z_0)^2\alpha_{zz}]\dt\\
&&+ (\eta Y^\eta_{1,t}+Y^\eta_t-y_0)^2\lambda_{yy}\dW^L_{t},
\end{array}
\right.
\end{eqnarray}
with $Y^\eta_{2,0} = Z^\eta_{2,0} = 0$, and
\begin{eqnarray}\label{SDEThirdDerivative}
\left\{
\begin{array}{rcl}
\ud Z^\eta_{3,t} &=& (3Z^\eta_{2,t}+\eta Z^\eta_{3,t})[\beta_z \dt + 
  \sigma_z \dW_z^X]\\
  &&+ 3( {2} Z^\eta_{1,t}+\eta Z^\eta_{2,t})(\eta Z^\eta_{1,t} + 
    Z^\eta_t-Z_0)[\beta_{zz}\dt+\sigma_{zz}\dW^X_t]\\
  &&+ (\eta Z_{1,t}^\eta+Z_t^\eta-Z_0)^3[\beta_{zzz}dt+\sigma_{zzz}\dW^X_t],
  \\
\ud Y_{3,t}^\eta &=& (3Y_{2,t}^\eta+\eta Y_{3,t}^\eta)[\alpha_y \dt + \lambda_y 
  \dW^L] + (3Z_{2,t}^\eta+\eta Z_{3,t}^\eta)\alpha_z \dt\\
 &&+ 3(\eta Y^\eta_{1,t}+Y_t^\eta-Y_0)(2Y_{1,t}^\eta+\eta 
     Y_{2,t}^\eta)[\alpha_{yy}\dt+\lambda_{yy}\dW_t^L)] \\
 &&+ 3(\eta Z_{1,t}^\eta+Z_t^\eta-Z_0)(2Z_{1,t}^\eta+\eta 
     Z_{2,t}^\eta)\alpha_{zz}\dt  \\
 &&+ 3[(\eta Y^\eta_{1,t}+Y_t^\eta-Y_0)(2Z_{1,t}^\eta+\eta Z_{2,t}^\eta) \\
 &&\quad + (\eta Z_{1,t}^\eta+Z_t^\eta-Z_0)(2Y_{1,t}^\eta+\eta 
	Y_{2,t}^\eta)]\alpha_{yz}\dt \\
 &&+  3[(\eta Y^\eta_{1,t}+Y_t^\eta-Y_0)^2(\eta Z_{1,t}^\eta + 
	Z_t^\eta-Z_0)\alpha_{yyz} \\ 
 &&\quad +  (\eta Y^\eta_{1,t}+Y_t^\eta-Y_0)(\eta Z_{1,t}^\eta + 
	Z_t^\eta-Z_0)^2\alpha_{zzy}]\dt \\
 &&+ (\eta Y^\eta_{1,t}+Y_t^\eta-Y_0)^3 \alpha_{yyy}\dt 
    + (\eta Z_{1,t}^\eta+Z_t^\eta-Z_0)^3\alpha_{zzz}\dt \\
 &&+ (\eta Y^\eta_{1,t}+Y_t^\eta-Y_0)^3 \lambda_{yyy} \dW^L_t,
\end{array} 
\right.
\end{eqnarray}
with $Y^\eta_{3,0} = Z^\eta_{3,0} = 0.$ 
Here, we have used the following shorthand notation for the first order derivatives of the coefficients of the SDEs
\begin{eqnarray}
\left\{
\begin{array}{rcl}
\alpha_x  &=& \frac{\partial \alpha}{\partial x}(t, y,z) \big|_{y=\eta Y_t^\eta+(1-\eta)y_0, z=\eta Z_t^\eta+(1-\eta)z_0}, \quad x\in\{y,z\}, \\
\lambda_y &=& \frac{\partial \lambda}{\partial y}(t,y) \big|_{y=\eta Y^\eta_t+(1-\eta)y_0}, \\
\beta_z   &=& \frac{\partial \beta}{\partial z}(t,z) \big|_{z=\eta Z^\eta_t+(1-\eta)z_0},  \\ 
\sigma_z  &=& \frac{\partial \sigma}{\partial z}(t,z) \big|_{z=\eta Z^\eta_t+(1-\eta)z_0},
\end{array}
\right. 
\end{eqnarray}
and analogously for higher order derivatives.


Let us now introduce the main tools of this method, which are expansions of the random variable and the payoff function of the quanto option around known values.
In order to keep the notation simple, we set $Y_{i,t}=\frac{\partial^i Y^\eta_t}{\partial \eta^i}|_{\eta=0}$, $Z_{i,t}=\frac{\partial^i Z^\eta_{t}}{\partial \eta^i}|_{\eta =0}$.
Then, by performing a Taylor expansion of $Y_T$ around zero, we get that
\begin{equation}
Y_T = Y_T^0 + Y_{1,T} +  \frac{1}{2} Y_{2,T}
  + \frac{1}{2} \int_{0}^1 Y^\eta_{3,T}(1-\eta)^2 \ud\eta.
\end{equation}
The dynamics of the proxy model in \eqref{proxy0} yield that $Y_T^0$ is a Gaussian random variable with mean $m_T^0$ and variance $\sqrt{V_T^0}$, where
\begin{equation}
m_T^0 = y_0 + \int_0^{T} \alpha(t, y_0, z_0) \dt
\quad\text{ and }\quad 
V_T^0 = \int_0^{T} \lambda^2(t, y_0) \dt.
\end{equation}
Performing now a Taylor expansion of the payoff in \eqref{Gprice} around $h(Y_T^0)$, we arrive at a formula of the following form:
\begin{equation}\label{eq:main-idea}
\E[h(Y_T)] = \E[h(Y^0_{T})] + \mbox{Corrections terms} + \mbox{Error}.
\end{equation}
The first term $\E[h(Y^0_{T})]$ constitutes the leading order contribution, it is explicitly known (via an analytical formula analogous to \eqref{BSPriceFormula} for the payoff function $h$), but as an approximation alone is not accurate enough. 
Therefore, in the sequel we will derive correction terms in order to achieve better accuracy. 
These correction terms are represented as a combination of Greeks of the option price formula \eqref{BSPriceFormula}. 
Hence, the numerical evaluation of all these terms is straightforward, with a computational cost equivalent to the analytical formula \eqref{BSPriceFormula}.

\subsection{Definitions and notation}
 
Before providing the main results, let us introduce some definitions and notation that will be used in the sequel.

\begin{definition}[Integral Operator] 
The integral operator $\omega$ is defined as follows: for any integrable function $l$, set
\begin{equation}
\omega(l)_t^{T} = \int_t^{T}l_u\ud u
\end{equation}
for $t\in[0,T]$. 
Similarly, for integrable functions $(l_1,l_2)$ and $t\in[0,T]$ set
\begin{equation}
\omega(l_1,l_2)^{T}_t 
  = \omega(l_1\omega(l_2)_\cdot^T)^{T}_t
  =\int_{t}^T l_{1,r}\Big( \int_r^{T}l_{2,s}\ds \Big) \ud r.
\end{equation}
This can be easily iterated to define $\omega(l_1,l_2,\cdots{,l_n})_t^{T} = \omega (l_1\omega (l_2,\cdots,l_n)_\cdot^T)_t^{T}$.
\end{definition}

\begin{definition}[Greeks]\label{def:greeks}
Let $h$ be an appropriate payoff function (such that the expression below makes sense). 
We set, for $i\geq{0},$
\begin{equation}
g_i^{h}(Y^0_{T}) 
  = \frac{\partial^i }{\partial \epsilon^i} \E\big[h\big(Y^0_T+\epsilon\big)\big] \big|_{\epsilon=0}.
\end{equation}
\end{definition}
 
\begin{remark}[Generic constants] 
We use the notation $A \leq_c B$ to assert that $A \leq cB$, where $c$ is a positive constant depending on the model parameters, on $M_0$, $M_1$, $T$, $C_E$ and on other universal constants. 
The constant $c$ may change from line to line, but remains bounded when the model parameters go to $0$. 
\end{remark}

\begin{remark}[Notation for the coefficients]\label{rem:notation-coeffs}
The coefficients $\alpha, \beta, \lambda, \sigma$ and their derivatives will be evaluated from now on at the initial values $(y_0,z_0)$, \textit{i.e.} when we write $\alpha, \beta, \lambda, \sigma$ we mean $\alpha(\cdot,y_0,z_0), \beta(\cdot,z_0), \lambda(\cdot,y_0), \sigma(\cdot,z_0)$, and the same holds for their derivatives. 
We will sometimes also use the subscript $t$ when we want to stress their dependence on time.
\end{remark}

\subsection{Main results}

We are now ready to state the main results of this work, that provide second and third order expansions of an option price around the proxy model, thus making 
precise the formula in \eqref{eq:main-idea}. 
The proofs are deferred to Section \ref{section:Proofs}.

\begin{theorem}[2nd order expansion in price]
\label{ApproxFormulaProbaApproa2order} 
Assume that conditions $(\mathbb R_3)$, $(\mathbb S_2)$, $(\mathbb{ELL})$ and $(\mathbb{RHO})$ are in force. 
Then, the second order expansion of the option price takes the form:
\begin{eqnarray}\label{2ndOrderExpansionFormula}
\begin{array}{rcl}
\E[h(Y_T)] 
&=& \E[h(Y^0_T)] + \omega(\lambda^2,\lambda_y\lambda)_{0}^{T} \bigg[ 
  \frac{1}{2}g_1^h(Y^0_T) - \frac{3}{2}g_2^h(Y^0_T) + g_3^h(Y^0_T) \bigg] 
  \\[1.em]
&& + \rho \bigg[ g_1^h(Y^0_T) [ \omega(\lambda\sigma, \lambda_y\lambda)_{0}^{T} 
  + \frac{1}{2} \big( \omega(\lambda^2 , \lambda_y\sigma)_{0}^{T} 
  + \omega(\sigma^2 , \lambda \sigma_z )_{0}^{T} \big)  ] \\[1.em]
&&\qquad - g_2^h(Y^0_T)[ \omega( \lambda \sigma, \lambda_y\lambda )_{0}^{T} +
  \omega( \lambda^2 , \lambda_y\sigma)_{0}^{T}] \bigg]  \\[1.em]
&& + \rho^2 \bigg[ g_1^h(Y^0_T) \omega( \lambda \sigma,\lambda_y\sigma )_{0}^{T} 
   - g_2^h(Y^0_T) \omega( \lambda\sigma,\lambda \sigma_z )_{0}^{T} \bigg] 
+ \mbox{\rm Error}_2.
\end{array}
\end{eqnarray}
Additionally, the error estimate is provided by
\begin{equation}
|\mbox{\rm Error}_2| 
\leq_c \bigg[ \| h^{(1)}(Y^0_T) \|_2 + \int_0^1 \| 
  h^{(1)} (\eta Y_T + (1-\eta)Y_T^0) \|_2 \ud\eta \bigg]
  \frac{M_0}{\lambda_{\mathrm{inf}} (1-\rho^2)}  M_1 M_0^2 T^{\frac{3}{2}}.
\end{equation}
\end{theorem}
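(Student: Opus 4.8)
The plan is to combine two Taylor expansions and then convert every resulting expectation into Greeks of the proxy price. First I would take the $\eta$-expansion of $Y_T$ stated just before the theorem together with a first-order Taylor expansion of the payoff around $Y_T^0$, namely
\begin{equation*}
\E[h(Y_T)] = \E[h(Y_T^0)] + \E\big[h^{(1)}(Y_T^0)(Y_T - Y_T^0)\big] + \mbox{remainder}.
\end{equation*}
Since $Y_T - Y_T^0 = Y_{1,T} + \tfrac12 Y_{2,T} + \tfrac12\int_0^1 Y_{3,T}^\eta(1-\eta)^2\,\ud\eta$ and each occurrence of a volatility derivative carries one factor of $M_1$, the leading correction at the retained order is the single term $\E[h^{(1)}(Y_T^0)Y_{1,T}]$. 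The contributions of $\tfrac12 Y_{2,T}$, of the quadratic Taylor term, and of the $\eta$-remainder are of higher order, and I would collect them into $\mbox{Error}_2$.

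Next I would make $Y_{1,T}$ explicit. At $\eta=0$ the SDE for $Y_{1,t}$ (and for $Z_{1,t}$) is linear, with Gaussian and explicitly integrable inputs $Y_t^0 - y_0$ and $Z_t^0 - z_0$; substituting $\alpha_y = -\lambda\lambda_y - \rho\lambda_y\sigma$, $\alpha_z = -\rho\lambda\sigma_z$ and $\beta_z = -\sigma\sigma_z$ from \eqref{quantoeq3-1} and integrating, $Y_{1,T}$ becomes a finite sum of single and double (iterated) stochastic integrals in $\dW^L$ and $\dW^X$ with deterministic integrands, organized naturally by powers of $\rho$.

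The core step is to rewrite each $\E[h^{(1)}(Y_T^0)\times(\text{iterated integral})]$ as a combination of the Greeks $g_i^h(Y_T^0)$. Here I would use the Malliavin duality $\E[F\int_0^T u_t\,\dW_t^L] = \E[\int_0^T (D_t^L F)\,u_t\,\dt]$ for adapted $u$, together with $D_t^L Y_T^0 = \lambda_t$ and the correlation identity, under which the $W^L$-Malliavin derivative of a $\dW^X$-integral produces a factor $\rho$: each Itô integral paired against $h^{(1)}(Y_T^0)$ raises the order of the Greek by one, while stochastic Fubini turns the nested time integrals into the operators $\omega(\cdot,\cdot)_0^T$. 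For instance, at $\rho=0$ the drift part of $Y_{1,T}$ yields $\tfrac12\omega(\lambda^2,\lambda_y\lambda)_0^T\,g_1^h - \omega(\lambda^2,\lambda_y\lambda)_0^T\,g_2^h$ and the double Itô integral yields $\omega(\lambda^2,\lambda_y\lambda)_0^T\,g_3^h$, reproducing the bracket $\tfrac12 g_1^h - \tfrac32 g_2^h + g_3^h$; the $\rho$ and $\rho^2$ groups arise identically from the remaining drift pieces and from the cross pairings of $\dW^L$ with $\dW^X$.

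The main obstacle is the error estimate under the weak regularity $(\mathbb S_2)$. For smooth payoffs $(\mathbb S_1)$ the remainder is controlled directly by Taylor's theorem and $L^p$-moment bounds on the correction processes, the latter furnished by $(\mathbb R_3)$ via standard estimates for the flow $\eta\mapsto(Y^\eta,Z^\eta)$ and its $\eta$-derivatives. For call and put payoffs, however, $h^{(2)}$ is only a distribution, so I would instead transfer the surplus derivatives off $h$ by Malliavin integration by parts; this requires nondegeneracy of the Malliavin covariance of the Gaussian vector $(Y_T^0,Z_T^0)$ and of the interpolation $\eta Y_T + (1-\eta)Y_T^0$, which is precisely where the ellipticity $(\mathbb{ELL})$ and the assumption $|\rho|<1$ enter: the inverse covariance scales like $\tfrac{1}{\lambda_{\inf}(1-\rho^2)}$, producing the announced prefactor $\tfrac{M_0}{\lambda_{\inf}(1-\rho^2)}$. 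A final Cauchy--Schwarz step isolating $\| h^{(1)}(Y^0_T) \|_2 + \int_0^1 \| h^{(1)}(\eta Y_T + (1-\eta)Y_T^0) \|_2\,\ud\eta$, and estimating the remaining Malliavin--Sobolev norms of the remainder by $M_1 M_0^2 T^{3/2}$, then yields the stated bound on $\mbox{Error}_2$.
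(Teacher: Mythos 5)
Your proposal is correct and follows essentially the same route as the paper's proof: a Taylor expansion of $Y_T$ in the interpolation parameter combined with a Taylor expansion of $h$ around $Y_T^0$, keeping $\E[h^{(1)}(Y_T^0)Y_{1,T}]$ as the sole explicit correction; conversion of that term into Greeks through the stochastic-Fubini identity of Lemma \ref{IBP1} and the Malliavin duality of Lemma \ref{IBP2}, with the powers of $\rho$ produced by $\ud\langle W^X,W^L\rangle_t=\rho\,\dt$; and an error analysis that treats the $h^{(1)}$-paired remainder by Cauchy--Schwarz and $L^p$ estimates and removes $h^{(2)}$ by Malliavin integration by parts, the nondegeneracy of $\eta Y_T+(1-\eta)Y_T^0$ being exactly where $(\mathbb{ELL})$ and $|\rho|<1$ enter. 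Two minor remarks. First, the paper truncates the $\eta$-expansion at second order, $Y_T=Y_T^0+Y_{1,T}+\int_0^1 Y_{2,T}^\eta(1-\eta)\,\ud\eta$, so that $\mathrm{Error}_2$ involves only $Y_{2,T}^\eta$ and the quadratic payoff remainder; your third-order expansion (with $\tfrac12 Y_{2,T}$ and the $Y_{3,T}^\eta$-remainder pushed into the error) achieves the same bound but invokes one more $\eta$-derivative of the flow than the second-order statement under $(\mathbb R_3)$ strictly requires. Second, the bookkeeping in your $\rho=0$ sanity check is slightly off: $\int_0^T\xi_t\lambda_{y,t}\,\dW_t^L$ is not a pure double It\^o integral, because $\xi_t$ contains the drift $\int_0^t\alpha_s\,\ds$; after the duality step it contributes $\omega(\alpha,\lambda_y\lambda)_0^T\,g_2^h+\omega(\lambda^2,\lambda_y\lambda)_0^T\,g_3^h=-\tfrac12\omega(\lambda^2,\lambda_y\lambda)_0^T\,g_2^h+\omega(\lambda^2,\lambda_y\lambda)_0^T\,g_3^h$, and this extra $-\tfrac12 g_2^h$ is precisely what turns your tally $\tfrac12 g_1^h-g_2^h+g_3^h$ into the correct bracket $\tfrac12 g_1^h-\tfrac32 g_2^h+g_3^h$. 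Neither point affects the soundness of the method.
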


\begin{theorem}[3rd order expansion in price]
\label{ApproxFormulaProbaApproa3order}
Assume that conditions $(\mathbb R_4)$, $(\mathbb S_2)$, $(\mathbb{ELL})$ and $(\mathbb{RHO})$ are in force. 
Then, the third order expansion of the option price takes the form:
\begin{equation}\label{3rdOrderExpansionFormula}
\E[h(Y_T)] 
= \E[h(Y^0_T)] + \sum_{j=1}^6 \gamma_{0,j,T} g_j^h(Y^0_T) 
+ \sum_{i=1}^4 \gamma_{i,T}\rho^i + \mbox{\em Error}_3,
\end{equation}
where
\begin{eqnarray}
\left\{
\begin{array}{lll}
\gamma_{0,1,T} &=& \frac{1}{2} (A_{1,T} -  A_{2,T} - A_{3,T} ) 
  - \frac{1}{2} B_{1,T} - \frac{1}{4} ( B_{2,T} + B_{3,T} ) \\ 
\gamma_{0,2,T} &=& -\frac{3}{2} A_{1,T} + \frac{1}{2} (A_{2,T} + A_{3,T}) 
 + \frac{7}{2}B_{1,T} + \frac{5}{4}(B_{3,T}+B_{2,T}) 
 +  \frac{1}{2} C_{33,T} + \frac{1}{4}C_{32,T} \\
\gamma_{0,3,T} &=& A_{1,T} - 6B_{1,T} - 2(B_{3,T}+B_{2,T}) 
  - \frac{3}{2}C_{32,T} -3C_{33,T} \\
\gamma_{0,4,T} &=& 3B_{1,T} + B_{2,T} +B_{3,T} + \frac{13}{4}C_{32,T} 
  +  \frac{13}{2}C_{33,T} \\
\gamma_{0,5,T} &=& -3 C_{32,T} -6C_{33,T} \\
\gamma_{0,6,T} &=& C_{32,T} + 2C_{33,T} 
\end{array}
\right.
\end{eqnarray}
with 
\begin{eqnarray}
\left\{
\begin{array}{l}
A_{1,T} = \omega(\lambda^2,\lambda\lambda_y)_0^T  \quad\quad\quad
A_{2,T} = \omega(\lambda^2,\lambda\lambda_{yy})_0^T \quad\quad
A_{3,T} = \omega(\lambda^2,(\lambda_y)^2)_0^T   \\
B_{1,T} = \omega(\lambda^2,\lambda\lambda_y,\lambda\lambda_y)_0^T \quad
B_{2,T} = \omega(\lambda^2,\lambda^2,\lambda\lambda_{yy})_0^T \ \
B_{3,T} = \omega(\lambda^2,\lambda^2,(\lambda_y)^2)_0^T \\ 
C_{32,T} = \omega(\lambda^2,\lambda\lambda_y,\lambda^2,\lambda\lambda_y)_0^T
\quad
C_{33,T} = \omega(\lambda^2,\lambda^2,\lambda\lambda_y,\lambda\lambda_y)_0^T.
\end{array}
\right.
\end{eqnarray}
The expressions for the coefficients $\gamma_{i,T}$, $i=1,2,3,4$, are provided respectively by \eqref{gamma1}, \eqref{gamma2}, \eqref{gamma3} and \eqref{gamma4} in Appendix \ref{DerivThirdOrderFormula}. 
Additionally, the error estimate is given by 
\begin{equation}
| \mbox{\em Error}_3 | 
  \leq_c \bigg[ \| h^{(1)}(Y^0_T) \|_2 + \int_0^1 
    \| h^{(1)}(\eta Y_T + (1-\eta)Y_T^0) \|_2 \ud \eta \bigg] 
    \frac{M_0^5 M_1}{(\lambda_{\mathrm{inf}}(1-\rho^2))^2} T^2. 
\end{equation}
\end{theorem}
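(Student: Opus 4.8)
The plan is to mirror the proof of Theorem~\ref{ApproxFormulaProbaApproa2order}, carrying the expansion one order further. The starting point is the Taylor expansion of $Y_T = Y_T^1$ in the parameter $\eta$ around $\eta=0$, now pushed to third order with an integral remainder and then rearranged using $\int_0^1(1-\eta)^2\ud\eta=\tfrac13$:
\[
Y_T = Y_T^0 + Y_{1,T} + \tfrac12 Y_{2,T} + \tfrac16 Y_{3,T} + \tfrac12\int_0^1 (1-\eta)^2\big(Y^\eta_{3,T} - Y_{3,T}\big)\,\ud\eta,
\]
where the $Y_{i,T}$ solve the linearized SDEs \eqref{SDESecondDerivative}--\eqref{SDEThirdDerivative} evaluated at $\eta=0$, so that all coefficients are frozen at $(y_0,z_0)$ and become deterministic functions of time. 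I would then Taylor-expand the payoff $h(Y_T)$ around $h(Y_T^0)$ up to the cubic term, setting $U := Y_T - Y_T^0$, and collect the contributions according to their deterministic order in the small volatility/time parameters.

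Next I would compute the correction terms. Since at $\eta=0$ the coefficients are frozen, each $Y_{i,T}$ is an explicit iterated Wiener integral against $W^L,W^X$ with deterministic integrands built from $\lambda,\sigma$ and their derivatives. The heart of the computation is to evaluate expectations of the form $\E[h^{(k)}(Y_T^0)\,\Pi]$, where $\Pi$ is a product of such iterated integrals. Because $Y_T^0$ is a non-degenerate Gaussian by $(\mathbb{ELL})$ and the $Y_{i,T}$ live on the same probability space, each such expectation reduces, via repeated Gaussian integration by parts (the duality/Malliavin formula for the proxy), to a linear combination $\sum_j c_j\, g_j^h(Y_T^0)$ of the Greeks of Definition~\ref{def:greeks}, with coefficients $c_j$ expressed through the operator $\omega$. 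I would organize this bookkeeping by the power of $\rho$ carried by the factors (recalling that $\alpha$ depends affinely on $\rho$ through the term $-\rho\lambda\sigma$): the pure-$\lambda$ contributions assemble into the coefficients $\gamma_{0,j,T}$, producing the $A_{\cdot,T},B_{\cdot,T},C_{\cdot,T}$ terms, while the mixed $\lambda$--$\sigma$ contributions assemble into $\gamma_{i,T}\rho^i$ for $i=1,\dots,4$, the highest power $\rho^4$ arising from the cubic payoff term combined with the doubly-$\rho$-laden $Z$-contributions.

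The \emph{main obstacle} is the error estimate. The remainder $\mathrm{Error}_3$ collects the Taylor remainder $\tfrac12\int_0^1(1-\eta)^2(Y^\eta_{3,T}-Y_{3,T})\,\ud\eta$ of $Y_T$ paired with $h^{(1)}(Y_T^0)$, together with the Lagrange-type remainder of the payoff expansion, which produces a term $\E[h^{(k)}(\xi)\cdot(\cdots)]$ with $\xi$ lying between $Y_T^0$ and $Y_T$; the latter is the source of the $\int_0^1\|h^{(1)}(\eta Y_T + (1-\eta)Y_T^0)\|_2\,\ud\eta$ factor. Under $(\mathbb{S}_2)$ the payoff is only once (a.e.) differentiable, so the higher derivatives $h^{(k)}$, $k\geq 2$, are unavailable and must be removed by Malliavin integration by parts. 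The delicate point is that this IBP must be applied twice here — one order more than in Theorem~\ref{ApproxFormulaProbaApproa2order} — to transfer the two surplus derivatives off $h$ onto Malliavin weights, leaving only $h^{(1)}$. Each IBP requires inverting the Malliavin covariance of the relevant two-dimensional Gaussian vector $(Y_T^0,Z_T^0)$: $(\mathbb{ELL})$ controls the inverse through $\lambda_{\inf}^{-1}$ and $(\mathbb{RHO})$ through $(1-\rho^2)^{-1}$, so two applications yield the factor $(\lambda_{\inf}(1-\rho^2))^{-2}$.

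To finish, I would bound the resulting Malliavin weights, which are polynomials in the bounded derivatives of $\lambda,\sigma$ (accounting for the $M_0^5 M_1$) and in iterated time integrals (accounting for $T^2$), and combine these with the Cauchy--Schwarz inequality and $L^p$-stability estimates for the linearized flows $Y^\eta_{i}$ that are uniform in $\eta\in[0,1]$ and follow from $(\mathbb{R}_4)$. Assembling the explicit correction terms with the estimated remainder then gives \eqref{3rdOrderExpansionFormula} together with the stated bound on $\mathrm{Error}_3$.
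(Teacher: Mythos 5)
Your overall strategy is indeed the paper's: interpolation in $\eta$, Taylor expansion of the payoff around the proxy, reduction of the correction terms to Greeks via Gaussian duality, and Malliavin integration by parts under $(\mathbb{ELL})$ and $(\mathbb{RHO})$ for the error, with the $L^p$-estimates of Lemma \ref{LpEstimates}. However, two concrete steps do not deliver the theorem as stated. First, your expansion of $Y_T$ retains $\tfrac16 Y_{3,T}$ as an explicit correction term, whereas the paper stops at $\tfrac12 Y_{2,T}$ and places the \emph{entire} remainder $\int_0^1 Y^\eta_{3,T}\tfrac{(1-\eta)^2}{2}\,\ud\eta$ into $\mathrm{Error}_3$. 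This is not cosmetic: making $\E[h^{(1)}(Y_T^0)\,Y_{3,T}]$ explicit produces weights involving the third derivatives $\lambda_{yyy}$, $\sigma_{zzz}$, $\alpha_{yyy}$ (see \eqref{SDEThirdDerivative} at $\eta=0$), and no such terms appear in the coefficients $\gamma_{0,j,T}$, $\gamma_{i,T}$ of the theorem; as written, you would derive a different formula. The fix is to bound $\tfrac16|\E[h^{(1)}(Y_T^0)Y_{3,T}]| \leq_c \|h^{(1)}(Y_T^0)\|_2\, M_1 M_0^3 T^2$ by Cauchy--Schwarz and absorb it into $\mathrm{Error}_3$ --- but then splitting it off buys nothing, and note that improving on the trivial bound for your remainder $\int_0^1(1-\eta)^2\big(Y^\eta_{3,T}-Y_{3,T}\big)\ud\eta$ would require a fourth $\eta$-derivative of the flow, i.e.\ regularity beyond what $(\mathbb{R}_4)$ provides.

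Second, your error budget lists only two pieces where three are needed. Keeping $\tfrac12\E[h^{(2)}(Y_T^0)Y_{1,T}^2]$ as a correction while the payoff expansion produces $\tfrac12\E[h^{(2)}(Y_T^0)(Y_T-Y_T^0)^2]$ leaves the discrepancy $\tfrac12\E\big[h^{(2)}(Y_T^0)\big((Y_T-Y_T^0)^2-Y_{1,T}^2\big)\big]$, which your proposal never addresses. Under $(\mathbb{S}_2)$ the derivative $h^{(2)}$ is unavailable, so this term cannot be dispatched by Cauchy--Schwarz; it requires one Malliavin integration by parts, and to obtain a weight with controlled Sobolev--Malliavin norm the paper first derives the exact representation $(Y_T-Y_T^0)^2-Y_{1,T}^2=\int_0^1(1-\eta)\big[Y_{3,T}^\eta(Y_T^\eta-Y_T^0)+3Y_{1,T}^\eta Y_{2,T}^\eta\big]\ud\eta$ by Taylor-expanding $f(\eta)=(Y_T^\eta-Y_T^0)^2$. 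Finally, a conceptual correction: the object whose Malliavin covariance must be inverted (twice, for the cubic payoff remainder) is the scalar, \emph{non-Gaussian} variable $F_\eta=\eta Y_T+(1-\eta)Y_T^0$, uniformly in $\eta\in[0,1]$, not ``the two-dimensional Gaussian vector $(Y_T^0,Z_T^0)$''. If only the Gaussian proxy were involved, neither $(\mathbb{ELL})$ nor $(\mathbb{RHO})$ would be needed in this way; the factor $(1-\rho^2)^{-1}$ arises precisely because one writes $W^L=\rho W^X+\sqrt{1-\rho^2}\,\widetilde{W}^L$ and lower-bounds $\gamma_{F_\eta}$ using only the $\widetilde{W}^L$-derivative of the full local-volatility process $Y_T$, as in Lemma \ref{ErrorAnalysisIBP}.
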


\begin{remark}[Sanity check] 
Let $\rho=0$, then the quanto (drift) adjustment in the \lib SDE \eqref{quantoeq3} vanishes, and we recover the second and third order approximation formulas given respectively by Theorems 2.1 and 2.3 in \citet{BenGobetMiri10a}.  

If $\rho \neq 0$, the second and third order expansion formulas \eqref{2ndOrderExpansionFormula} and \eqref{3rdOrderExpansionFormula} provide some information about the \text{imp}act of the quanto (drift) adjustment in the option prices, in terms of a polynomial function of the correlation $\rho$.  
\end{remark}

Consider a call option with payoff $h(y) = (\e^y - \e^k)_+$, then the theorems above provide an approximation formula for its price in log variables. 
In that case, $\E[h(Y^0_T)] = \mathbb C^{BS}(y_0)$ corresponds to the Black--Scholes price given by \eqref{BSPriceFormula}. 
In order to compute the correction terms, we need to calculate the derivatives of $\mathbb C^{BS}(y_0)$ w.r.t. $y_0$. 
Below is a useful lemma allowing to calculate them in a systematic way using Hermitte polynomials.

\begin{lemma} \label{deryBSlemma}
Let $n \geq 1$, then we have
\begin{equation} \label{deryBS}
\frac{\partial^n \mathbb C^{BS}(y_0)}{\partial y_0^n} 
  = \e^{y_0- \Sigma(T)} \left[\Phi(d_1) + 
  1_{\{n \geq 2\}} \Phi^{'}(d_1) \sum_{j=1}^{n-1} \binom{n-1}{j} 
  (-1)^{j-1} \frac{H_{j-1}(d_1)}{(\bar{\lambda}\sqrt{T})^j} \right],
\end{equation}
where $H_j$, $j\in\N$, denotes the Hermitte polynomials defined as 
\begin{equation}
H_j(x) = (-1)^j \e^{\frac{x^2}{2}} \partial^n_{x^n}(\e^{-\frac{x^2}{2}}), 
\  \ j \in \N .	
\end{equation} 
\end{lemma}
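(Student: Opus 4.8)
The plan is to establish the statement by first computing the first derivative explicitly and then iterating. Before anything else I would fix the meaning of $\partial_{y_0}$: in the context of the Greeks of Definition~\ref{def:greeks}, shifting $y_0$ amounts to translating the mean of the Gaussian $Y^0_T$ while the variance $\Lambda(T)$ and the convexity term $\Sigma(T)$ are inherited from the frozen proxy volatilities. Hence $\Lambda(T)$ and $\Sigma(T)$ are to be treated as constants in $y_0$, the effective volatility $\bar\lambda$ being defined through $\bar\lambda^2 T=\Lambda(T)$. Under this convention $\partial_{y_0}\e^{y_0-\Sigma(T)}=\e^{y_0-\Sigma(T)}$, and since $d_1,d_2$ are affine in $y_0$ one has $\partial_{y_0}d_1=\partial_{y_0}d_2=1/\sqrt{\Lambda(T)}=1/(\bar\lambda\sqrt T)$.

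For the base case $n=1$, I would differentiate $\mathbb C^{BS}(y_0)=\e^{y_0-\Sigma(T)}\Phi(d_1)-\e^k\Phi(d_2)$ once, obtaining $\e^{y_0-\Sigma(T)}\Phi(d_1)+\bigl[\e^{y_0-\Sigma(T)}\Phi'(d_1)-\e^k\Phi'(d_2)\bigr]/\sqrt{\Lambda(T)}$. The bracketed term vanishes by the elementary Black--Scholes identity $\e^{y_0-\Sigma(T)}\Phi'(d_1)=\e^k\Phi'(d_2)$, which I would verify by forming the ratio $\Phi'(d_1)/\Phi'(d_2)=\exp\!\bigl(-\tfrac12(d_1^2-d_2^2)\bigr)$ and simplifying with $d_1-d_2=\sqrt{\Lambda(T)}$ and $d_1\sqrt{\Lambda(T)}=y_0-k-\Sigma(T)+\tfrac12\Lambda(T)$. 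This yields $\partial_{y_0}\mathbb C^{BS}=\e^{y_0-\Sigma(T)}\Phi(d_1)$, which is exactly \eqref{deryBS} for $n=1$ (empty sum).

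For $n\geq2$ I would write $\partial^{n}_{y_0}\mathbb C^{BS}=\partial^{n-1}_{y_0}\bigl[\e^{y_0-\Sigma(T)}\Phi(d_1)\bigr]$ and apply Leibniz's rule. Because every derivative of $\e^{y_0-\Sigma(T)}$ reproduces the same factor, the rule collapses to $\e^{y_0-\Sigma(T)}\sum_{j=0}^{n-1}\binom{n-1}{j}\partial^{j}_{y_0}\Phi(d_1)$, the $j=0$ term giving the leading $\Phi(d_1)$. For $j\geq1$, affinity of $d_1$ gives $\partial^{j}_{y_0}\Phi(d_1)=(\bar\lambda\sqrt T)^{-j}\,\Phi^{(j)}(d_1)$, and the Hermite identity $\Phi^{(j)}(x)=\varphi^{(j-1)}(x)=(-1)^{j-1}H_{j-1}(x)\Phi'(x)$ — immediate from the definition of $H_{j-1}$ in the statement, since $\Phi'(x)=(2\pi)^{-1/2}\e^{-x^2/2}$ — produces precisely the summand $(-1)^{j-1}H_{j-1}(d_1)/(\bar\lambda\sqrt T)^{j}$ weighted by $\Phi'(d_1)$. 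Collecting these terms recovers \eqref{deryBS}.

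The only genuinely delicate point is the base case: one must both adopt the interpretation that $\Lambda(T)$ and $\Sigma(T)$ are frozen — so that the derivative of $\e^{y_0-\Sigma(T)}$ produces the $\Phi(d_1)$ that then propagates through all orders — and exploit the cancellation afforded by the Black--Scholes identity. Once $n=1$ is secured, the remainder is pure bookkeeping: the whole $n$-dependence is carried by the single factor $\e^{y_0-\Sigma(T)}$, whose derivatives are trivial, so the combinatorics reduce to the binomial coefficients $\binom{n-1}{j}$ together with the Hermite recursion for derivatives of the Gaussian density. For transparency I would present the passage from order $n-1$ to order $n$ as an induction rather than a one-shot Leibniz expansion, which makes the emergence of $\binom{n-1}{j}$ and the shift of the Hermite index to $j-1$ entirely explicit.
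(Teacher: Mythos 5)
Your proposal is correct and follows essentially the same route as the paper: establish $\partial_{y_0}\mathbb C^{BS}=\e^{y_0-\Sigma(T)}\Phi(d_1)$ via the Black--Scholes cancellation identity, then apply Leibniz's rule to the product of the exponential and $\Phi(d_1)$, using the Hermite definition to express the Gaussian derivatives. The paper's proof is just a terser version of this (it factors $\e^{-\Sigma(T)}$ out and absorbs $\Sigma(T)$ into a shifted strike $\tilde k=k+\Sigma(T)$ before differentiating, which is purely cosmetic); your write-up supplies the details the paper omits, all of which check out.
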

The proof is provided in Appendix \ref{app:proof}.

\section{Analysis and proofs} 
\label{section:Proofs}

This section provides the derivation of the expansion formulas for quanto pricing presented in Theorems \ref{ApproxFormulaProbaApproa2order} and \ref{ApproxFormulaProbaApproa3order}, as well as an analysis of the corresponding error terms.
After some preliminary results, the expansion formulas and the corresponding error estimates for the second and third order expansions are presented in Section \ref{sec:pe-ee}.
The derivation of the Greeks for the second order expansion is presented in Section \ref{subsection:CompGreeks}, while the details for the Greeks of the third order expansion are deferred to the appendix for the sake of brevity.

\subsection{Auxiliary results}

We start with some results that are useful for the subsequent error analysis. 
The $L^p$-estimates follow from the work of \citet[Theorem 5.1]{BenGobetMiri10a}, thus their proof is omitted.
As usual, the $L^p$-norm of a real random variable $Z$ is provided by $\|Z\|_p = \big(\E[|Z|^p]\big)^{\frac1p}, \, p\geq 1$.

\begin{lemma}[$L^p$-estimates]
\label{LpEstimates} 
Assume that condition $(\mathbb R_4)$ is in force. 
Then, for all $p\geq1$ and $i=1,2,3$, we have
\begin{equation}\label{Z0Y0}
\sup_{t \in [0,T], \eta \in [0,1]} \|  Z_t^\eta-z_0 \|_p \leq_c  M_0\sqrt{T}, \, 
\sup_{t \in [0,T], \eta \in [0,1]} \|  Y_t^\eta-y_0 \|_p \leq_c  M_0\sqrt{T},
\end{equation}
\begin{equation}\label{ZiYi}
\sup_{t \in [0,T], \eta \in [0,1]} \|  Z_{i,t}^\eta \|_p \leq_c  
  M_1M_0^iT^{\frac{i+1}{2}}, \,	
\sup_{t \in [0,T], \eta \in [0,1]} \|  Y_{i,t}^\eta \|_p \leq_c  
M_1M_0^iT^{\frac{i+1}{2}}.
\end{equation}		
\end{lemma}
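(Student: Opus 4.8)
The plan is to establish all four bounds simultaneously by induction on the order of the $\eta$-derivative, treating the pair $(Y^\eta,Z^\eta)$ together since the equation for $Y^\eta_{i,t}$ is driven by the corresponding $Z^\eta_{j,t}$ with $j\le i$. The only analytic tools needed are the Burkholder--Davis--Gundy (BDG) inequality for the martingale parts, the generalized Minkowski inequality to pull the $L^p$-norm inside the time integral (for $p\ge 2$; smaller $p$ reduce to the case $p=2$), a Gronwall argument to close each estimate, and the uniform boundedness of $\alpha,\beta,\lambda,\sigma$ and of their derivatives up to order three, which is exactly what Assumption $(\mathbb R_4)$ provides (the extra degree of smoothness guaranteeing, as already noted, that $\eta\mapsto(Y^\eta_t,Z^\eta_t)$ is $C^3$). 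Since the bounds are claimed for every $p\ge1$, I will freely use Hölder's inequality to split $L^p$-norms of products of lower-order processes into higher $L^{p'}$-norms, themselves controlled by the induction hypothesis.

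\emph{Base case ($i=0$).} Writing \eqref{pproxy1} in integral form, $Z^\eta_t-z_0=\int_0^t\beta(\cdots)\ds+\int_0^t\sigma(\cdots)\dW^X_s$, and similarly for $Y^\eta_t-y_0$ with drift $\alpha$ and diffusion $\lambda$. Since $|\beta|\le_c M_0^2$, $|\sigma|\le M_0$, $|\alpha|\le_c M_0^2$ (using $|\rho|\le1$) and $|\lambda|\le M_0$ uniformly in $(t,\eta)$, the drift integrals are $\le_c M_0^2T$, while BDG applied to the stochastic integrals yields $\le_c M_0\sqrt T$. As $M_0^2T\le_c M_0\sqrt T$ (the surplus factor $M_0\sqrt T$ being absorbed into the generic constant, which stays bounded as the parameters vanish), both quantities are $\le_c M_0\sqrt T$, uniformly in $t\le T$ and $\eta\in[0,1]$.

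\emph{Inductive step.} Fix $i\in\{1,2,3\}$ and suppose the bounds hold at all orders $<i$. The SDEs for $(Y^\eta_{i,t},Z^\eta_{i,t})$ displayed above (in particular \eqref{SDESecondDerivative} and \eqref{SDEThirdDerivative}) are \emph{linear} in the top-order pair, with bounded coefficients (the homogeneous terms carry $\eta\beta_z,\eta\sigma_z,\eta\alpha_y,\eta\lambda_y,\alpha_z$, all $\le_c M_0M_1$ or $\le M_1$), and an inhomogeneous forcing that is a finite sum of products of the lower-order processes $Z^\eta_{j,t},Y^\eta_{j,t}$ ($j<i$) and of the differences $Z^\eta_t-z_0,\,Y^\eta_t-y_0$, each multiplied by a bounded coefficient derivative. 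Bounding the forcing in $L^p$ via Hölder and the induction hypothesis, every such product is $\le_c M_1M_0^iT^{(i+1)/2}$ after accounting for the extra factor $M_1$ coming from a coefficient derivative and the $\sqrt T$ or $T$ gained on integration; here I use repeatedly that $M_1\le M_0$ to consolidate mismatched powers such as $M_1^2M_0$ into $M_1M_0^2$. Writing $\phi_i(t)=\sup_{\eta\in[0,1]}\big(\|Y^\eta_{i,t}\|_p+\|Z^\eta_{i,t}\|_p\big)$, the combination of BDG and Minkowski yields $\phi_i(t)\le_c M_1M_0^iT^{(i+1)/2}+cM_0M_1\int_0^t\phi_i(s)\ds+c\big(\int_0^tM_1^2\phi_i(s)^2\ds\big)^{1/2}$, and Gronwall's lemma then gives $\phi_i(t)\le_c M_1M_0^iT^{(i+1)/2}$, the factor $\exp\big(c(M_0M_1+M_1^2)T\big)$ being harmless as it is bounded and tends to $1$ as the parameters go to $0$.

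\emph{Main obstacle.} The only genuine difficulty is the bookkeeping of the powers of $M_0$, $M_1$ and $T$ across the many forcing terms, which proliferate at order $i=3$: one must verify that each monomial in the lower-order processes, once weighted by the appropriate coefficient derivative and integrated, does not exceed $M_1M_0^iT^{(i+1)/2}$, and that the generic constant stays bounded as the model parameters tend to zero. This is precisely the computation carried out in \citet[Theorem 5.1]{BenGobetMiri10a}, which is why the statement may be quoted and its proof omitted.
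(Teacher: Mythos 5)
Your proposal is correct and takes essentially the approach the paper relies on: the paper omits the proof altogether, deferring to \citet[Theorem 5.1]{BenGobetMiri10a}, whose argument is precisely this induction on the order of the $\eta$-derivative combined with BDG, H\"older/Minkowski and Gronwall, using that each derivative SDE is linear in the top-order pair with lower-order forcing. Your write-up additionally does the (routine but necessary) adaptation of that bookkeeping to the two-dimensional correlated $(Y^\eta,Z^\eta)$ system with the quanto drift coupling, including the valid observations that $M_1\le M_0$ by construction and that surplus factors $M_0\sqrt{T}$ may be absorbed into the generic constant of $\leq_c$.
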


The following lemmata are used repeatedly in order to derive the analytical formulas in Theorems \ref{ApproxFormulaProbaApproa2order} and \ref{ApproxFormulaProbaApproa3order}. 
An application of It\^o's lemma to $(\int_t^Tf_s\ds)Z_t$ yields the following result. 

\begin{lemma}\label{IBP1}
Let $f$ be a continuous (or piecewise continuous) function and $Z$ be a continuous semimartingale with $Z_0=0$. 
Then
\begin{equation}
\int_{0}^{T} f_{t} Z_{t} \dt
  = \int_{0}^{T} \Big( \int_{t}^{T} f_{s} \ds \Big) \ud Z_{t}
  =\int_0^T \omega(f)_t^T \ud Z_t.
\end{equation}
\end{lemma}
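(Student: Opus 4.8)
The plan is to apply the integration-by-parts formula for semimartingales (equivalently, It\^o's product rule) to the product $\omega(f)_t^T Z_t$, exploiting the boundary behaviour of $\omega(f)_\cdot^T$ at the two endpoints. The key observation is that the antiderivative $\omega(f)_t^T = \int_t^T f_s \ds$ is a smooth, finite-variation process whose time-derivative is exactly $-f_t$, so differentiating the product reproduces the integrand $f_t Z_t \dt$ with a sign.

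First I would set $F_t := \omega(f)_t^T = \int_t^T f_s \ds$ and record its basic properties: since $f$ is (piecewise) continuous, $F$ is continuous and absolutely continuous with $\ud F_t = -f_t \dt$, it is therefore of finite variation, and it satisfies the boundary condition $F_T = 0$. Next I would invoke the product rule $\ud(F_t Z_t) = F_t \, \ud Z_t + Z_t \, \ud F_t + \ud\langle F, Z\rangle_t$. Because $F$ is continuous and of finite variation, its quadratic covariation with the semimartingale $Z$ vanishes, so the bracket term drops out; substituting $\ud F_t = -f_t \dt$ then gives $\ud(F_t Z_t) = F_t \, \ud Z_t - f_t Z_t \dt$. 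Integrating this identity over $[0,T]$, the left-hand side equals $F_T Z_T - F_0 Z_0$, which vanishes entirely thanks to $F_T = 0$ and $Z_0 = 0$. What remains is $0 = \int_0^T F_t \, \ud Z_t - \int_0^T f_t Z_t \dt$, and rearranging yields the claimed equality $\int_0^T f_t Z_t \dt = \int_0^T \omega(f)_t^T \, \ud Z_t$.

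The only point requiring a little care is the regularity of $F$ when $f$ is merely piecewise continuous, since one must justify both $\ud F_t = -f_t\dt$ and the vanishing of $\langle F, Z\rangle$. Neither is a genuine obstacle: $F$ stays continuous and piecewise $C^1$, hence absolutely continuous of finite variation, so the fundamental-theorem-of-calculus step holds $\dt$-almost everywhere and the covariation of a continuous finite-variation process with any semimartingale is zero. Thus the argument is essentially a deterministic integration by parts dressed in stochastic-integral notation, and I expect no real difficulty beyond carefully invoking the product rule and the boundary conditions.
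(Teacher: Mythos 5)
Your proposal is correct and follows exactly the paper's approach: the paper's proof consists precisely of applying It\^o's product rule to $\big(\int_t^T f_s \ds\big) Z_t$, and you have simply filled in the details (finite variation of the antiderivative, vanishing bracket, and the boundary conditions $F_T=0$, $Z_0=0$) that the paper leaves implicit.
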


The lemma below follows directly from the duality relationship in the Malliavin calculus (see \textit{e.g.} \citet[Lemma 1.2.1,~p.25]{Nualart05}) and by identifying It\^o's integral and the Skorohod operator for adapted integrands.

\begin{lemma}\label{IBP2} 
Let $u$ be a square integrable, progressively measurable process and assume $h$ satisfies $(\mathbb{S}_1)$. 
Then, for any $i \geq 0$, it holds:
\begin{multline}
\E\left[\left( \int_{0}^{T} u_{t}\dW^{\alpha}_{t} \right) 
  h^{(i)}\left(\int_{0}^{T} \lambda(t,y_0) \dW^L_{t} \right)\right] \\
 = \E\left[\left( \int_{0}^{T} u_{t} \lambda(t,y_0) \ud\langle W^{\alpha}, W^L \rangle_t \right) 
 	h^{(i+1)}\left(\int_{0}^{T} \lambda(t,y_0) \dW^L_{t} \right)\right]
\end{multline}
with $\alpha\in\{L,X\}$ and $h^{(i)}(x) = \frac{\ud^i}{\dx^i}h(x), \ i \in \N$. 
Moreover, if $u$ and $\langle W^{\alpha},W^L \rangle$ are deterministic, then
\begin{equation}
\E\bigg[\bigg( \int_{0}^{T} u_{t} \dW^{\alpha}_{t}\bigg) 
  h^{(i)}\bigg(\int_{0}^{T} \lambda(t,y_0) \dW^L_{t} \bigg)\bigg] 
 = \int_{0}^{T} u_{t} \lambda(t,y_0) \ud \langle W^{\alpha}, W^L \rangle_t g_{i+1}^{\tilde{h}}(Y_T^0),
\end{equation}
where $\tilde{h}(x) = h(x-m_T^0)$.
\end{lemma}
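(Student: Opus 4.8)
The plan is to derive both identities from the duality (integration-by-parts) formula of Malliavin calculus. Since $u$ is progressively measurable and square integrable, the It\^o integral $\int_0^T u_t \dW^\alpha_t$ coincides with the Skorohod integral $\delta^\alpha(u)$ taken with respect to $W^\alpha$; this is precisely the point where we identify the It\^o integral with the Skorohod operator for adapted integrands. Writing $G = \int_0^T \lambda(t,y_0)\dW^L_t$ and observing that $h\in C_0^\infty$ under $(\mathbb S_1)$ guarantees that $F := h^{(i)}(G)$ lies in the domain of the Malliavin derivative (in fact in every space $\D^{k,p}$), the duality relationship of \citet[Lemma 1.2.1]{Nualart05} gives
\begin{equation}
\E\big[\delta^\alpha(u)\,h^{(i)}(G)\big] = \E\Big[\int_0^T u_s\, D_s^\alpha F \ds\Big],
\end{equation}
where $D^\alpha$ denotes the Malliavin derivative in the direction of $W^\alpha$.

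Next I would compute $D_s^\alpha F$ by the chain rule, $D_s^\alpha F = h^{(i+1)}(G)\, D_s^\alpha G$, so the task reduces to differentiating the Wiener integral $G$. The key computation, and the step I expect to require the most care, is
\begin{equation}
D_s^\alpha G = \lambda(s,y_0)\,\frac{\ud \langle W^\alpha, W^L\rangle_s}{\ds},
\end{equation}
which is where the correlation between the two driving Brownian motions enters. To make this rigorous and to avoid ambiguities arising from the correlated noises, I would represent $W^L$ and $W^X$ through a pair of independent Brownian motions (e.g.\ $W^L = B^1$ and $W^X = \rho B^1 + \sqrt{1-\rho^2}\,B^2$) and carry out the Malliavin calculus with respect to $(B^1,B^2)$. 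For $\alpha=L$ only the $B^1$-derivative contributes and yields $\lambda(s,y_0)$, matching $\ud\langle W^L,W^L\rangle_s = \ds$; for $\alpha=X$ the $B^2$-component of $G$ vanishes, so only the $\rho B^1$-part survives and produces the factor $\rho$, matching $\ud\langle W^X,W^L\rangle_s = \rho\ds$. Both cases are thereby captured uniformly by the covariation bracket.

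Substituting back, the deterministic integrand $\lambda(s,y_0)\,\ud\langle W^\alpha,W^L\rangle_s$ pulls the scalar factor out of the Malliavin derivative, and taking expectations gives
\begin{equation}
\E\big[\delta^\alpha(u)\,h^{(i)}(G)\big] = \E\Big[\Big(\int_0^T u_s\,\lambda(s,y_0)\,\ud\langle W^\alpha,W^L\rangle_s\Big)\,h^{(i+1)}(G)\Big],
\end{equation}
which is the first assertion. For the second assertion I would use that, when $u$ and $\langle W^\alpha,W^L\rangle$ are deterministic, the bracket integral is a constant and factors out of the expectation; it then remains only to identify $\E[h^{(i+1)}(G)]$ with the Greek $g_{i+1}^{\tilde h}(Y_T^0)$. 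Since $G = Y_T^0 - m_T^0$ and $\tilde h(x) = h(x-m_T^0)$, one has $\E[\tilde h(Y_T^0+\epsilon)] = \E[h(G+\epsilon)]$; differentiating $i+1$ times under the expectation, which is justified by the boundedness of the derivatives of $h\in C_0^\infty$, and evaluating at $\epsilon=0$ yields $g_{i+1}^{\tilde h}(Y_T^0) = \E[h^{(i+1)}(G)]$, completing the argument.
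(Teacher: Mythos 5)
Your proposal is correct and follows exactly the route the paper itself indicates: the paper gives no detailed proof of this lemma, stating only that it ``follows directly from the duality relationship in the Malliavin calculus \citep[Lemma 1.2.1]{Nualart05} and by identifying It\^o's integral and the Skorohod operator for adapted integrands,'' which is precisely your argument. Your additional details --- the chain rule computation of $D_s^\alpha h^{(i)}(G)$, the reduction to independent Brownian motions to handle the correlation (matching the decomposition the paper uses in its Malliavin appendix), and the identification $\E[h^{(i+1)}(G)] = g_{i+1}^{\tilde h}(Y_T^0)$ via $G = Y_T^0 - m_T^0$ --- are all sound fillings of the gaps the paper leaves implicit.
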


\subsection{Price expansions and error estimates}
\label{sec:pe-ee}

We are now ready to provide the details in the derivation of the expansion formulas and the corresponding error estimates.
We start with the analysis of the second order approximation, and divide the proof of Theorem \ref{ApproxFormulaProbaApproa2order} in several steps. 
First, we assume that the payoff $h$ is smooth and establish error estimates that depend only on $h^{(1)}$, the first derivative of $h$. 
To this end, we use Malliavin calculus and provide tight estimates on the Malliavin derivatives of the parametrized process. 
Then, we can approximate $h$ under $(\mathbb S_2)$ by a sequence of smooth payoffs using a density argument.
This last step is standard by now, hence we omit it for the sake of brevity.

\subsubsection{Second order error analysis}

As outlined in the previous section, we perform first a Taylor expansion of $Y_T$ around $Y^0_T$, that yields
\begin{equation}\label{SdOrderExpYT}
Y_T = Y_T^0 + Y_{1,T} + \int_0^1 Y^\eta_{2,T}(1-\eta) \ud\eta,  
\end{equation}
then another Taylor expansion for the smooth payoff $h$, and then take expectations. 
Thus we obtain
\begin{align}\label{SdOrderExpPayoff}
\E[h(Y_T)] &= \E[h(Y^0_T)] +  \E[h^{(1)}(Y^0_T)(Y_T - Y^0_T)] \nonumber \\
  &\quad+ \E\Big[ (Y_T - Y^0_T)^2 \int_0^1 h^{(2)}(\eta Y_T + (1-\eta)Y^0_T) 
      (1-\eta)\ud\eta \Big].
\end{align}
Using \eqref{SdOrderExpYT}, \eqref{SdOrderExpPayoff} can be written as
\begin{equation}\label{SdOrderExpPayoffanderror}
\E[h(Y_T)] = \E[h(Y^0_T)] +  \E[h^{(1)}(Y^0_T)Y_{1,T}] + \mbox{Error}_2.
\end{equation}
where 
\begin{equation}\label{error2}
\mbox{Error}_2 
  = \E\big[ h^{(1)}(Y^0_T)R_T^{1,Y} \big] 
    + \E\Big[ (R_T^{0,Y})^2 \int_0^1 h^{(2)}(\eta Y_T + (1-\eta)Y^0_T) 
      (1-\eta) \ud\eta \Big]
\end{equation}
with 
\begin{equation}\label{res1}
R_T^{0,Y} = \int_0^1 Y^\eta_{1,T}\ud\eta
\quad \text{ and } \quad
R_T^{1,Y} = \int_0^1 Y^\eta_{2,T}(1-\eta)\ud\eta.
\end{equation}
Using \eqref{ZiYi} with $i=2$ in Lemma \ref{LpEstimates} and the Cauchy--Schwarz inequality, the first term in \eqref{error2} is estimated as
\begin{equation}
\big| \E\big[ h^{(1)}(Y^0_T)R_T^{1,Y} \big] \big| 
  \leq_c \| h^{(1)}(Y^0_T) \|_2 M_1M_0^2 T^{\frac{3}{2}}.
\end{equation}

The second term in \eqref{error2} requires some additional work because of $h^{(2)}$. 
We use the integration-by-parts formula in the Malliavin calculus to write it using $h^{(1)}$ only. 
For this, we rely on Lemma \ref{ErrorAnalysisIBP} and refer to Appendix \ref{sec:appendix} for notation related to the Malliavin calculus. 
Let us apply this result to $V = (R_T^{0,Y})^2$, such that we can write 
\begin{multline}
\E\Big[ \big(R_T^{0,Y}\big)^2 \int_0^1 
  h^{(2)}\big(\eta Y_T + (1-\eta)Y^0_T\big)(1-\eta)\ud\eta \Big]  \\
= \int_0^1 \E\Big[ h^{(2)}\big(\eta Y_T + (1-\eta)Y^0_T\big) 
    \big(R_T^{0,Y}\big)^2 \Big] (1-\eta) \ud\eta  \\
= \int_0^1 \E\Big[ h^{(1)}\big(\eta Y_T + (1-\eta)Y^0_T\big) V^\eta_{1} \Big] (1-\eta) \ud\eta.
\end{multline}
Using now the $L^p$ estimates in Lemmata \ref{LpEstimates} and \ref{EstimatesMalliavinDerivatives}, we can show easily that
\begin{equation}
\| (R_T^{0,Y})^2  \|_{1, 2p} \leq_c (M_1M_0T)^2
\end{equation}
and get
\begin{equation}
\| V^\eta_{1}  \|_p \leq_c 
\frac{(M_1\sqrt{T})^2(M_0\sqrt{T})^2}{(1-\rho^2) \lambda_{\mathrm{inf}} 
\sqrt{T}}.
\end{equation}
Therefore, we can deduce that
\begin{multline}
| \E\Big[ \big(R_T^{0,Y}\big)^2 \int_0^1 
  h^{(2)}\big(\eta Y_T + (1-\eta)Y^0_T\big) (1-\eta)\ud\eta \Big] | \\
\leq_c \int_0^1 \| h^{(1)}\big(\eta Y_T + (1-\eta)Y_T^0)\big \|_2 \ud \eta 
  \frac{M_0}{\lambda_{\mathrm{inf}} (1-\rho^2)}  M_1M_0^2 T^{\frac{3}{2}}.
\end{multline}
Because $\lambda_{\mathrm{inf}} \leq_c M_0$ and $\frac{1}{(1-\rho^2)} \geq 1$, we finally obtain
\begin{equation}
| \mbox{Error}_2 | 
  \leq_c \bigg[ \| h^{(1)}(Y^0_T) \|_2 + \int_0^1 \| 
    h^{(1)}\big(\eta Y_T + (1-\eta)Y_T^0\big) \|_2 \ud\eta \bigg] 
  \frac{M_0^3M_1}{\lambda_{\mathrm{inf}} (1-\rho^2)} T^{\frac{3}{2}}. 
\end{equation}

Thus far, we have bounded the error using only $h^{(1)}$ for a smooth function $h$. 
In order to obtain a similar error bound under the assumption that $h$ satisfies ($\mathbb S_2$), we can use a density or regularization argument to approximate $h$ by a sequence of smooth functions as in \citet[Section 5.2, Step 4]{MiriGobetBen02}.

\subsubsection{Third order error analysis} 

We follow again the same strategy as for the second order case. 
By a Taylor expansion of $Y_T$ around $Y^0_T$, we have
\begin{equation}\label{ThirdOrderExpYT}
Y_T = Y_T^0 + Y_{1,T} + \frac{1}{2}Y_{2,T} 
  + \int_0^1 Y^\eta_{3,T}\frac{(1-\eta)^2}{2} \ud\eta,
\end{equation}
and by performing again a Taylor expansion for a smooth payoff $h$ and taking expectations we obtain	
\begin{align}\label{ThirdOrderExpPayoff}
\nonumber 
\E[h(Y_T)] &= \E[h(Y^0_T)] +  \E\big[h^{(1)}(Y^0_T)(Y_T - Y^0_T)\big] 
  + \E\left[\frac{1}{2} h^{(2)}(Y^0_T) (Y_T - Y^0_T)^2 \right] \\
&\quad + \E\left[(Y_T - Y^0_T)^3 \int_0^1 h^{(3)}(\eta Y_T + 
  (1-\eta)Y^0_T) \frac{(1-\eta)^2}{2} \ud\eta \right].
\end{align}
Using \eqref{ThirdOrderExpYT}, the latter becomes
\begin{align}\label{ThirdOrderExpPayoffanderror}
\E \left[h(Y_T) \right] 
 &= \E\left[h(Y^0_T)\right] +  \E\left[h^{(1)}(Y^0_T)Y_{1,T}\right] 
  + \E\left[h^{(1)}(Y^0_T)\frac{Y_{2,T}}{2}\right] \nonumber\\
 &\quad + \E\left[\frac{h^{(2)}(Y^0_T)}{2}Y^2_{1,T}\right] + \mbox{Error}_3,
\end{align}
where
\begin{align}
\mbox{Error}_3 
  &= \E\left[ h^{(1)}(Y^0_T) \int_0^1 Y^\eta_{3,T}\frac{(1-\eta)^2}{2} 
    \ud\eta \right] \label{FirstTermError3}\\
  &\quad + \E\left[ (Y_T - Y^0_T)^3 \int_0^1 h^{(3)}(\eta Y_T + (1-\eta)Y^0_T) 
      \frac{(1-\eta)^2}{2} \ud\eta \right]  \label{SecondTermError3}\\
  &\quad + \E\left[ \frac{h^{(2)}(Y^0_T)}{2} \left[ (Y_T - Y_T^0)^2  -Y_{1,T}^2 
      \right]\right]. \label{ThirdTermError3}
\end{align}

\noindent Let us bound each term in the error separately.
The first term \eqref{FirstTermError3}, using \eqref{ZiYi} with $i=2$ in Lemma \ref{LpEstimates} and the Cauchy--Schwarz inequality, is estimated by
\begin{equation}\label{FirstTermError3Estimate}
\bigg| \E\left[ h^{(1)}(Y^0_T) \int_0^1 Y^\eta_{3,T} \frac{(1-\eta)^2}{2} 
  \ud\eta\right] \bigg| 
 \leq_c \| h^{(1)}(Y_T^0) \|_2 M_1M_0^3 T^2.
\end{equation}

The second term \eqref{SecondTermError3} is handled as in the previous section. 
	We recall that $Y_T-Y^0_T=R_T^{0,Y}=\int_0^1 Y^\eta_{1,T}\ud\eta$ and apply Lemma \ref{ErrorAnalysisIBP} with $k=2$ to $V = (R_T^{0,Y})^3$ such that we can write 
\begin{multline}
\E\left[ (R_T^{0,Y})^3 \int_0^1 h^{(3)} \big( \eta Y_T + (1-\eta)Y^0_T \big) 
	\frac{(1-\eta)^2}{2} \ud\eta \right] \\
 = \int_0^1 \E\left[ h^{(3)} \big(\eta Y_T + (1-\eta)Y^0_T\big)			
	(R_T^{0,Y})^3 \right] \frac{(1-\eta)^2}{2} \ud\eta  \\
 = \int_0^1 \E\left[ h^{(1)}\big(\eta Y_T + (1-\eta)Y^0_T\big) V_2^\eta \right] \frac{(1-\eta)^2}{2}\ud\eta.
\end{multline}    
Using the $L^p$ estimates in Lemmata \ref{LpEstimates} and \ref{EstimatesMalliavinDerivatives}, we show easily that
\begin{equation}
\| (R_T^{0,Y})^2  \|_{2, 2p} \leq_c (M_1M_0T)^3,
\end{equation}
hence
\begin{equation}
\| V^\eta_2 \|_p \leq_c 
\bigg( \frac{M_0}{\lambda_{\mathrm{inf}}(1-\rho^2)}\bigg)^2 M_1 M_0^3 T^2.
\end{equation}
Therefore, we can deduce that
\begin{multline}\label{SecondTermError3Estimate}
\qquad \bigg| \E\bigg[ (R_T^{0,Y})^3 \int_0^1 h^{(3)}\big(\eta Y_T + 
  (1-\eta)Y^0_T\big) \frac{(1-\eta)^2}{2} \ud\eta\bigg] \bigg| \\
 \leq_c \int_0^1 \| h^{(1)}\big(\eta Y_T + (1-\eta)Y_T^0\big) \|_2 \ud\eta 
   \frac{M_0^5M_1}{(\lambda_{\mathrm{inf}}(1-\rho^2))^2} T^2.
\end{multline}

As for the third term \eqref{ThirdTermError3}, let us first provide a more explicit representation of $(Y_T - Y_T^0)^2 - Y_{1,T}^2$. 
We define
\begin{equation}\label{MoreGeneralErrorStudy}
f(\eta) = \big( Y^\eta_T- Y^0_T\big)^2
\end{equation}
and perform a second order Taylor expansion around $0$ to get
\begin{equation}\label{TaylorErrorAnalysis}
f(\eta) =  f(0) + f^{(1)}(0)\eta + f^{(2)}(0)\frac{\eta^2}{2} + 
\int_0^\eta \frac{(\eta-t)^2}{2}f^{(3)}(t) \dt  
\end{equation}
where
\begin{eqnarray}
\left\{
\begin{array}{rclrcl}
f(0) &=& f^{(1)}(0) = 0 &\quad
  f^{(1)}(\eta) &=& 2 Y_{1,T}^\eta(Y_T^\eta-Y_T^0)\\
f^{(2)}(0) &=& 2 Y_{1,T}^2 &
  f^{(2)}(\eta) &=& 2 \big[ Y_{2,T}^\eta(Y_T^\eta-Y_T^0) + (Y_{1,T}^\eta)^2 
  \big]\\
f^{(3)}(0) &=& 6 Y_{1,T} Y_{2,T} &
  f^{(3)}(\eta) &=& 2 \big[ Y_{3,T}^\eta(Y_T^\eta-Y_T^0) + 3Y_{1,T}^\eta 
  Y_{2,T}^\eta \big] .
\end{array}
\right.
\end{eqnarray}
Setting $\eta=1$ in \eqref{TaylorErrorAnalysis}, we obtain
\begin{equation}\label{Power2ErrorRepresentation}
(Y_T - Y_T^0)^2  = Y_{1,T}^2 + \int_0^1 (1-\eta) \big[ 
  Y_{3,T}^\eta(Y_T^\eta-Y_T^0) + 3Y_{1,T}^\eta Y_{2,T}^\eta \big] \ud\eta.
\end{equation}
Replacing \eqref{Power2ErrorRepresentation} into \eqref{ThirdTermError3} and using Fubini's theorem, we get
\begin{multline}
\qquad \E \left[ \frac{h^{(2)}(Y^0_T)}{2} \int_0^1 (1-\eta) \big[ 
  Y_{3,T}^\eta(Y_T^\eta-Y_T^0) + 3Y_{1,T}^\eta Y_{2,T}^\eta \big] \ud\eta 
  \right] \\
= \int_0^1 \frac{(1-\eta)}{2} \E \bigg[ h^{(2)}(Y^0_T) 
  (Y_{3,T}^\eta(Y_T^\eta-Y_T^0) + 3Y_{1,T}^\eta Y_{2,T}^\eta ) \ud\eta \bigg] \\
= \int_0^1 \frac{(1-\eta)}{2} \E \bigg[ h^{(1)}(Y^0_T) V^\eta_1 d\eta \bigg],
\end{multline}
where for the last equality we have applied the integration-by-parts formula of Lemma \ref{ErrorAnalysisIBP} with $V = Y_{3,T}^\eta(Y_T^\eta-Y_T^0) + 
3Y_{1,T}^\eta Y_{2,T}^\eta$ for $k=1$. 
Applying now the Cauchy--Schwartz inequality, we get the following error estimate
\begin{equation}\label{Error3rdtermestimates}
\bigg| \E\left[ \frac{h^{(2)}(Y^0_T)}{2} \left[ (Y_T - Y_T^0)^2  -Y_{1,T}^2 
  \right] \right] \bigg| 
\leq_c \|h^{(1)}(Y_T^0) \|_2 \int_0^1 \| V^\eta_1 \|_2 \ud\eta,  
\end{equation}
while the $L^p$ estimates in Lemmata \ref{LpEstimates} and \ref{EstimatesMalliavinDerivatives} yield, for $p\geq 1$, that
\begin{equation}
\| V \|_{1,2p} \leq_c M_1^2 M_0^3 T^{\frac{5}{2}}
\end{equation}
and
\begin{equation}
\| V^\eta_1 \|_p \leq_c \bigg( \frac{M_0}{\lambda_{\mathrm{inf}} (1-\rho^2)} \bigg) M_1M_0^3 T^2.
\end{equation}
Therefore, the third error term \eqref{Error3rdtermestimates} is estimated by
\begin{multline}\label{Error3rdtermestimatesFinal}
\bigg| \E\left[ \frac{h^{(2)}(Y^0_T)}{2} \left[ (Y_T - Y_T^0)^2 - Y_{1,T}^2 
  \right]\right] \bigg| 
\leq_c \|h^{(1)}(Y_T^0) \|_2 \bigg( \frac{M_0}{\lambda_{\mathrm{inf}} 
  (1-\rho^2)} \bigg) M_1M_0^3 T^2.  
\end{multline}

Finally, using again that $\lambda_{\mathrm{inf}} \leq_c M_0$ and $1 \le \frac{1}{(1-\rho^2)}$, and by regrouping all the estimates in
\eqref{FirstTermError3Estimate}, \eqref{SecondTermError3Estimate} and \eqref{Error3rdtermestimatesFinal}, the third order error can be estimated as follows:
\begin{multline}	
| \mbox{Error}_3 | 
\leq_c \bigg[ \| h^{(1)}(Y^0_T) \|_2 + \int_0^1 \| h^{(1)}(\eta Y_T + (1-\eta)Y_T^0) \|_2 \ud\eta  \bigg] 
\\ \times \bigg( 
  \frac{M_0}{\lambda_{\mathrm{inf}}(1-\rho^2)}\bigg)^2 M_1 M_0^3 T^2. 
\end{multline}

\subsection{Computation of the Greek coefficients}
\label{subsection:CompGreeks}

This subsection is devoted to the computation of the correction terms in the second order expansion of Theorem \ref{ApproxFormulaProbaApproa2order}.
The analogous derivation for the third order expansion is postponed to Appendix \ref{DerivThirdOrderFormula}.
The correction terms are expressed in terms of Greeks of the payoff function around the proxy model, recall Definition \ref{def:greeks}, and we provide below a useful lemma for their computation.

\begin{lemma}\label{lem3ndOrderExpansionFormula}
Let $\theta$ be a continuous (or piecewise continuous) function and $f$ be a function satisfying Assumption $(\mathbb{S}_1)$.
Then it holds
\begin{align}
\E\left[ \bar{f}\left(\int_0^T \lambda_t\dW_t^L\right) \int_0^T \xi_t\theta_t\dt \right]
	&= \omega(\alpha, \theta)_0^T g_0^f(Y_T^0) +   \omega(\lambda^2, \theta)_0^T g_1^f(Y_T^0), \label{equality1}
	\\ 
\E\left[ \bar{f}\left(\int_0^T \lambda_t\dW_t^L\right) \int_0^T \gamma_t\theta_t\dt \right]
	&= \omega(\beta, \theta)_0^T g_0^f(Y_T^0) +  \rho\omega(\sigma\lambda, \theta)_0^T g_1^f(Y_T^0), \label{equality2}
	\\ 
\E\left[ \bar{f}\left(\int_0^T \lambda_t\dW_t^L\right) \int_0^T Y_{1,t}\theta_t\dt \right]
	&= \left[\omega(\alpha, \alpha_y,\theta)_0^T+\omega(\beta, \alpha_z,\theta)_0^T \right] g_0^f(Y_T^0) \nonumber \\
	&\quad + \left[ \omega(\lambda^2, \alpha_y, \theta)_0^T + \omega(\alpha, \lambda_y\lambda, \theta)_0^T \right] g_1^f(Y_T^0) \label{equality3} \\\nonumber 
	&\!\!\!\!\!\!\!\!\!
	+ \omega(\lambda^2, \lambda_y\lambda, \theta)_0^T g_2^f(Y_T^0)+\rho \omega(\sigma \lambda, \alpha_z,\theta)_0^T g_1^f(Y_T^0), \\ 
\E\left[ \bar{f}\left(\int_0^T \lambda_t\dW_t^L\right) \int_0^T Z_{1,t}\theta_t\dt \right]
	&= \omega(\beta, \beta_z,\theta)_0^T g_0^f(Y_T^0) \nonumber \\
	&\quad + \rho \left[ \omega(\sigma\lambda, \beta_z, \theta)_0^T + \omega(\beta, \lambda \sigma_z, \theta)_0^T \right] g_1^f(Y_T^0) 
	\label{equality4}\\ \nonumber
	&\quad + \rho^2 \omega(\sigma \lambda, \lambda \sigma_z, \theta)_0^T g_2^f(Y_T^0) ,
	\\ 
\E\left[ \bar{f}\left(\int_0^T \lambda_t\dW_t^L\right) \int_0^T \xi_t^2\theta_t\dt \right]
	&= \left[ \omega(\lambda^2, \theta)_0^T + 2\omega(\alpha, \alpha, \theta)_0^T \right] g_0^f(Y_T^0) \nonumber \\
	&\quad + 2 \left[ \omega(\lambda^2, \alpha,\theta)_0^T + \omega(\alpha, \lambda^2,\theta)_0^T\right] g_1^f(Y_T^0) \label{equality5} \\ \nonumber
	&\quad + 2 \omega(\lambda^2, \lambda^2,\theta)_0^T g_2^f(Y_T^0), 
	\\ 
\E\left[ \bar{f}\left(\int_0^T \lambda_t\dW_t^L\right) \int_0^T \gamma_t^2\theta_t\dt \right]
	&= \left[ \omega(\sigma^2, \theta)_0^T + 2\omega(\beta, \beta, \theta)_0^T \right] g_0^f(Y_T^0) \nonumber\\
	&\quad + 2 \rho \left[ \omega(\sigma \lambda, \beta,\theta)_0^T + \omega(\beta, \sigma\lambda,\theta)_0^T\right] g_1^f(Y_T^0) \label{equality6} \\ \nonumber
	&\quad + 2 \rho^2\omega(\sigma \lambda, \sigma \lambda,\theta)_0^T g_2^f(Y_T^0), 
	\\ 
\E\left[ \bar{f}\left(\int_0^T \lambda_tdW_t^L\right) \int_0^T \xi_t\gamma_t\theta_tdt \right]
	&= \left[ \omega(\alpha, \beta ,\theta)_0^T + \omega(\beta, \alpha, \theta)_0^T \right] g_0^f(Y_T^0)\nonumber \\
	&\quad +  \left[ \omega(\lambda^2, \beta,\theta)_0^T + \omega(\beta, \lambda^2,\theta)_0^T\right] g_1^f(Y_T^0) \label{equality7} \\\nonumber
	&\!\!\!\!\!\!\!\!\!\!\!\!\!\!\!\!\!\!\!\!\!\!\!\!\!\!\!
	 +  \rho \Big[ \omega(\lambda \sigma, \theta)_0^T g_0^f(Y_T^0) + \left( \omega(\sigma \lambda, \alpha,\theta)_0^T + \omega( \alpha,\sigma \lambda,\theta)_0^T  \right) g_1^f(Y_T^0)  \nonumber\\ 
	&\quad + \left( \omega( \lambda^2, \sigma \lambda,\theta)_0^T + 
	\omega( \sigma \lambda, \lambda^2,\theta)_0^T  \right) g_2^f(Y_T^0) \Big], \nonumber
\end{align}
where $\bar{f}(x) = f\big( y_0 + \int_0^T \alpha_t\dt + x \big)$, while the processes $\xi$ and $\gamma$ are defined in \eqref{eq:xi} and \eqref{eq:gamma} respectively.
\end{lemma}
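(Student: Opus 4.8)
The plan is to establish all seven identities from a single template, proving them in the order listed so that later ones may invoke earlier ones. Throughout, recall that $\xi$ and $\gamma$ denote the centered proxy processes $\xi_t = Y_t^0 - y_0$ and $\gamma_t = Z_t^0 - z_0$, so that $\ud\xi_t = \alpha_t\dt + \lambda_t\dW_t^L$ and $\ud\gamma_t = \beta_t\dt + \sigma_t\dW_t^X$ with $\xi_0 = \gamma_0 = 0$. Two elementary observations make Lemma \ref{IBP2} directly usable. First, since $Y_T^0 = m_T^0 + \int_0^T\lambda_t\dW_t^L$ and $\bar f(x) = f(m_T^0 + x)$, we have $\bar f\big(\int_0^T\lambda_t\dW_t^L\big) = f(Y_T^0)$, and taking $h=\bar f$ in Lemma \ref{IBP2} yields $\tilde h = f$. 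Second, differentiating under the expectation in Definition \ref{def:greeks} gives $g_i^{f^{(1)}}(Y_T^0) = g_{i+1}^f(Y_T^0)$. Together these let us read the right-hand side of Lemma \ref{IBP2} straight off in terms of the Greeks $g_j^f(Y_T^0)$.

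The template runs as follows. Given a semimartingale $M\in\{\xi,\gamma,Y_{1,\cdot},Z_{1,\cdot},\xi^2,\gamma^2,\xi\gamma\}$ vanishing at $0$, first apply Lemma \ref{IBP1} to rewrite $\int_0^T M_t\theta_t\dt = \int_0^T \omega(\theta)_t^T\,\ud M_t$ and insert the semimartingale decomposition of $M$; for the three quadratic processes this decomposition is obtained beforehand from It\^o's formula, producing the extra quadratic-variation drifts $\lambda_t^2\dt$, $\sigma_t^2\dt$ and $\rho\lambda_t\sigma_t\dt$ in $\ud(\xi_t^2)$, $\ud(\gamma_t^2)$ and $\ud(\xi_t\gamma_t)$ respectively. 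After multiplying by $f(Y_T^0)$ and taking expectations, this splits each quantity into three kinds of terms: purely deterministic time integrals, which factor out and multiply $g_0^f(Y_T^0)$; stochastic integrals with a \emph{deterministic} integrand, to which the second part of Lemma \ref{IBP2} applies directly, yielding a $g_1^f(Y_T^0)$-term together with a factor $1$ or $\rho$ according as the driving motion is $W^L$ or $W^X$; and, whenever $M$ still contains $\xi$ or $\gamma$, further time and stochastic integrals whose integrands are of the form $(\text{deterministic})\times\xi$ or $(\text{deterministic})\times\gamma$. The deterministic-drift and deterministic-diffusion pieces already reproduce, via \eqref{equality1} and \eqref{equality2} applied with $\theta$ replaced by $\omega(\theta)_\cdot^T$ times the relevant coefficient and the nesting identity $\int_r^T \omega(\theta)_s^T\,\ell_s\,\ds = \omega(\ell,\theta)_r^T$, the two- and three-index operators $\omega(\alpha,\alpha_y,\theta)$, $\omega(\lambda^2,\alpha_y,\theta)$, etc.\ on the right-hand sides.

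The remaining terms are the crux and are handled by one recursive bootstrap. A stochastic integral against $\ud W^\alpha$ with integrand $(\text{deterministic})\times\xi$ or $\times\gamma$ is \emph{not} covered by the deterministic part of Lemma \ref{IBP2}; here we invoke its general Malliavin part, which replaces $\int_0^T u_t\dW_t^\alpha$ acting on $f(Y_T^0)$ by $\int_0^T u_t\lambda_t\,\ud\langle W^\alpha,W^L\rangle_t$ acting on $f^{(1)}(Y_T^0)$, thereby picking up a factor $1$ or $\rho$ and reducing the problem to a time integral of the previous kind, but with $f$ replaced by $f^{(1)}$. Re-applying \eqref{equality1} or \eqref{equality2} with $f\mapsto f^{(1)}$ and using $g_i^{f^{(1)}}=g_{i+1}^f$ raises the Greek index by one, which is exactly the mechanism producing the $g_2^f(Y_T^0)$-terms — and, for $Z_{1,\cdot}$ and $\gamma^2$, where the factor $\rho$ is collected twice, the $\rho^2 g_2^f$-terms. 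Thus \eqref{equality3} and \eqref{equality4} follow by feeding \eqref{equality1}--\eqref{equality2} through this recursion, and \eqref{equality5}--\eqref{equality7} follow in the same way after the preliminary It\^o expansion; in each case the stated coefficients are recovered by collecting the $g_0^f$, $g_1^f$ and $g_2^f$ contributions. The only real difficulty is organisational: correctly tracking this single-step bootstrap together with the nested-$\omega$ bookkeeping so that every contribution lands in the right coefficient. The specialisation $\rho=0$, which kills all $W^X$-generated terms, provides a convenient consistency check.
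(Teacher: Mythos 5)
Your proposal is correct and follows exactly the route the paper indicates for this lemma: rewrite each time integral via Lemma \ref{IBP1}, use It\^o's formula for the quadratic processes $\xi^2$, $\gamma^2$, $\xi\gamma$, and then apply Lemma \ref{IBP2} successively (its deterministic part for deterministic integrands, its Malliavin/duality part to bootstrap back to \eqref{equality1}--\eqref{equality2} with $f$ replaced by $f^{(1)}$, using $g_i^{f^{(1)}}=g_{i+1}^f$ and the nesting identity for $\omega$). In fact your write-up supplies precisely the ``laborious calculations'' that the paper's one-sentence proof omits, and the recursion you describe reproduces all stated coefficients, including the $\rho$ and $\rho^2$ factors.
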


\begin{proof}
The equalities are derived by laborious calculations using It\^o's formula and by successively applying Lemmata \ref{IBP1} and \ref{IBP2}.
The details are omitted for the sake of brevity.  
\end{proof}

\subsubsection{Greek coefficients for the second order approximation}

The correction term for the second order expansion is provided by $\E[h^{(1)}(Y_T^0)Y_{1,T}]$ in \eqref{SdOrderExpPayoffanderror} and our target now is to make this explicit. 
Let us recall equations \eqref{proxy0}--\eqref{SDEThirdDerivative}, that $Y_{1,t}=\frac{\partial Y_t^\eta}{\partial\eta}|_{\eta=0}$ and $Z_{1,t}=\frac{\partial Z_t^\eta}{\partial\eta}|_{\eta=0}$ and Remark \ref{rem:notation-coeffs}, which together yield that
\begin{eqnarray}
Y_{1,T}  &=& \int_0^T \xi_t\alpha_{y,t} \dt + \int_0^T \gamma_t \alpha_{z,t} \dt + \int_0^T \xi_t \lambda_{y,t} \dW_t^L,\\
Z_{1,T}  &=& \int_0^T \gamma_t\beta_{z,t} \dt + \int_0^T \gamma_t \sigma_{z,t} \dW_t^X,\\
\xi_t 	 &=& Y^0_t - y_0 = \int_0^t \alpha_s \ds + \int_0^t \lambda_s \dW_t^L, \label{eq:xi} \\
\gamma_t &=& Z^0_t - z_0 = \int_0^t \beta_s \ds + \int_0^t \sigma_s \dW_t^X. \label{eq:gamma}
\end{eqnarray}
Let us also introduce the shifted payoff function
\begin{equation}
\bar{h}^{(i)}(x) = h^{(i)}\left(y_0 + \int_0^T \alpha_t\dt + x\right), \, \text{ for }i \in \N.
\end{equation}
Then we have that
\begin{multline}  \label{2ndOrderExplicit}
\E\left[h^{(1)}(Y_T^0)Y_{1,T}\right] 
	= \E\left[ \bar{h}^{(1)}\left(\int_0^T\lambda_t \dW^L_t\right) \int_0^T \xi_t\alpha_{y,t} \dt \right] \\
	+ \E\left[\bar{h}^{(1)}\left(\int_0^T\lambda_t \dW^L_t\right) \int_0^T \gamma_t\alpha_{z,t} \dt \right] 
	+ \E\left[\bar{h}^{(1)}\left(\int_0^T\lambda_t \dW^L_t\right) \int_0^T \xi_t\lambda_{y,t} \dW^L_t \right]. 
\end{multline}
By applying Lemmata \ref{IBP2} and \ref{lem3ndOrderExpansionFormula}, we obtain
\begin{align}
\E\left[\bar{h}^{(1)}\left(\int_0^T\lambda_t \dW^L_t\right) \int_0^T \xi_t\alpha_{y,t} \dt \right] 
	&= \omega(\alpha, \alpha_y)_0^T g_1^h(Y_T^0) + \omega(\lambda^2, \alpha_y)_0^T g_2^h(Y_T^0), \\
\E\left[\bar{h}^{(1)}\left(\int_0^T\lambda_t \dW^L_t\right) \int_0^T \gamma_t\alpha_{z,t} \dt \right] 
	&= \omega(\beta, \alpha_z)_0^T g_1^h(Y_T^0) + \rho \omega(\sigma \lambda, \alpha_z)_0^T g_2^h(Y_T^0), \\
\E\left[\bar{h}^{(1)}\left(\int_0^T\lambda_t \dW^L_t\right) \int_0^T \xi_t\lambda_{y,t} \dW^L_t \right] 
	&= \omega(\alpha, \lambda_y\lambda)_0^T g_2^h(Y_T^0) + \omega(\lambda^2, \lambda_y \lambda)_0^T g_3^h(Y_T^0).
  \end{align}
More specifically, the first equality follows directly by \eqref{equality1} and the second one by \eqref{equality2}.
For the third equality, we apply first Lemma \ref{IBP2} and then \eqref{equality1}.

Finally, by gathering all the terms, passing to the initial parameters via \eqref{OriginalParameters} below, and writing them as a second order polynomial in $\rho$, we arrive at \eqref{2ndOrderExpansionFormula}.
\begin{eqnarray}\label{OriginalParameters}
\left\{
\begin{array}{lll}
\omega(\alpha, \alpha_y)_0^T 
	& = &  \frac{1}{2}\omega(\lambda^2, \lambda_y\lambda )_0^T  + \rho \left[ \frac{1}{2}\omega(\lambda^2, \lambda_y\sigma)_0^T 
	+ \omega(\lambda \sigma, \lambda_y\lambda)_0^T \right] + \rho^2 \omega(\lambda \sigma, \lambda_y\sigma)_0^T,\\  
\omega(\beta, \alpha_z)_0^T 
	& = & \frac{1}{2} \rho \omega(\sigma^2, \lambda\sigma_z )_0^T,\\
\omega(\lambda^2, \alpha_y)_0^T 
	& = & - \omega(\lambda^2, \lambda_y \lambda)_0^T - \rho \omega(\lambda^2, \lambda_y\sigma)_0^T,\\
\omega(\alpha, \lambda_y \lambda)_0^T 
	& = & - \frac{1}{2} \omega(\lambda^2, \lambda_y \lambda)_0^T - \rho \omega(\lambda \sigma , \lambda_y \lambda)_0^T,\\
\omega(\lambda \sigma, \alpha_z)_0^T 
	& = & -\rho \omega(\sigma \lambda, \lambda\sigma_z )_0^T. 
\end{array}
\right.
\end{eqnarray}

\section{Numerical experiments}\label{NumericalExperiments}

This section is dedicated to numerical experiments and a comparison of the second and third order expansions with the ``market'' approximation for quanto options.

\subsection{Time-homogeneous hyperbolic local volatility model}

We consider the time-homogeneous hyperbolic local volatility model where the SDEs for the forward LIBOR and the forward FX rate are provided by \eqref{equ1a} and \eqref{equ5}, while the coefficients $\lambda(\cdot,y)$ and $\sigma(\cdot,z)$ are homogeneous in time and take the form:
\begin{align}
\lambda(y)
	&:= \nu_L \left[ \frac{1-\beta_{L} + \beta^2_{L}}{\beta_{L}} + \frac{(\beta_L-1)}{\beta_L} 
			\left( \frac{\sqrt{y^2 + \beta_L^2(1-y)^2} - \beta_L}{y} \right)  \right], \\
\sigma(z)
	&:= \nu_X \left[ \frac{1-\beta_{X} + \beta^2_{X}}{\beta_{X}} + \frac{(\beta_X-1)}{\beta_X}
			\left( \frac{\sqrt{z^2 + \beta_X^2(1-z)^2} - \beta_X}{z} \right)  \right],
\end{align}
where $\nu_L$ and $\nu_X$, both strictly positive, represent the levels of volatility, while $\beta_L$ and $\beta_X$, both valued in $[0,1]$, represent the skew 
parameters. 
This model corresponds to the Black--Scholes model for $\beta_L = \beta_X = 1$ and exhibits a skew for the \text{imp}lied volatility surface when $\beta_L \, \text{or} \, \beta_X  \neq 1$. 
It was introduced by \citet{Jac08}, behaves similarly to the CEV (Constant Elasticity of Variance) model, and has been used for numerical experiments also in \citet{BompisHok14}. 
The advantage of this model is that zero is not an attainable boundary, and that allows to avoid some numerical instabilities present in the CEV model when the 
underlying LIBOR or FX rate are close to zero; see \textit{e.g.} \citet{And00}.
Although the assumptions of boundedness and ellipticity are not fulfilled, we reasonably expect that our approximation formulas remain valid for this model, and apply Theorems \ref{ApproxFormulaProbaApproa2order} and \ref{ApproxFormulaProbaApproa3order}. 
The payoff of a call option on the other hand does satisfy the smoothness assumption $(\mathbb S_2)$, as the payoff is everywhere differentiable apart from the kink at the strike level and grows exponentially.
The numerical experiments that follow show that the derived approximations perform well in this setting, even though some theoretical assumptions are not satisfied.

\subsection{Market approximation for pricing of European quanto option}
\label{commonmktappro}

The common market practice is to evaluate European quanto call/put options analytically using a Black--Scholes type formula with a quanto drift correction. 
More precisely, for a caplet with maturity date $T$, strike $K$ and payment date $T_1$, the market approximation is provided by
\begin{equation}\label{MarketApproximation}
\mathbb C^{M}(T,K) = \delta B(0,T_1)\left( \e^{y_0 - q T} \Phi(d_1) - \e^k \Phi(d_2) \right),
\end{equation}
where $q = \rho \lambda_{\text{imp}}(T, \text{ATM})\sigma_{\text{imp}}(T, \text{ATM})$, $k=\ln K$, $\Phi$ is the cdf of the standard normal distribution, and
\begin{equation}
d_1 = \frac{y_0-k-\rho \lambda_{{\text{imp}}}(T,\text{ATM})\sigma_{\text{imp}}(T,\text{ATM})T + \frac{1}{2}\lambda_{\text{imp}}(T,k)^2T}{\lambda_{\text{imp}}(T,k)\sqrt{T}},
\end{equation}
\begin{equation}
 d_2 = d_1-\lambda_{\text{imp}}(T,k)\sqrt{T},
\end{equation}
where $\lambda_{\text{imp}}(T,\text{ATM})$, $\sigma_{\text{imp}}(T,\text{ATM})$ are respectively the \text{ATM} \text{imp}lied volatility for the forward LIBOR rate and the FX forward rate with expiry $T$, while $\lambda_{\text{imp}}(T,k)$  is \text{imp}lied volatility for the forward LIBOR rate with strike $k$.
This approach is similar to the ``practitioner'' Black--Scholes model considered in \citet{ChrisJac04} or \citet{Romo12}.
Observe that the approximation formula \eqref{MarketApproximation} becomes exact by construction when $\rho=0$.

\subsection{Comparison results for the second and third order expansions and the market approximation} 

\subsubsection{Set of parameters} 

The numerical experiments are conducted using the following values for the parameters: $L_0=6\%$, $X_0=1$, $\nu_L=8\%$, $\beta_L = 0.3$, $\nu_X=15\%$ and $\beta_X = 0.5$. 
They are chosen to be comparable to market values, see \textit{e.g.} \citet{Hullwhite00} and \citet{NgSun08}.
In order to illustrate this, Figures \ref{GraphLIBORImpliedVol} and \ref{GraphFXImpliedVol} show, respectively, the implied volatilities for the forward LIBOR and the forward FX rates generated with these parameters for various maturities. 
They represent the skew typically observed in interest rates and FX markets.

The challenging part for the pricing comes from the choice of the correlation parameter $\rho$ between the forward LIBOR rate and the foreign exchange rate, because its level is not directly observable in the market and has a significant impact in the pricing as showed in Figure \ref{GraphCorrelationImpact}. 
In practice, its level is either chosen by the trader or estimated using historical data. 
The empirical analysis in \citet{BoenSch03} shows that the estimated correlations depend on the underlying interest rates and the pair of currencies considered. 
In general, $\rho$ is not too {\it{large}} and belongs to the region $[-0.2, 0.2]$. 
A trader who sells this product, may choose its level in a conservative way (higher selling price) by taking a lower or negative correlation level, as the option price is decreasing with $\rho$. 
For these reasons and for the purpose of testing our formulas, we consider correlation levels $\rho \in [-0.5, 0.5]$.

In order for the tests to be comprehensive, we consider various relevant maturities (1, 6, 10 and 15 years) and strikes (with a range increasing with the maturity). 
This roughly covers very out-of-the-money options and very in-the-money options.

\subsubsection{Benchmarks} 

Benchmarks for model prices are computed using the Monte Carlo method by discretizing the diffusion process using the Euler scheme. 
The number of Monte Carlo (MC) paths and the number of steps in the discretization are chosen such that the $95\%$ confidence intervals are within $2$ basis points.

\subsubsection{Accuracy}

The results for the tests are illustrated in Figures \ref{GraphErrorRhominuszerofive}, \ref{GraphErrorRhominuszerotwo}, \ref{GraphErrorRhopluszerotwo} and 
\ref{GraphErrorRhopluszerofive}.
The following observations stem from these tests and their illustrations: \medskip

\noindent In general, the test results up to 15 years show that the second and third order approximation formulas provide very good accuracy. 
Table \ref{tab:statisticsSdTdApproximation} gives some statistics (average and maximum for the absolute discrepancy) for various correlation values considered. 
The maximum average error for the second (third) order approximation formulas is $2.8$ ($2.2$) bps with correlation value equal to $-0.5$.
The maximum error for the second (third) order approximation is about $14$ ($8.4$) bps. 
Third order approximation formulas produce better accuracy in comparison to the second order approximation formulas, which is expected. \\[-.75em]

\noindent The market approximation formulas provide good accuracy as well; see the statistics in Table \ref{tab:statisticsMarketApproximation}.   
This is due to the fact that this formula is exact for $\rho=0$, which results in good accuracy when the correlation parameter is fairly small.
Indeed, the largest average error for the market approximation formulas is $3.3$ bps with correlation value equal to $-0.5$.
The maximum error for the market approximation is about $12.6$ bps.
\\[-.75em]

\noindent In order to compare the different methods, let us mention that when the impact of the quanto effect becomes important (\textit{i.e.} $\rho= \pm 0.5$), the accuracy of the third order approximation is better than the one given by the market approximation (see Figures \ref{GraphErrorRhominuszerofive} and \ref{GraphErrorRhopluszerofive}), whereas the precision given by the second order and the market approximations is comparable. 
For a reduced quanto effect (\textit{i.e.} $\rho=\pm 0.2$), the accuracy from the third order and the market approximation is similar. 
Indeed, in the limiting case of $\rho \rightarrow 0$, the market approximation becomes exact by construction. 
The main advantage of our expansion formulas is to provide an accurate estimation of the error which is directly related to the maturity of the option ($T$), the level and curvature of the local volatility functions ($M_0$ and $M_1$) and the quanto impact ($\rho$).

\begin{figure}[htbp]
  \centering
  \includegraphics[width=\textwidth]{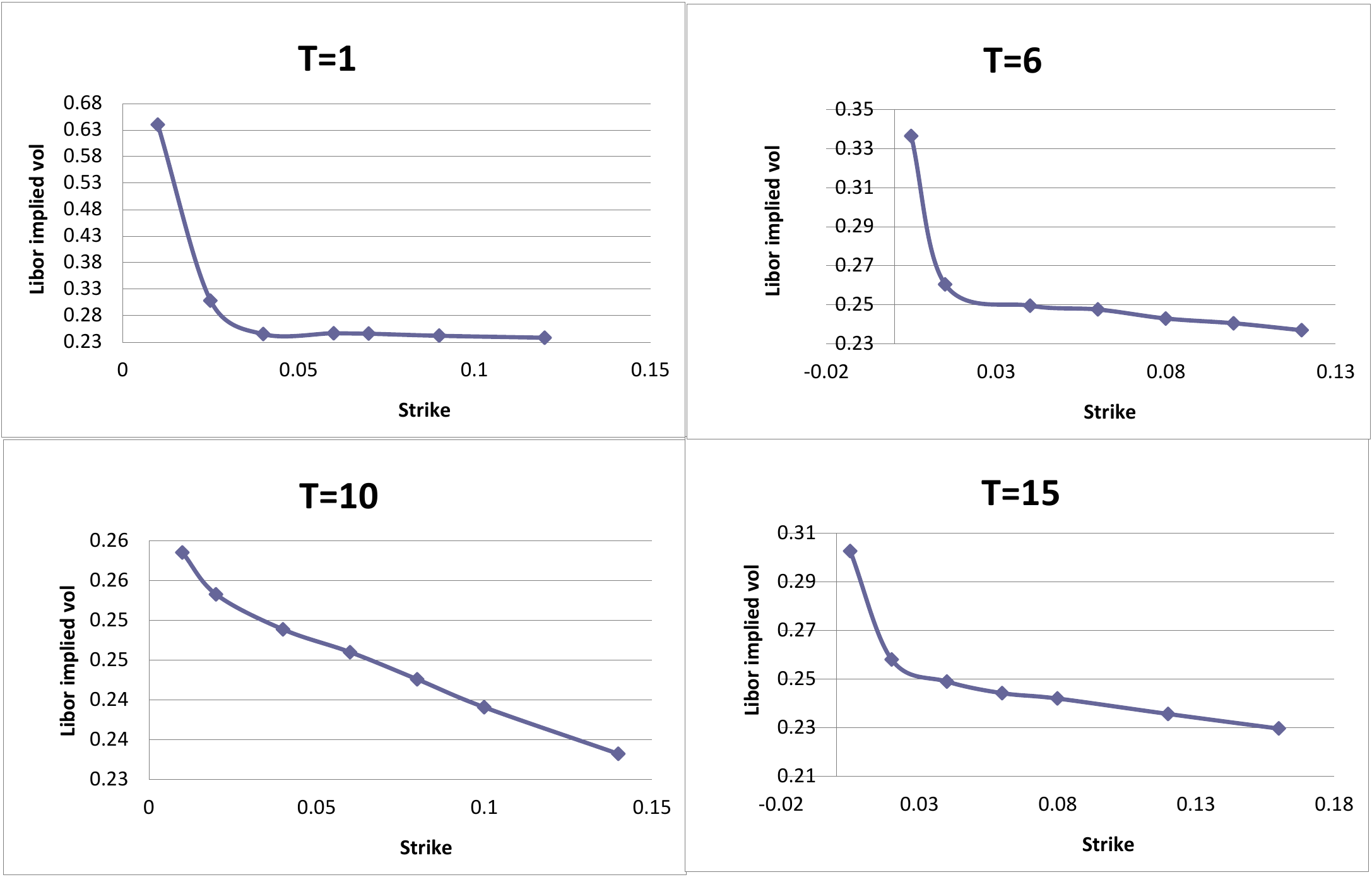}
  \caption{Forward LIBOR rate implied volatility generated with parameters $L_0=6\%$, $\nu_L = 8\%$, $\beta_L = 0.3$ for various maturities.}
  \label{GraphLIBORImpliedVol}
\end{figure}

\begin{figure}[htbp]
  \centering
  \includegraphics[width=\textwidth]{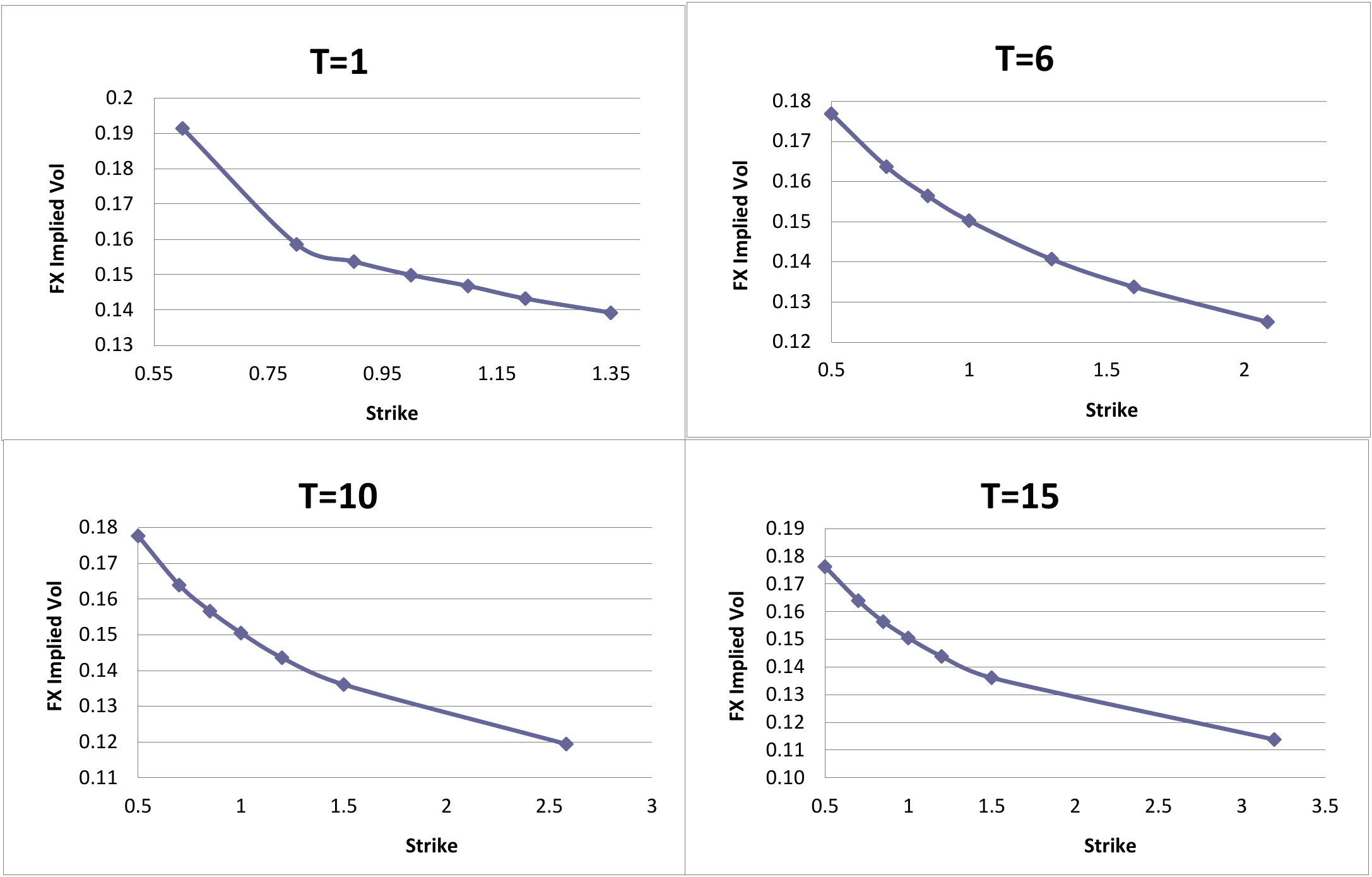}
  \caption{FX forward rate implied volatility generated with parameters $X_0=1$, $\nu_X = 15\%$, $\beta_X = 0.5$ for various maturities.}
  \label{GraphFXImpliedVol}
\end{figure}

\begin{figure}[htbp]
  \centering
  \includegraphics[width=1\textwidth]{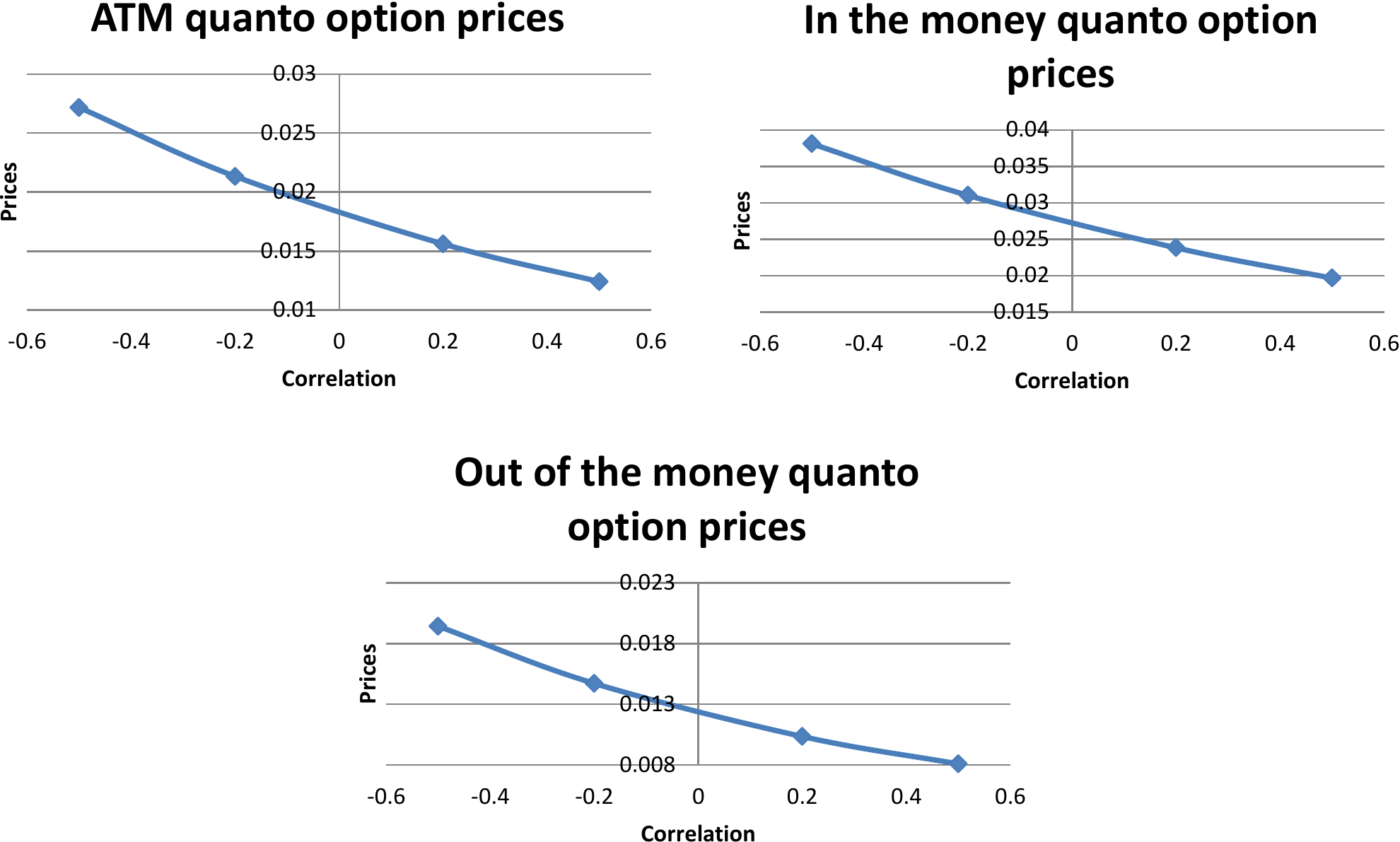}
  \caption{Impact of the correlation parameter $\rho$ for in-the-money ($K = 4\%$), \text{ATM} ($K=6\%$) and out-of-the-money ($K = 8\%$) option prices.}  
  \label{GraphCorrelationImpact}
\end{figure}

\begin{figure}[htbp]
  \centering
  \includegraphics[width=1.0\textwidth]{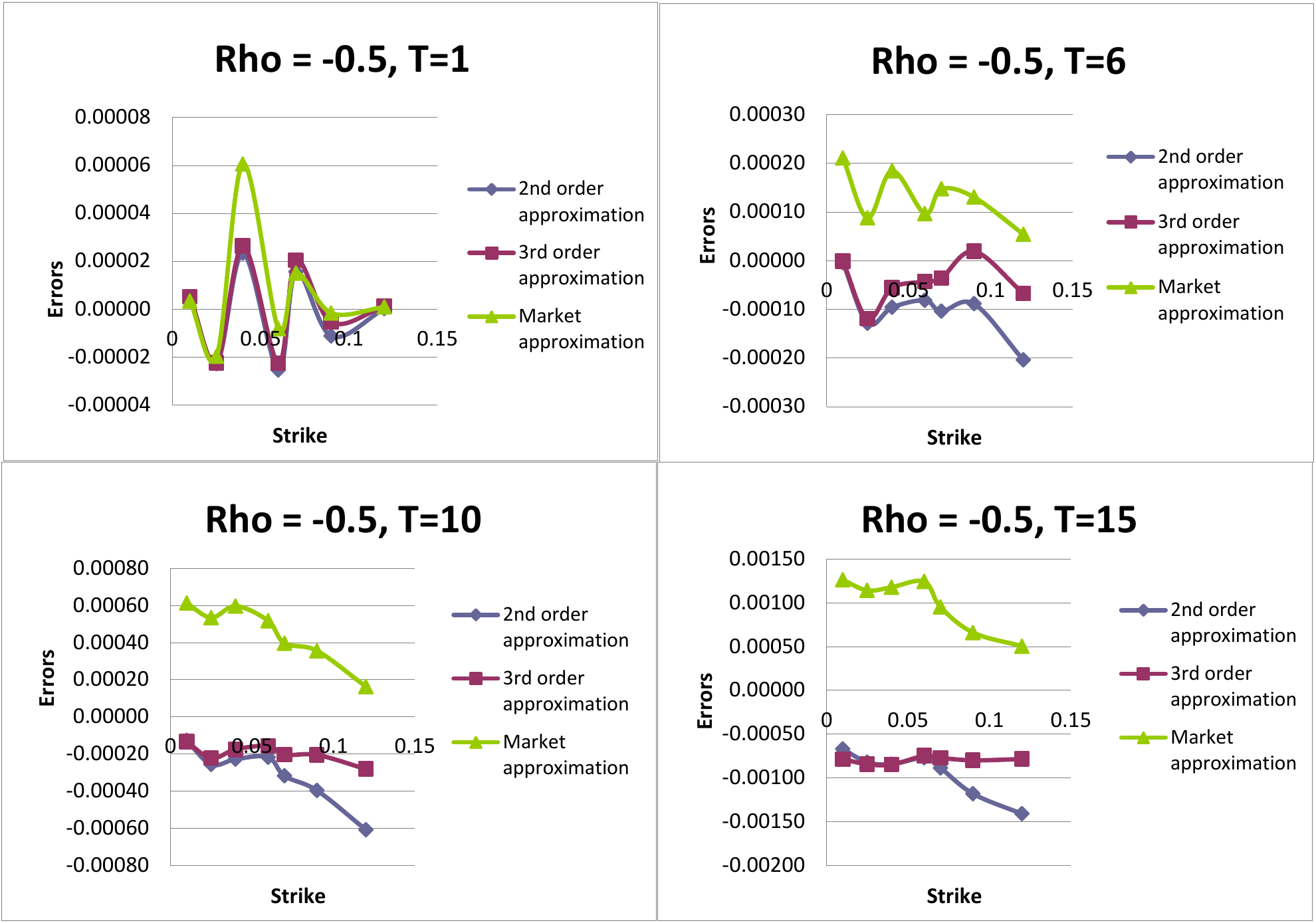}
  \caption{Absolute discrepancy between the benchmarks prices and those 
calculated with different approximation schemes when $\rho=-0.5$.}  
  \label{GraphErrorRhominuszerofive}
\end{figure}

\begin{figure}[htbp]
  \centering
  \includegraphics[width=1.0\textwidth]{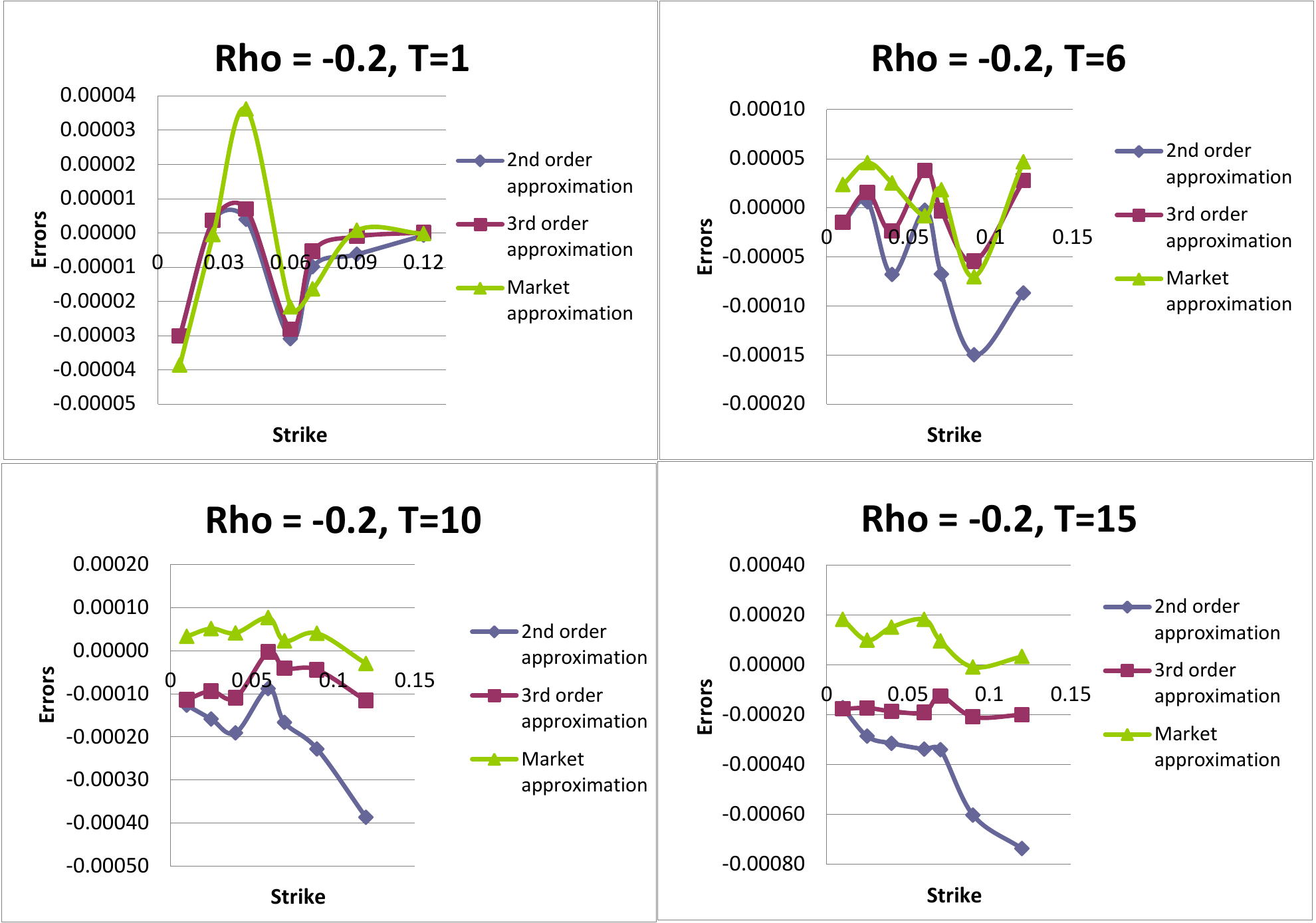}
  \caption{Absolute discrepancy between the benchmarks prices and those 
calculated with different approximation schemes when $\rho=-0.2$. } 
  \label{GraphErrorRhominuszerotwo}
\end{figure}

\begin{figure}[htbp]
  \centering
  \includegraphics[width=1.0\textwidth]{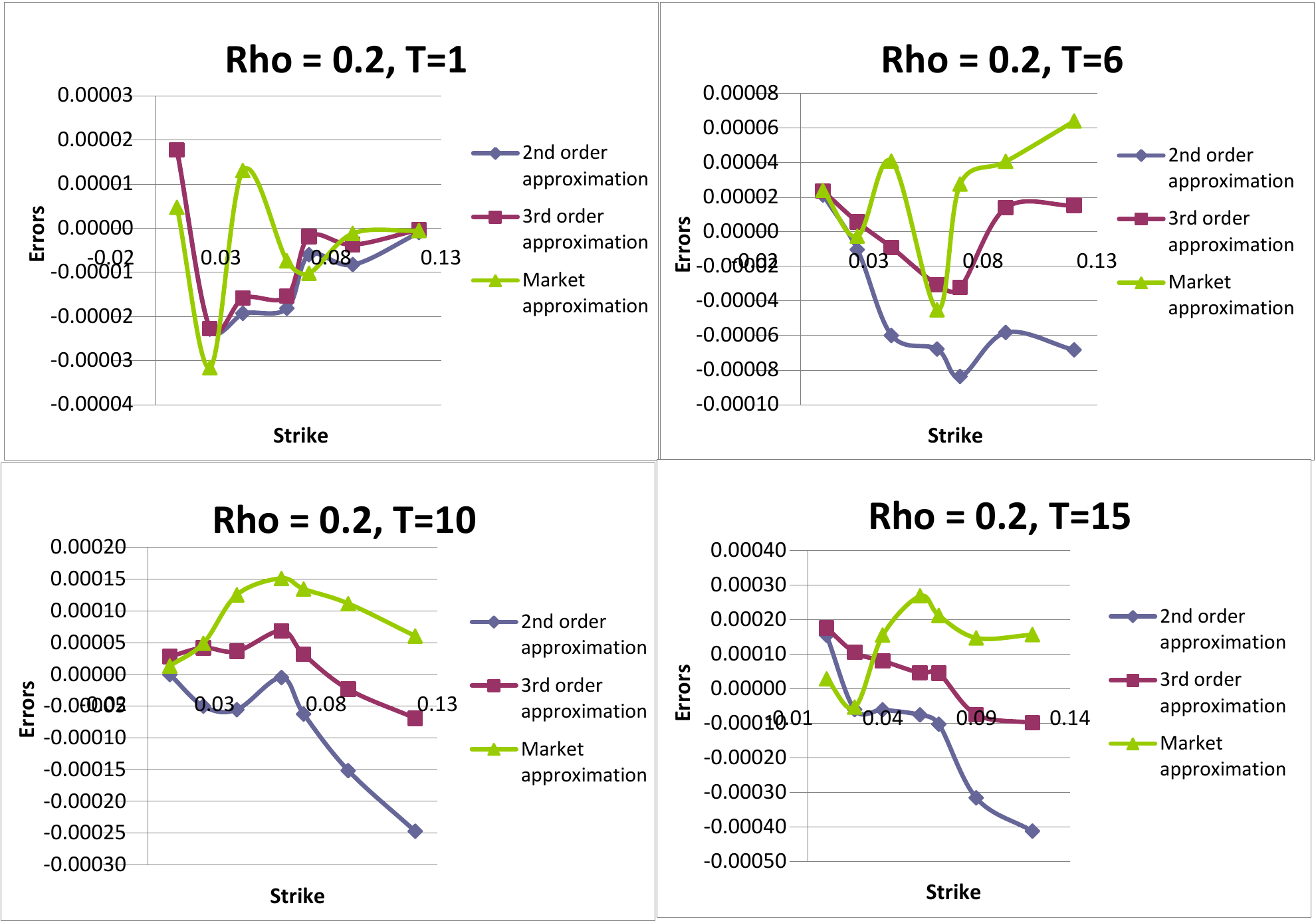}
  \caption{Absolute discrepancy between the benchmarks prices and those 
calculated with different approximation schemes when $\rho=0.2$.}  
  \label{GraphErrorRhopluszerotwo}
\end{figure}

\begin{figure}[htbp]
  \centering
  \includegraphics[width=1.0\textwidth]{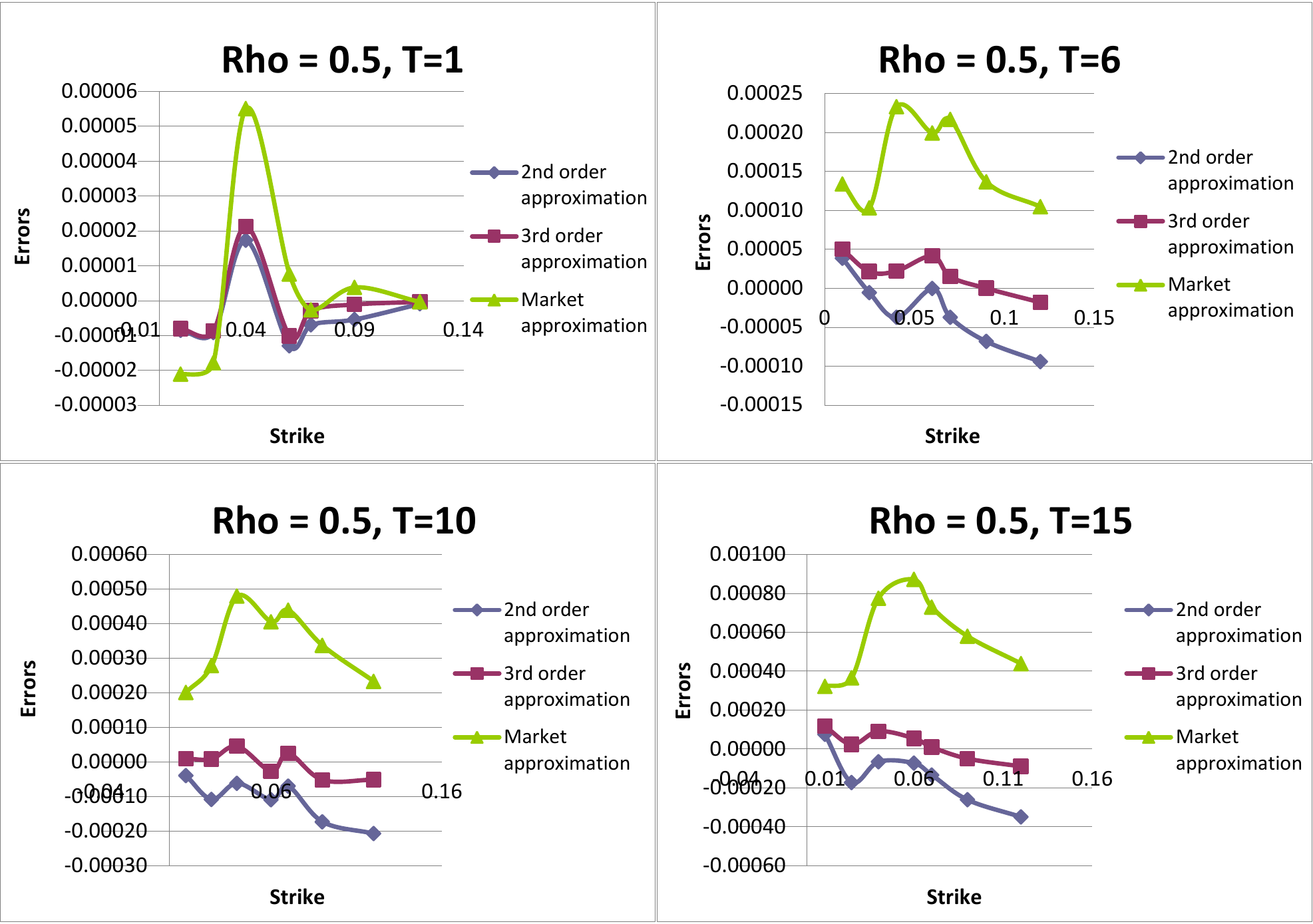}
  \caption{Absolute discrepancy between the benchmarks prices and those 
calculated with different approximation schemes when $\rho=0.5$.}  
  \label{GraphErrorRhopluszerofive}
\end{figure}

\begin{table}[htbp]
  \centering
    \begin{tabular}{ccccc}
    \hline
    Correlation & Average (2nd order) & MAX (2nd order) & Average (3rd order) & MAX(3rd order) \\
   \hline
    -0.5  & 0.00028 & 0.00141 & 0.00022 & 0.00084 \\
    -0.2  & 0.00014 & 0.00074 & 0.00006 & 0.0002 \\
    0.2   & 0.00007 & 0.00041 & 0.00004 & 0.00017 \\
    0.5   & 0.00007 & 0.00035 & 0.00003 & 0.00011 \\
   \hline
    \end{tabular}\\[.35em]
  \caption{Average and maximum statistics for the absolute discrepancy, for various correlation values considered}
  \label{tab:statisticsSdTdApproximation}%
\end{table}%

\begin{table}[htbp]
  \centering
    \begin{tabular}{ccc}
    \hline
    Correlation & Average (market approximation) & MAX (market approximation) \\
    \hline
    -0.5  & 0.00033 & 0.00126 \\
    -0.2  & 0.00005 & 0.00018 \\
    0.2   & 0.00006 & 0.00027 \\
    0.5   & 0.00023 & 0.00087 \\
    \hline
    \end{tabular}\\[.35em]
  \caption{Average and maximum statistics for the absolute discrepancy, for various correlation values considered}
  \label{tab:statisticsMarketApproximation}%
\end{table}%


\appendix
\label{sec:appendix}

\section{Proof of Lemma \ref{deryBSlemma}}\label{app:proof}

\begin{proof}
Let us write 
\begin{equation}
\mathbb C^{BS}(y_0) = \e^{-\Sigma(T)} \tilde{\mathbb C}^{BS}(y_0)
\end{equation}
where 
\begin{equation}
\tilde{\mathbb C}^{BS}(y_0) = \e^{y_0} \Phi(d_1) - \e^{\tilde{k}} \Phi(d_2)
\end{equation}
with 
\begin{equation}
\tilde{k} = k+\Sigma(T), \, \, d_1 = \frac{y_0-\tilde{k}+\frac{1}{2} \Lambda(T)}{\sqrt{\Lambda(T)}}, \,d_2 = d_1 - \sqrt{\Lambda(T)}.	
\end{equation}
For $n=1$, we get $\frac{\partial}{\partial y} \tilde{\mathbb C}^{BS}(y_0) = \e^{y_0} \Phi(d_1)$.
For $n \geq 2$, we apply the Leibniz formula for the product $\e^{y_0} \Phi(d_1)$. 
\end{proof}

\section{Malliavin calculus}

We start by introducing some definitions and notation for the Malliavin calculus --- see \textit{e.g.} \citet{BallyCaraLombl10} or \citet{Nualart05} more for details --- before providing two lemmas for the $L^p$ estimates of Malliavin derivatives and the integration-by-parts formulas. 

Let us write $W^L = \rho W^X + \sqrt{1-\rho^2} \widetilde{W}^L$, where $(\widetilde{W}_t^L)_{0 \leq t \leq T}$ is a Brownian motion independent of 
$(W_{t}^X)_{0 \leq t \leq T}$, and consider the Malliavin calculus for the 2-dimensional Brownian motion $(\widetilde{W^L}, W^X)$. 
Let $D^i_t F, \, i=1,2, $ denote the Malliavin derivative of the random variable $F$ wrt to the Brownian motion $i$ at time $t$, and similarly for the higher order derivatives, where for example $D^{i,j}_{t_1, t_2}F = D^i_{t_1}D^j_{t_2} F$.

Under the regularity assumption $(\mathbb R_4)$, using \cite{Nualart05}, we know that for any $t \leq T$, any $\eta \in [0,1]$ and any $p \geq 1$, we have $(Y_t^\eta,Z_t^\eta) \in \D^{4,p}, (Y_{1,t}^\eta,Z_{1,t}^\eta) \in \D^{3,p}, (Y_{2,t}^\eta,Z_{2,t}^\eta) \in \D^{2,p}$ and $(Y_{3,t}^\eta,Z_{3,t}^\eta) \in \D^{1,p}$. 
The existence of any  moment is easy to establish, see \textit{e.g.} \citet{Priouret05} or \citet{Nualart05}. 
We focus on the Malliavin differentiability of the system of SDEs and their $L^p$ estimates.

Let $r>t$, then $D^1_r Y_t^\eta = D^2_r Y_t^\eta = 0$. 
Now take $r \leq t$, then $(D^1_r Y_t^\eta, D^2_r Y_t^\eta)$ solves the following system of SDEs
\begin{equation}\label{YSDEMalliavinFirstDeriv}
\left\lbrace
\begin{array}{lll}
D^1_r Y_t^\eta 
	&=&  \lambda\big(r, \eta Y_r^\eta+(1-\eta)y_0\big) \sqrt{1-\rho^2} + \int_r^t \eta \alpha_y D^1_r Y_u^\eta \ud u \\
	&& +\int_r^t \eta \lambda_y D^1_r Y_u^\eta (\sqrt{1-\rho^2} \ud \widetilde{W}_t^L + \rho \ud W_t^X),\\
D^2_r Y_t^\eta 
	&=&  \lambda\big(r, \eta Y_r^\eta+(1-\eta)y_0\big) \rho + \int_r^t \eta (\alpha_y D^2_r Y_u^\eta  + \alpha_z D^2_r Z_u^\eta)\ud u \\
	&& +\int_r^t \eta \lambda_y D^2_r Y_u^\eta (\sqrt{1-\rho^2} \ud \widetilde{W}_t^L + \rho \ud W_t^X).
\end{array} \right.
\end{equation}
For the second order Malliavin derivatives, where for instance $r < s \leq t$, we have 
\begin{eqnarray}\label{YSDEMalliavinSecondDeriv}
\left\lbrace
\begin{array}{lll}
D^{1,1}_{r,s} Y_t^\eta 
	&=&  \eta \lambda_y D^1_r Y_s^\eta  + \int_s^t \eta \big(\alpha_y D^{1,1}_{s,r} Y_u^\eta +  \alpha_{yy}D^1_r Y_u^\eta D^1_s Y_u^\eta  \big) \ud u \\
	&&+ \int_s^t \eta \big( \lambda_y D^{1,1}_{s,r} Y_u^\eta + \lambda_{yy} D^1_r Y_u^\eta D^1_s Y_u^\eta \big) (\sqrt{1-\rho^2} \ud \widetilde{W}_t^L + \rho 
		\ud W_t^X),\\
D^{1,2}_{r,s} Y_t^\eta 
	&=&  \eta \lambda_y D^1_r Y_s^\eta +  \int_s^t \eta \big( \alpha_y D^{1,2}_{r,s} Y_u^\eta +  \alpha_{yy}D^1_r Y_u^\eta D^2_s Y_u^\eta 
			+ \alpha_{yz}D^1_r Y_u^\eta D^2_s Z_u^\eta  \big) \ud u\\
	&&+ \int_s^t \eta \big( \lambda_y D^{1,2}_{r,s} Y_u^\eta + \lambda_{yy} D^1_r Y_u^\eta D^2_s Y_u^\eta \big) (\sqrt{1-\rho^2} \ud \widetilde{W}_t^L + \rho 
		\ud W_t^X),\\
D^{2,2}_{r,s} Y_t^\eta 
	&=& \eta \lambda_y D^2_r Y_s^\eta +  \int_s^t \eta \big( \alpha_y D^{2,2}_{s,r} Y_u^\eta +  \alpha_{yy}D^2_r Y_u^\eta D^2_s Y_u^\eta  
		+\alpha_{yz}D^2_r Y_u^\eta D^2_s Z_u^\eta \\
	&&+  \alpha_z D^{2,2}_{s,r} Z_u^\eta +  \alpha_{zy}D^2_r Z_u^\eta D^2_s Y_u^\eta  +\alpha_{zz}D^2_r Z_u^\eta D^2_s Z_u^\eta \big) \ud u\\
	&&+  \int_s^t \eta \big( \lambda_y D^{2,2}_{r,s} Y_u^\eta + \lambda_{yy} D^2_r Y_u^\eta D^2_s Y_u^\eta \big) (\sqrt{1-\rho^2} \ud \widetilde{W}_t^L + \rho 
		\ud W_t^X).
\end{array} 
\right.
\end{eqnarray}

The process $Z^\eta$ is independent from $\widetilde{W}^L$, hence its Malliavin derivatives wrt to it are zero. 
Similarly, for $r>t, \, D^2_r Z_t^\eta = 0$, while for $r \leq t$, $D^2_r Z_t^\eta$ solves
\begin{equation}
D^2_r Z_t^\eta 
	= \sigma(r, \eta Z_r^\eta+(1-\eta)z_0) + \int_r^t \eta \beta_z D^2_r \, Z_u^\eta \ud u + \int_r^t \eta \sigma_z D^2_r Z_u^\eta \ud W_u^X.
\end{equation}
For the second order Malliavin derivatives, take for instance $r < s \leq t$, we have 
\begin{equation}\label{ZSDEMalliavinSecondDeriv}
\left\lbrace
\begin{array}{lll}
D^{2,2}_{r,s} Z_t^\eta 
	&=&   \eta \sigma_z D^2_r Z_s^\eta + \int_s^t \eta \big( \beta_{zz} D^2_s Z_u^\eta   D^2_r Z_u^\eta  + \beta_z D^{2,2}_{r,s} Z_u^\eta\big) \ud u  \\
	&& +  \int_s^t \eta \big( \sigma_{zz} D^2_s Z_u^\eta   D^2_r Z_u^\eta  + \sigma_z D^{2,2}_{r,s} Z_u^\eta\big) \ud W_u^X.
\end{array} \right.
\end{equation}
Similarly, we provide the Malliavin derivatives for $(Y_{1,t}^\eta,Z_{1,t}^\eta)$. $D^{1} Z_{1,t}^\eta=0$ because $Z^\eta$ is independent of $\tilde{W}^L$. 
$D^{2}_{r} Z_{1,t}^\eta=0$ for $r>t$. 
For $r \leq t$  we have
\begin{equation}
\left\lbrace
\begin{array}{lll}
D^{2}_{r} Z_{1,t}^\eta 
	&=& \sigma_z ( \eta Z_{1,r}^\eta +  Z_{r}^\eta-z_0) \\
	&& + \int_r^t \big[ \eta \beta_{zz} D^2_r Z_u^\eta(\eta Z_{1,u}^\eta +  Z_{u}^\eta-z_0) + \beta_{z}(\eta D^2_r Z_{1,u}^\eta + D^2_r Z_{u}^\eta) \big] \ud u\\
	&& + \int_r^t \big[ \eta \sigma_{zz} D^2_r Z_u^\eta(\eta Z_{1,u}^\eta +  Z_{u}^\eta-z_0) + \sigma_{z}(\eta D^2_r Z_{1,u}^\eta + D^2_r Z_{u}^\eta) \big] \ud W_u^X			
\end{array} \right.
\end{equation}
Furthermore, $D^{i}_{r} Y_{1,t}^\eta=0$ for $r>t, \, i=1,2$, while for $r \leq t$, we get
\begin{eqnarray}
\left\lbrace
\begin{array}{lll}
D^{1}_{r} Y_{1,t}^\eta 
	&=&   \lambda_y ( \eta Y_{1,r}^\eta +  Y_{r}^\eta-y_0) 
	   	+ \int_r^t \big[ \eta \alpha_{yy}  D^1_r Y_u^\eta(\eta Y_{1,u}^\eta +  Y_{u}^\eta-y_0) \\
	&&\quad + \alpha_{y}(\eta D^1_r Y_{1,u}^\eta + D^1_r Y_{u}^\eta)
		+\eta \alpha_{zy}  D^1_r Y_u^\eta(\eta Z_{1,u}^\eta +  Z_{u}^\eta-z_0) \big] \ud u  \\
	&&   + \int_r^t \big[ \eta \lambda_{yy}  D^1_r Y_u^\eta(\eta Y_{1,u}^\eta +  Y_{u}^\eta-y_0)   
		+ \lambda_{y}(\eta D^1_r Y_{1,u}^\eta + D^1_r Y_{u}^\eta)  \big] \\
	&&\qquad (\sqrt{1-\rho^2} d \widetilde{W}_t^L + \rho \ud W_t^X),\\
D^{2}_{r} Y_{1,t}^\eta 
	&=& \rho  \lambda_y ( \eta Y_{1,r}^\eta +  Y_{r}^\eta-y_0) \\
	&&   + \int_r^t \big[ \eta(\eta Y_{1,u}^\eta + Y_{u}^\eta-y_0) ( \alpha_{yy}  D^2_r Y_u^\eta + \alpha_{yz}  D^2_r Z_u^\eta)  + \alpha_{y}(\eta D^2_r 
		Y_{1,u}^\eta + D^2_r Y_{u}^\eta) \\
	&&		\eta(\eta Z_{1,u}^\eta + Z_{u}^\eta-z_0) ( \alpha_{yz}  D^2_r Y_u^\eta + \alpha_{zz}  D^2_r Z_u^\eta)  + \alpha_{z}(\eta D^2_r Z_{1,u}^\eta + 
		D^2_r Z_{u}^\eta)	\big] \ud u  \\
	&&   + \int_r^t \big[ \eta \lambda_{yy}  D^2_r Y_u^\eta(\eta Y_{1,u}^\eta +  Y_{u}^\eta-y_0)   			+ \lambda_{y}(\eta D^2_r Y_{1,u}^\eta + D^2_r 
		Y_{u}^\eta)  \big] \\
	&& \qquad (\sqrt{1-\rho^2} \ud \widetilde{W}_t^L + \rho \ud W_t^X).		
					
\end{array} \right.
\end{eqnarray}
Other Malliavin derivatives for this system of SDEs can be derived similarly, without particular difficulties.
Following \citet[Theorem 5.2, Step 2]{MiriGobetBen02}, we provide in the following lemma tight estimates for this system of Malliavin derivatives, which are useful for the error analysis.  

\begin{lemma}[Estimates of Malliavin derivatives]\label{EstimatesMalliavinDerivatives}
The following hold, for any $p \geq 1$ and $i,j,k=1,2$:
\begin{align}
\begin{array}{lll}
\E |D^{i}_rZ^\eta_t|^p \leq_c |\sigma|_{\infty}^p &&
\E |D^{i}_rY^\eta_t|^p \leq_c |\lambda|_{\infty}^p\\
\E |D^{i,j}_{r,s}Z^\eta_t|^p \leq_c |\sigma|_{\infty}^pM_1^p &&
\E |D^{i,j}_{r,s}Y^\eta_t|^p \leq_c |\lambda|_{\infty}^pM_1^p \\
\E |D^{i,j,k}_{r,s,u}Z^\eta_t|^p \leq_c |\sigma|_{\infty}^p M_0^p M_1^p &&
\E |D^{i,j,k}_{r,s,u}Y^\eta_t|^p \leq_c |\lambda|_{\infty}^p M_0^p M_1^p \\
\E |D^{i}_{r}Z_{1,t}^\eta|^p \leq_c M_1^p (M_0 \sqrt{T})^p &&
\E |D^{i}_{r}Y_{1,t}^\eta|^p \leq_c M_1^p (M_0 \sqrt{T})^p \\
\E |D^{i,j}_{r,s}Y_{1,t}^\eta|^p \leq_c M_0^p M_1^p &&
\E |D^{i,j}_{r,s}Z_{1,t}^\eta|^p \leq_c M_0^p M_1^p \\
\E |D^{i}_rZ^\eta_{2,t}|^p \leq_c M_1^p (M_0\sqrt{T})^{2p}&&
\E |D^{i}_rY^\eta_{2,t}|^p \leq_c M_1^p (M_0\sqrt{T})^{2p}\\
\E |D^{i}_rZ^\eta_{3,t}|^p \leq_c M_1^p (M_0\sqrt{T})^{3p}&&
\E |D^{i}_rY^\eta_{3,t}|^p \leq_c M_1^p (M_0\sqrt{T})^{3p}
\end{array}
\end{align}
uniformly in $(r,s,t,u) \in [0,T]$ and $\eta \in [0,1]$.
\end{lemma}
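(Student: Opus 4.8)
The plan is to treat each of these identities as an $L^p$ estimate for a \emph{linear} SDE with bounded coefficients, and to proceed by induction on the total order of differentiation (first in the Malliavin sense, then in the number of $\eta$-derivatives), feeding the lower-order bounds into the source terms of the higher-order equations. The starting point is Assumption $(\mathbb{R}_4)$, which guarantees that all the coefficients $\alpha,\beta,\lambda,\sigma$ and their relevant derivatives appearing in the Malliavin SDEs are uniformly bounded, with $|\lambda_y|,|\lambda_{yy}|,\ldots$ and $|\sigma_z|,\ldots$ controlled by $M_1$ and with $\lambda,\sigma$ themselves by $M_0$. Every stochastic flow appearing above is thus governed by an affine equation whose homogeneous part has coefficients bounded in terms of $M_0,M_1$.

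First I would establish the base cases for the first-order Malliavin derivatives. Reading off \eqref{YSDEMalliavinFirstDeriv}, each $D^i_r Y^\eta_t$ solves a linear SDE whose only inhomogeneous term is the boundary contribution $\lambda(r,\cdot)\sqrt{1-\rho^2}$ (respectively $\lambda(r,\cdot)\rho$, or $\sigma(r,\cdot)$ for $Z$) frozen at the differentiation time $r$, and whose drift and diffusion coefficients $\eta\alpha_y,\eta\lambda_y,\ldots$ are bounded. Applying the Burkholder--Davis--Gundy inequality to the stochastic integrals, Minkowski's and H\"older's inequalities to the drift, and then Gr\"onwall's lemma to $\sup_{t\le T}\E|D^i_r Y^\eta_t|^p$, one obtains $\E|D^i_r Y^\eta_t|^p \leq_c |\lambda|_\infty^p$ and similarly $\E|D^i_r Z^\eta_t|^p \leq_c |\sigma|_\infty^p$, uniformly in $r,t,\eta$; the resulting constant stays bounded as the model parameters vanish, consistent with the $\leq_c$ convention.

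For the higher derivatives I would induct using the same routine. In the second-order equations \eqref{YSDEMalliavinSecondDeriv} and \eqref{ZSDEMalliavinSecondDeriv} the homogeneous part is again linear with bounded coefficients, while the inhomogeneity is the boundary term $\eta\lambda_y D^1_r Y^\eta_s$ together with source terms of the form $\alpha_{yy}D^iY\,D^jY$, $\lambda_{yy}D^iY\,D^jY$, and so on. Each such term carries one factor of a second-order coefficient derivative (hence an $M_1$) times a product of first-order Malliavin derivatives already bounded in $L^{2p}$ by $|\lambda|_\infty$ or $|\sigma|_\infty$; H\"older's inequality then yields the advertised $|\lambda|_\infty^p M_1^p$ (resp. $|\sigma|_\infty^p M_1^p$) and Gr\"onwall closes the estimate. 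The third-order derivatives pick up one extra factor of $M_0$ from an additional first-order Malliavin derivative while retaining a single $M_1$, giving $|\lambda|_\infty^p M_0^p M_1^p$. The bounds for the $\eta$-derivative processes $Y^\eta_{i,t},Z^\eta_{i,t}$ and their Malliavin derivatives follow the identical template, except that their boundary terms involve factors such as $\eta Y^\eta_{1,r}+Y^\eta_r-y_0$; these are controlled by $M_0\sqrt T$ via Lemma \ref{LpEstimates}, which is exactly how the powers $(M_0\sqrt T)^{ip}$ enter the rows for $D^i_r Y_{1,t}^\eta$, $D^i_r Y^\eta_{2,t}$ and $D^i_r Y^\eta_{3,t}$.

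The main obstacle is not conceptual but combinatorial: the careful bookkeeping of the exact powers of $M_0$, $M_1$ and $\sqrt T$ across this large hierarchy of coupled linear SDEs, together with verifying that every application of Gr\"onwall is uniform in $(r,s,u,t,\eta)$ so that a single constant $c$ (bounded as the parameters tend to $0$) suffices for all the stated inequalities. This is precisely the estimate carried out in \citet[Theorem 5.2, Step 2]{MiriGobetBen02}, whose argument transfers essentially verbatim once the system of Malliavin SDEs above is in place; I would therefore organize the proof so that each row of the table is obtained by the same Burkholder--Davis--Gundy--H\"older--Gr\"onwall routine, differing only in which lower-order bounds and which coefficient-derivative magnitude ($M_0$ versus $M_1$) are inserted.
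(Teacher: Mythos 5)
Your proposal is correct and follows essentially the same route as the paper: the paper's own treatment consists of writing down the system of linear Malliavin SDEs (equations \eqref{YSDEMalliavinFirstDeriv}--\eqref{ZSDEMalliavinSecondDeriv} and the subsequent ones for $Y^\eta_{1,t}, Z^\eta_{1,t}$) and then invoking the estimate of \citet[Theorem 5.2, Step 2]{MiriGobetBen02}, which is exactly the inductive Burkholder--Davis--Gundy/H\"older/Gr\"onwall bookkeeping you describe, with the boundary terms of the $\eta$-derivative processes controlled by Lemma \ref{LpEstimates} to produce the $(M_0\sqrt{T})^{ip}$ factors.
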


In the following lemma, we state a key result of the integration-by-parts formula allowing to represent the error term \eqref{error2} using only $h^{(1)}$ and 
providing some moments' control useful in the error analysis.  

\begin{lemma}\label{ErrorAnalysisIBP} 
Let Assumptions $\mathbb{(ELL)}, \mathbb{(RHO)}$ and $(\mathbb R_3)$ be in force. 
Let $Z$ belong to $\cap_{p \geq 1} \D^{2,p}$. 
Then, for any $\eta \in [0,1]$, for $k=1,2$, there exists a random variable $Z_k^\eta$ in any $L^p (p\geq 1)$ such that for any function $l \in {C}_0^{\infty}(\R,\R)$, one has
\begin{equation}
\E\big[ l^{(k)}(\eta Y_T + (1-\eta)Y_T^0) Z \big]
	= \E\big[ l(\eta Y_T + (1-\eta)Y_T^0) Z_k^\eta \big].
\end{equation}
Moreover, one has 
$\| Z_k^\eta \|_p \leq_c \frac{\| Z \|_{k,2p}}{((1-\rho^2) \lambda_{\mathrm{inf}} \sqrt{T})^k}$, uniformly in $\eta$.
\end{lemma}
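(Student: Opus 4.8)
The plan is to prove the identity by the Malliavin integration-by-parts (IBP) formula, carried out with respect to $\widetilde{W}^L$ only, i.e. the component of $W^L$ that is independent of $W^X$ in the decomposition $W^L = \rho W^X + \sqrt{1-\rho^2}\,\widetilde{W}^L$. This restriction is deliberate, following the scheme of \citet{MiriGobetBen02}: it is what manufactures the factor $(1-\rho^2)$ in the final estimate, and it decouples the analysis from $Z^\eta$, which is insensitive to $\widetilde{W}^L$. Throughout write $F_\eta := \eta Y_T + (1-\eta)Y_T^0$ and let $D^1$, $\delta_1$ denote the Malliavin derivative and its dual divergence operator with respect to $\widetilde{W}^L$, consistently with the appendix. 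Under $(\mathbb R_3)$ one has $F_\eta \in \cap_{p\geq1}\D^{3,p}$, which is exactly the smoothness needed for two successive integrations by parts.

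The first step is to record the leading Malliavin derivative. Solving the linear equation \eqref{YSDEMalliavinFirstDeriv} explicitly yields
\[
D^1_r Y_T^\eta = \sqrt{1-\rho^2}\,\lambda\big(r,\eta Y_r^\eta+(1-\eta)y_0\big)\,\mathcal{E}^\eta_{r,T},
\]
where $\mathcal{E}^\eta_{r,T}$ is a strictly positive stochastic exponential whose positive and negative moments are bounded in terms of $M_0,M_1,T$. Consequently $D^1_r F_\eta = \sqrt{1-\rho^2}\big[\eta\,\lambda(r,Y_r)\mathcal{E}^1_{r,T} + (1-\eta)\lambda(r,y_0)\big]$, and the decisive observation is that the two summands carry the same sign: by $(\mathbb{ELL})$ we have $\lambda>0$, and the exponential is positive, so no cancellation can occur in this convex combination.

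The main obstacle is the non-degeneracy of the scalar Malliavin covariance $\Gamma_\eta := \int_0^T (D^1_r F_\eta)^2\,\ud r$, uniformly in $\eta\in[0,1]$, together with the correct scaling of its negative moments. Using the sign property above, I would bound $\Gamma_\eta \geq (1-\rho^2)\lambda_{\mathrm{inf}}^2\big[(1-\eta)^2 T + \eta^2\int_0^T (\mathcal{E}^1_{r,T})^2\,\ud r\big]$; the second bracketed term prevents degeneration as $\eta\to 1$, and the standard control of negative moments of time integrals of a stochastic exponential then gives $\|\Gamma_\eta^{-1}\|_p \leq_c \big((1-\rho^2)\lambda_{\mathrm{inf}}^2 T\big)^{-1}$ for every $p\geq1$, uniformly in $\eta$. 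This is precisely the step where $(\mathbb{ELL})$ and $(\mathbb{RHO})$ enter in an essential way, and it is the technical heart of the proof.

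With non-degeneracy available, the construction is routine. The chain rule gives $D^1\big(l(F_\eta)\big) = l^{(1)}(F_\eta)\,D^1 F_\eta$, whence $l^{(1)}(F_\eta) = \Gamma_\eta^{-1}\big\langle D^1(l(F_\eta)), D^1 F_\eta\big\rangle_H$, with $H=L^2([0,T])$; multiplying by $Z$ and invoking the duality between $D^1$ and $\delta_1$ produces the case $k=1$, with $Z_1^\eta = \delta_1\big(\Gamma_\eta^{-1}Z\,D^1 F_\eta\big)$. The case $k=2$ follows by applying this identity twice, namely $Z_2^\eta = \delta_1\big(\Gamma_\eta^{-1}Z_1^\eta\,D^1 F_\eta\big)$. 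Finally, the $L^p$-bounds follow from Meyer's inequality $\|\delta_1(u)\|_p \leq_c \|u\|_{\D^{1,p}(H)}$, the product rule in the Malliavin--Sobolev spaces, the derivative estimates of Lemma \ref{EstimatesMalliavinDerivatives} (which control $D^1 F_\eta$ and $D^{1,1}F_\eta$), and the negative-moment bound on $\Gamma_\eta$ above. The remaining work is bookkeeping: one differentiates the products, applies Meyer's inequality, and collects the powers of $\Gamma_\eta^{-1}$ together with the derivative estimates; each integration by parts raises the required Malliavin--Sobolev degree of $Z$ by one (producing the norm $\|Z\|_{k,2p}$, the exponent $2p$ arising from a Cauchy--Schwarz split) and contributes a factor $\big((1-\rho^2)\lambda_{\mathrm{inf}}\sqrt{T}\big)^{-1}$, which yields the stated estimate $\|Z_k^\eta\|_p \leq_c \|Z\|_{k,2p}/\big((1-\rho^2)\lambda_{\mathrm{inf}}\sqrt{T}\big)^k$.
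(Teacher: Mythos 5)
Your proposal is correct and matches the paper's own proof in all essential respects: the same decomposition $W^L = \rho W^X + \sqrt{1-\rho^2}\,\widetilde{W}^L$, the same explicit solution of the linear SDE for $D^1_r Y_T^\eta$ leading to the no-cancellation (same-sign) lower bound on the covariance via ellipticity and negative moments of the stochastic exponential, producing the scaling $(1-\rho^2)\lambda_{\mathrm{inf}}^2 T$, and the same Meyer-inequality/H\"older bookkeeping giving $\| Z_k^\eta \|_p \leq_c \| Z \|_{k,2p}/((1-\rho^2)\lambda_{\mathrm{inf}}\sqrt{T})^k$. The only cosmetic difference is that you invert the partial covariance $\int_0^T (D^1_r F_\eta)^2 \, \ud r$ and construct $Z_k^\eta$ as explicit iterated divergences $\delta_1$, whereas the paper inverts the full two-dimensional covariance $\gamma_{F_\eta}$ (which it anyway lower-bounds by discarding the $D^2$ part) and cites Nualart's Propositions 2.1.4 and 1.5.6 for the existence of $Z_k^\eta$; the two routes are interchangeable here.
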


\begin{proof}
We prove the lemma for $k=1$, since for $k=2$ the proof is similar. 

\begin{itemize}[leftmargin=*]
\item[] \textit{Step 1:} $F_{\eta}=\eta Y_T + (1-\eta)Y_T^0$ \textit{is a non-degenerate random variable (in the Malliavin sense).}

Under $(\mathbb R_3)$, we know that $F_{\eta}$ is in $\cap_{p \geq 1} \D^{3,p}$. 
One has to prove that the Malliavin covariance matrix associated to $F_{\eta}$, which is a scalar in this case, is defined as
\begin{equation}\label{gammaFeta}
\gamma_{F_{\eta}} 
	= \int_0^T (D^{1}_rF_{\eta})^2\ud r + \int_0^T (D^{2}_rF_{\eta})^2 \ud r
\end{equation}
is almost surely positive and its inverse is in any ${L}^p (p \geq 1)$.

By linearity, we have 
\begin{equation}
D^1_r F_{\eta} = \eta  D^1_r Y_T + (1-\eta)D^1_rY_T^0.
\end{equation}
From \eqref{YSDEMalliavinFirstDeriv} and by setting $\eta$ to $1$ and $0$ successively, we get for $r \leq t$
\begin{equation}\label{YSDEMalliavinFirstDerivEta1Eta0}
\left\lbrace
\begin{array}{lll}
D^1_r Y_t 
	&=&  \lambda(r, Y_r) \sqrt{1-\rho^2}  + \int_r^t D^1_r Y_u( \alpha_y \ud u + \lambda_y (\sqrt{1-\rho^2} \ud \widetilde{W}_u^L + \rho \ud W_u^X)),\\
D^1_r Y^0_t 
	&=& \lambda(r, y_0)\sqrt{1-\rho^2}.
\end{array} \right.
\end{equation}
By solving \eqref{YSDEMalliavinFirstDerivEta1Eta0} for $r \leq T$, we obtain
 \begin{equation}\label{YSDEMalliavinFirstDerivEta1Eta0Solution}
\left\lbrace
\begin{array}{lll}
D^1_r Y_T 
	&=&  \lambda(r, Y_r) \sqrt{1-\rho^2} \e^{ \int_r^T \alpha_y \ud u -\frac{1}{2}\int_r^T \lambda_y^2 \ud u  
		+  \int_r^T \lambda_y (\sqrt{1-\rho^2} \ud \widetilde{W}_u^L + \rho \ud W_u^X)} \\
D^1_r Y_T^0 &=& \lambda(r, y_0)\sqrt{1-\rho^2}.
\end{array} \right.
\end{equation}
Hence, we can write
{\small{
\begin{equation}\label{FinalizeProofofNonDegenerate}
\begin{array}{lll}
\gamma_{F_{\eta}} &\geq&  \int_0^T (D^{1}_rF_{\eta})^2 \ud r \\
				  &\geq& \int_0^T \left[ \lambda(r, Y_r) 
\sqrt{1-\rho^2} \e^{ \int_r^T \alpha_y \ud u -\frac{1}{2}\int_r^T \lambda_y^2 \ud u  + 
 \int_r^T \lambda_y (\sqrt{1-\rho^2} 
\ud \widetilde{W}_u^L + \rho \ud W_u^X)} + (1-\eta) \lambda(r, y_0)\sqrt{1-\rho^2}  
\right]^2 \ud r \\
				   &\geq& (\inf_{0 \leq r \leq T} \e^{ \int_r^T 
\alpha_y \ud u -\frac{1}{2}\int_r^T \lambda_y^2 \ud u  +  
\int_r^T \lambda_y (\sqrt{1-\rho^2} \ud \widetilde{W}_u^L + \rho \ud W_u^X)})^2 (1-\rho^2) \int_0^T ( \lambda(r, Y_r) + 
(1-\eta) \lambda(r, y_0) )^2 \ud r\\
				   &\geq& (\inf_{0 \leq r \leq T} \e^{ \int_r^T \alpha_y \ud u -\frac{1}{2}\int_r^T \lambda_y^2 \ud u  +  
\int_r^T \lambda_y (\sqrt{1-\rho^2} \ud \widetilde{W}_u^L + \rho \ud W_u^X)})^2 T \lambda_{\mathrm{inf}}^2 (1-\rho^2).\\ 
\end{array} 
\end{equation} }}
The second inequality shows that $\gamma_{F_{\eta}}$ is almost surely positive. 
With the last inequality and the control of the moments for the solution of an SDE (see \citet[Section 6.2.1]{Priouret05}), we get for $p \geq 1$
\begin{equation}\label{momentestimatesgammaFeta}
\| \gamma_{F_{\eta}}^{-1} \|_p \leq_c \frac{1}{( \lambda_{\mathrm{inf}}  
\sqrt{1-\rho^2} \sqrt{T})^2}.
\end{equation}

\item[] \textit{Step 2:} \textit{Integration-by-parts formula}.

Using Propositions 2.1.4 and 1.5.6 in \citet{Nualart05}, one gets the existence of $Z_1^\eta$ in $L^p (p \geq 1)$ with
\begin{equation}
\| Z_1^\eta \|_p \leq_c \| \gamma_{F_{\eta}}^{-1} \|_{1,4p} 
\| DF_{\eta} \|_{1,4p} \| Z  \|_{1,2p}.
\end{equation}

\item[] \textit{Step 3:} \textit{Upper bound for $\| DF_{\eta} \|_{1,q}$,  $\| \gamma_{F_{\eta}}^{-1} \|_{1,q}$, for $q \geq 2$}.

We recall that
\begin{equation}
\| DF_{\eta} \|_{1,q}^q 
	= \E\left[  \left( \sum_{i=1}^2 \int_0^T (D^i_t F_{\eta})^2 dt \right)^{\frac{q}{2}}  \right]  
	+ \E\left[  \left( \sum_{i,j=1}^2 \int_0^T (D^{i,j}_{t_i, t_j} F_{\eta})^2 dt_idt_j \right)^{\frac{q}{2}}  \right].
\end{equation}
We bound each term above separately using the linearity of the Malliavin derivative operator to $F_{\eta}$, the Holder inequality and the Malliavin derivatives' estimates in Lemma \ref{EstimatesMalliavinDerivatives}, to obtain 
\begin{equation}\label{inequalityDF}
\| DF_{\eta} \|_{1,q} \leq_c \sqrt{T} |\lambda|_{\infty}.
\end{equation} 
For $\| \gamma_{F_{\eta}}^{-1} \|_{1,q}$ which is given by
\begin{equation}
\| \gamma_{F_{\eta}}^{-1} \|_{1,q} 
	= \E |\gamma_{F_{\eta}}^{-1}|^q  + \E\left[  \left( \sum_{i=1}^2 \int_0^T (D^i_t \gamma_{F_{\eta}}^{-1} )^2 dt \right)^{\frac{q}{2}}  \right],
\end{equation}
we use Lemma 2.1.6 in \cite{Nualart05} to write $D^i_t \gamma_{F_{\eta}}^{-1} = -\frac{D^i_t \gamma_{F_{\eta}}}{\gamma^2_{F_{\eta}}}, \, i=1,2$.
 
Similarly, we bound each term above separately using the linearity of the Malliavin derivative operator to $\gamma_{F_{\eta}}$ (see \eqref{gammaFeta}), the H\"older inequality, the Malliavin derivatives' estimates in Lemma \ref{EstimatesMalliavinDerivatives} and the moments estimates in \eqref{momentestimatesgammaFeta} to obtain 
\begin{equation}\label{inequalitygammaminus1}
\| \gamma_{F_{\eta}}^{-1} \|_{1,q} \leq_c \frac{1}{( \lambda_{\mathrm{inf}}  \sqrt{1-\rho^2} \sqrt{T})^2}.
\end{equation}
Finally, using $|\lambda|_{\infty} \leq C_E \lambda_{\mathrm{inf}}$ (Assumption ($\mathbb{ELL}$)) combined with inequalities \eqref{inequalityDF} and \eqref{inequalitygammaminus1}, we get  
\begin{equation}
\| \gamma_{F_{\eta}}^{-1} \|_{1,4p} 
\| DF_{\eta} \|_{1,4p}\leq_c \frac{1}{(1-\rho^2) \lambda_{\mathrm{inf}} 
\sqrt{T}}.
\end{equation}
This completes our proof. \qedhere
\end{itemize}
\end{proof}

\section{Derivation of the third order approximation formula} \label{DerivThirdOrderFormula}\label{subsubsection:GreeksCoefThirdOrder}

The additional correction terms for the third order expansion formula in \eqref{ThirdOrderExpPayoffanderror} are provided by $\E\big[h^{(1)}(Y_T^0)\frac{Y_{2,T}}{2}\big]$ and $\E\big[h^{(2)}(Y_T^0) \frac{(Y_{1,T})^2}{2}\big]$. 
The first term, setting $\eta=0$ in \eqref{SDESecondDerivative}, yields
\begin{align}\label{3rdorder1stterm}
\E\left[h^{(1)}(Y_T^0)\frac{Y_{2,T}}{2}\right]
	&= \E\left[h^{(1)}(Y_T^0) \int_0^T Y_{1,t}\alpha_y \dt\right]
	 + \E\left[h^{(1)}(Y_T^0) \int_0^T Y_{1,t}\lambda_y\dW^L_t\right] \nonumber \\ \nonumber
	&\quad+ \E\left[h^{(1)}(Y_T^0) \int_0^TZ_{1,t}\alpha_z\dt\right]
	 + \E\left[h^{(1)}(Y_T^0) \int_0^T \xi_t\gamma_t \alpha_{yz}\dt\right] \\ \nonumber 
	&\quad+ \E\left[h^{(1)}(Y_T^0) \int_0^T\frac{\xi_t^2}{2}\alpha_{yy}\dt\right]
	 + \E\left[h^{(1)}(Y_T^0) \int_0^T \frac{\gamma^2_t}{2}\alpha_{zz}\dt\right] \\
 	&\quad+ \E\left[h^{(1)}(Y_T^0) \int_0^T \frac{\xi^2_t}{2}\lambda_{yy}\dW^L_t\right].
\end{align}
Using Lemma \ref{lem3ndOrderExpansionFormula}, we compute each of the sub-correction terms separately:
\begin{multline}
\E\left[ h^{(1)}(Y_T^0) \left(\int_0^T Y_{1,t}\alpha_y \dt \right)\right] 
	= [w(\alpha, \alpha_y, \alpha_y)_0^T + w(\beta,\alpha_z, \alpha_y)_0^T] g^h_1(Y_T^0) \\
	+ [w(\lambda^2, \alpha_y, \alpha_y)_0^T 
	  	+ w(\alpha, \lambda_y\lambda,\alpha_y)_0^T] g_2^h(Y_T^0)
  	+ w(\lambda^2, \lambda_y\lambda, \alpha_y)  g_3^h(Y_T^0)\\
  	+ \rho w(\sigma \lambda, \alpha_z, \alpha_y)_0^T g_2^h(Y_T^0),
\end{multline}
\begin{multline}
\E\left[ h^{(1)}(Y_T^0) \left(\int_0^T Y_{1,t}\lambda_y \dW_t^L \right)\right]
	= [w(\alpha, \alpha_y, \lambda_y\lambda)_0^T + w(\beta, \alpha_z, \lambda_y\lambda)_0^T] g^h_2(Y_T^0)\\
	+ [w(\lambda^2, \alpha_y, \lambda_y\lambda)_0^T  
		+ w(\alpha,\lambda_y\lambda, \lambda_y\lambda)_0^T] g^h_3(Y_T^0)
	+ w(\lambda^2,\lambda_y\lambda, \lambda_y\lambda)_0^T g^h_4(Y_T^0)\\
	+ \rho w(\sigma\lambda,\alpha_z,\lambda_y\lambda)_0^T g^h_3(Y_T^0),
\end{multline}
\begin{multline}
\E\left[ h^{(1)}(Y_T^0) \left(\int_0^T Z_{1,t}\alpha_z \dt \right)\right] 
	= w(\beta, \beta_z,\alpha_z)_0^T g^h_1(Y_T^0) \\
	+ \rho g^h_2(Y_T^0) \left[ w(\sigma \lambda, \beta_z,\alpha_z)_0^T 
		+ w(\beta, \sigma_z \lambda,\alpha_z)_0^T \right] 
 	+ \rho^2 w(\sigma \lambda, \sigma_z \lambda,\alpha_z)_0^T g^h_3(Y_T^0),
\end{multline}
\begin{multline}
\E\left[ h^{(1)}(Y_T^0) \left(\int_0^T\xi_t \gamma_t \alpha_{yz} \dt \right)\right]
	= \left[ \omega(\alpha, \beta ,\alpha_{yz})_0^T 
		+ \omega(\beta, \alpha, \alpha_{yz})_0^T \right] g_1^h(Y_T^0) \\
	+ \left[ \omega(\lambda^2, \beta,\alpha_{yz})_0^T 
		+ \omega(\beta, \lambda^2,\alpha_{yz})_0^T\right] g_2^h(Y_T^0) \\
	+  \rho \Big[ \omega(\lambda \sigma, \alpha_{yz})_0^T g_1^h(Y_T^0) 
		+ \left( \omega(\sigma \lambda, \alpha,\alpha_{yz})_0^T 
			+ \omega( \alpha, \sigma \lambda,\alpha_{yz})_0^T  \right) g_2^h(Y_T^0) \\ 
		+ \left( \omega( \lambda^2, \sigma \lambda,\alpha_{yz})_0^T 
			+ \omega( \sigma \lambda, \lambda^2,\alpha_{yz})_0^T  \right) g_3^h(Y_T^0)\Big],
\end{multline}
\begin{multline}
\E\left[ h^{(1)}(Y_T^0) \left(\int_0^T\frac{\xi_t^2}{2} \alpha_{yy} dt\right) \right]
	= [w(\alpha, \alpha, \alpha_{yy})_0^T	
		+ \frac{1}{2}w( {\lambda^2}, \alpha_{yy})_0^T] g^h_1(Y_T^0)\\
	+ [w(\alpha, \lambda^2, \alpha_{yy})_0^T 
		+ w(\lambda^2,\alpha, \alpha_{yy})_0^T] g^h_2(Y_T^0)
	+ w(\lambda^2,\lambda^2, \alpha_{yy})_0^Tg^h_3(Y_T^0),
\end{multline}
\begin{multline}
\E\left[ h^{(1)}(Y_T^0) \left(\int_0^T\frac{\gamma_t^2}{2} \alpha_{zz} \dt\right)\right]
	= \big[ \frac{1}{2}w(\sigma^2, \alpha_{zz})_0^T 
		+ w(\beta,\beta, \alpha_{zz} )_0^T \big] g^h_1(Y_T^0) \\
	+ \rho \left[ w(\sigma\lambda, \beta ,\alpha_{zz} )_0^T 
	+ w(\beta,\sigma\lambda, \alpha_{zz} )_0^T\right] g^h_2(Y_T^0)
	+ \rho^2w(\sigma\lambda,\sigma\lambda, \alpha_{zz} )_0^T g^h_3(Y_T^0),
\end{multline}
\begin{multline}
\E\left[ h^{(1)}(Y_T^0) \left(\int_0^T\frac{\xi_t^2}{2} \lambda_{yy} \ud W_t^L\right)\right]
	= \E\left[h^{(2)}(Y_T^0) \left(\int_0^T\frac{\xi_t^2}{2} \lambda_{yy}\lambda_t \dt \right)\right] = \\
	= \big[ \frac{1}{2}w(\lambda^2, \lambda\lambda_{yy})_0^T 
	+ w(\alpha, \alpha,\lambda\lambda_{yy})_0^T \big]g^h_2(Y_T^0) \\
	+ \left[ w(\lambda^2, \alpha,\lambda\lambda_{yy})_0^T 
	+ w(\alpha, \lambda^2,\lambda\lambda_{yy})_0^T \right]g^h_3(Y_T^0) 
	+ w(\lambda^2, \lambda^2,\lambda\lambda_{yy})_0^T g^h_4(Y_T^0),
\end{multline}

As for the second corrective term, by applying It\^o's formula to ($Y^2_{1,t})_{t \geq 0}$, we obtain
\begin{eqnarray}\label{3rdorderscdterm}\nonumber 
\E\left[\frac{h^{(2)}(Y_T^0)}{2}Y^2_{1,T}\right]
	&=& \E\left[h^{(2)}(Y_T^0)\int_0^T Y_{1,t} \xi_t \alpha_y \dt\right]
	+   \E\left[h^{(2)}(Y_T^0)\int_0^T Y_{1,t}\gamma_t\alpha_{z}\dt\right] \\\nonumber 
	&&+ \E\left[\frac{h^{(2)}(Y_T^0)}{2}\int_0^T\xi_t^2 \lambda_y^2\dt\right]
	+	\E\left[h^{(2)}(Y_T^0)\int_0^T Y_{1,t}\xi_t \lambda_y \ud W_t^L\right] \\
 	&=:& A + B + C + D.
\end{eqnarray}
We get directionsly for $C$, by applying \eqref{equality5} with $\theta = \lambda_y^2$, that
\begin{align}
C &= \big[ \frac12 \omega(\lambda^2, \lambda_y^2)_0^T 
		+ \omega(\alpha, \alpha, \lambda_y^2)_0^T \big] g_2^h(Y_T^0) 
	+  \left[ \omega(\lambda^2, \alpha,\lambda_y^2)_0^T 
		+ \omega(\alpha,\lambda^2,\lambda_y^2)_0^T\right] g_3^h(Y_T^0)\nonumber \\
  &\quad+ \omega(\lambda^2, \lambda^2,\lambda_y^2)_0^T g_4^h(Y_T^0).
\end{align}
Moreover, by applying Lemma \ref{IBP2} to $D$, we get
\begin{equation}\label{DFormula}
D = \E\left[h^{(3)}(Y_T^0)\int_0^T Y_{1,t}\xi_t \lambda_y \lambda \dt\right],
\end{equation}
hence $A$ and $D$ can be computed similarly. 
Indeed, with It\^o's formula and successive applications of Lemmata \ref{IBP1} and \ref{IBP2}, we obtain for $D$
\begin{align}\label{Dformula1}
D  &= \E\left[h^{(3)}(Y_T^0)\int_0^T Y_{1,t} \alpha \omega(\lambda \lambda_y)_t^T \dt\right] 
	+ \E\left[h^{(3)}(Y_T^0)\int_0^T \xi_t^2\alpha_y \omega(\lambda \lambda_y)_t^T \dt\right] \nonumber \\ 
   &\quad+ \E\left[h^{(3)}(Y_T^0)\int_0^T \xi_t \gamma_t\alpha_z \omega(\lambda \lambda_y)_t^T \dt\right]
	+ \E\left[h^{(3)}(Y_T^0)\int_0^T \xi_t \lambda \lambda_y \omega(\lambda \lambda_y)_t^T \dt\right] \nonumber \\
   &\quad+ \E\left[h^{(4)}(Y_T^0)\int_0^T Y_{1,t} \lambda^2 \omega(\lambda \lambda_y)_t^T \dt\right]
  	+ \E\left[h^{(4)}(Y_T^0)\int_0^T \xi_t^2 \lambda \lambda_y \omega(\lambda \lambda_y)_t^T \dt\right] \nonumber \\
   &=: D_1 +  D_2 + D_3 + D_4 + D_5 + D_6.
\end{align}
Each term $D_i$ is computed explicitly using Lemma \ref{lem3ndOrderExpansionFormula}.
Furthermore, with 
\begin{equation}
	\alpha_y(t,y,z) = -[\lambda_y(t,y)\lambda(t,y) + \rho \lambda_y(t,y) \sigma(t,z)]	
\end{equation}
we deduce directly the following expression for $A$:
\begin{align}\label{ExpressionforA}
A 	&= \E\left[h^{(2)}(Y_T^0)\int_0^T Y_{1,t} \xi_t \alpha_y \dt\right] \nonumber \\
	&= -\E\left[h^{(2)}(Y_T^0)\int_0^T Y_{1,t} \xi_t \lambda_y \lambda \dt\right]
	   -\rho \E\left[h^{(2)}(Y_T^0)\int_0^T Y_{1,t} \xi_t \lambda_y \sigma \dt\right].
\end{align}
Each term above can be deduced from $D$. 
Finally, the expression for $B$ is given by
\begin{align}
B 	&= \E\left[h^{(2)}(Y_T^0)\int_0^T Y_{1,t}\beta \omega(\alpha_z)_t^T \dt\right] 
	 + \E\left[h^{(2)}(Y_T^0)\int_0^T \gamma_t^2\alpha_z \omega(\alpha_z)_t^T \dt\right] \nonumber\\
	&\quad+ \E\left[h^{(2)}(Y_T^0)\int_0^T \xi_t \gamma_t \alpha_y \omega(\alpha_z)_t^T \dt\right]
	+ \rho \E\left[h^{(2)}(Y_T^0)\int_0^T \xi_t \sigma \lambda_y \omega(\alpha_z)_t^T \dt\right] \nonumber\\
	&\quad+ \rho \E\left[h^{(3)}(Y_T^0)\int_0^T Y_{1,t} \lambda \sigma \omega(\alpha_z)_t^T \dt\right]
	+ \E\left[h^{(3)}(Y_T^0)\int_0^T \xi_t \gamma_t \lambda \lambda_y \omega(\alpha_z)_t^T \dt\right] \nonumber\\
	&=: B_1 +  B_2 + B_3 + B_4 + B_5 + B_6.
\end{align}
By gathering all these terms, passing to the initial parameters and writing them as a polynomial function of $\rho$ (order 4), we obtain the third order expansion formulas in \eqref{3rdOrderExpansionFormula}. 
We omit the details of these computations for the sake of brevity and provide directly the results. 
The constant coefficient is as in Theorem \ref{ApproxFormulaProbaApproa3order}. 
The other coefficients are provided below.  

We have that 
\begin{align}
\gamma_{1,T}  
	&= \sum_{j=1}^5 \gamma_{1,j,T}g_j^h(Y_T^0) \label{gamma1} \\
\gamma_{1,1,T} 
	&= A_{7,T} +\frac{1}{2}(A_{8,T} + A_{9,T} -A_{10,T}-A_{11,T}) -B_{28,T} \nonumber \\
	& -\frac{1}{2}(B_{26,T}+ B_{27,T} +B_{32,T} + B_{36,T} + B_{35,T} + B_{31,T} +B_{42,T}) \nonumber\\
	& -\frac{1}{4}(B_{33,T} +  B_{34,T} + B_{29,T}+B_{37,T}) \\
\gamma_{1,2,T} 
	&= -A_{7,T}-A_{8,T} + B_{29,T} + \frac{5}{2}B_{27,T} + 3(B_{28,T} + B_{26,T}) +\frac{1}{2}(B_{32,T} + B_{33,T} + B_{34,T}) \nonumber\\
	& + \frac{3}{2}(B_{31,T} + B_{30,T} + B_{42,T} + B_{35,T}) +  C_{56,T} + C_{55,T} +\frac{1}{2} (C_{50,T} + C_{52,T} + C_{54,T} \nonumber\\
  	&\quad + C_{78,T}) + \frac{1}{4}(C_{5,T} + C_{6,T} + C_{7,T}  + C_{51,T} + C_{53,T} + C_{57,T} + C_{58,T} + C_{77,T})  \\
\gamma_{1,3,T} 
  	&= -3B_{26,T} -2(B_{27,T}+B_{28,T}) -(B_{29,T}+B_{30,T}+B_{31,T}+B_{35,T}+B_{42,T})\nonumber \\ 
  	& -\frac{3}{4}( C_{5,T} + C_{6,T} + C_{7,T} +C_{53,T} + C_{57,T} + C_{58,T}) -4(C_{55,T} +C_{56,T})  \nonumber \\
  	& -2(C_{50,T} + C_{52,T}) -\frac{5}{2} (C_{54,T} + C_{78,T}) -\frac{5}{4}(C_{51,T} + C_{77,T})  \\
\gamma_{1,4,T} 
  	&= \frac{1}{2}( C_{5,T} + C_{6,T} + C_{7,T} +C_{57,T} + C_{58,T} + C_{53,T} )+ 2(C_{51,T}+C_{77,T}) \nonumber \\
    &  +4 (C_{54,T}+C_{78,T}) + 5(C_{55,T} + C_{56,T}) + \frac{5}{2}(C_{50,T}+ C_{52,T}) \\ 
\gamma_{1,5,T} 
	&= -C_{50,T} - C_{51,T} - C_{52,T}  -C_{77,T} -2( C_{54,T} + C_{55,T} + C_{56,T} + C_{78,T}),
\end{align} 
and 
\begin{align}
\gamma_{2,T}  
	&= \sum_{j=1}^5 \gamma_{2,j,T}g_j^h(Y_T^0)   \label{gamma2}\\
\gamma_{2,1,T} 
	&= A_{4,T} - A_{5,T} - \frac{1}{2}( B_{10,T}	+ B_{13,T} + B_{14,T} + B_{15,T} + B_{16,T} + B_{17,T} ) \nonumber \\ 
	&	- B_{11,T} - B_{12,T}  - B_{18,T} - B_{19,T}   \\
\gamma_{2,2,T} 
	&= -A_{6,T} + 2(B_{10,T}+B_{20,T}+B_{11,T}) + B_{12,T} +  B_{41,T} + B_{16,T}+B_{17,T} + B_{18,T} \nonumber \\
	&  +B_{19,T}  + B_{38,T} + \frac{3}{2}B_{21,T} +  \frac{1}{2}(B_{22,T} + B_{23,T} +B_{24,T} + B_{25,T}+ B_{39,T}) \nonumber \\
	& + C_{39,T} + C_{42,T} + C_{43,T} + C_{84,T} + C_{85,T} \nonumber \\
 	& + 2C_{44,T} +\frac{1}{4} (C_{8,T} + C_{10,T} + C_{12,T} + C_{14,T}+C_{80,T} + C_{82,T}+C_{86,T}+C_{87,T}) \nonumber  \\
	& + \frac{1}{2} (C_{9,T} + C_{11,T} + C_{15,T}  + C_{16,T} + C_{38,T} + C_{40,T} + C_{41,T} + C_{45,T} + C_{46,T} \nonumber \\
	& + C_{79,T} + C_{81,T} + C_{83,T})  
\end{align}
\begin{align}	
\gamma_{2,3,T} 
 	&= -(B_{23,T}+B_{22,T} + B_{38,T} + B_{39,T}) -2B_{20,T} \nonumber \\
	&   -C_{80,T} -2(C_{83,T}+C_{39,T}) -3(C_{42,T}+C_{43,T} +C_{84,T} + C_{85,T}) -4C_{44,T} \nonumber \\
 	& -\frac{1}{2}( C_{8,T} + C_{9,T} + C_{17,T} + C_{15,T} + C_{16,T} + C_{12,T} + C_{14,T} + C_{18,T} + C_{19,T})\nonumber \\
	& -\frac{3}{2}(C_{38,T} + C_{40,T} + C_{79,T} + C_{81,T}) \nonumber \\
	& -\frac{1}{2}(  C_{41,T} + C_{45,T} + C_{46,T} +C_{47,T} + C_{48,T} + C_{49,T} + C_{86,T} + C_{87,T} + C_{82,T}) \\
\gamma_{2,4,T} 
	&= C_{38,T} + C_{39,T} + C_{40,T} + C_{79,T} + C_{80,T} + C_{81,T} \nonumber \\ 
	& +2(C_{44,T} + C_{42,T} + C_{43,T} + C_{85,T} + C_{83,T} + C_{84,T})\nonumber  \\
	& + \frac{3}{2}( C_{17,T} + C_{19,T} + C_{18,T} +C_{47,T}+C_{48,T}+C_{49,T}) \\
\gamma_{2,5,T} 
	&= -(C_{17,T} + C_{19,T} + C_{18,T} + C_{49,T} + C_{48,T} + C_{47,T}), 
\end{align}
and
\begin{align}
\gamma_{3,T}  
	&= \sum_{j=1}^4 \gamma_{3,j,T}g_j^h(Y_T^0)   \label{gamma3}\\
\gamma_{3,1,T} 
	&=  -B_{4,T} -B_{8,T}  \\ 
\gamma_{3,2,T} 
	&= 2(B_{5,T}+B_{7,T}+B_{40,T}) + C_{64,T} + C_{67,T} + C_{68,T} +  C_{34,T}+ 2(C_{69,T} + C_{35,T}) \nonumber \\
	&+ \frac{1}{2}(C_{20,T} +  C_{21,T} + C_{22,T} + C_{63,T} +C_{65,T} +  C_{66,T} + C_{70,T} + C_{71,T}) \\
\gamma_{3,3,T} 
	&= -B_{6,T} -B_{9,T} - ( C_{31,T} + C_{24,T} + C_{25,T} + C_{34,T} + C_{63,T} + C_{64,T} + C_{65,T} + C_{36,T}) \nonumber\\
	&  -2( C_{27,T}+C_{35,T}+C_{69,T} + C_{67,T} + C_{68,T} + C_{37,T} ) \nonumber\\
	& -  \frac{1}{2} (C_{26,T} + C_{23,T} + C_{28,T} + C_{29,T} + C_{30,T} + C_{73,T} + C_{74,T} + C_{75,T}) \\
\gamma_{3,4,T} 
	&= C_{28,T} + C_{26,T} +  C_{30,T} + C_{31,T} + C_{75,T}  + C_{74,T} +  C_{73,T} +C_{36,T} \nonumber\\
	&+ 2 (C_{27,T}+ C_{37,T}),
\end{align}
and finally
\begin{align}
\gamma_{4,T}  	&= \sum_{j=2}^4 \gamma_{4,j,T}g_j^h(Y_T^0)   \label{gamma4}\\
\gamma_{4,2,T}	&= C_{59,T}+2C_{60,T} \\
\gamma_{4,3,T} 	&= - C_{4,T} -C_{61,T} -2(C_{62,T}+C_{3,T}) \\
\gamma_{4,4,T} 	&= 2C_{1,T} + C_{2,T}  	
\end{align}
All the expressions for the coefficients $A_{i,T}$, $B_{i,T}$ and $C_{i,T}$ are gathered in Tables \ref{TabforAcoeff}, \ref{TabforBcoeff} and \ref{TabforCcoeff} below. 

\begin{table}[htbp]
  \centering
\begin{tabular}{|c|c|c|c|}
\hline 
$A_{1,T} = \omega(\lambda^2,\lambda\lambda_y)_0^T$ &  
$A_{2,T}=\omega(\lambda^2,\lambda_{yy}\lambda)_0^T $ & 
$A_{3,T}=\omega(\lambda^2,\lambda_y^2)_0^T$ & 
$A_{4,T}=\omega(\lambda \sigma,\lambda_y \sigma)_0^T$\\
$A_{5,T}=\omega(\lambda \sigma,\lambda_y \sigma_z)_0^T$ & 
$A_{6,T}=\omega(\lambda \sigma,\lambda \sigma_z)_0^T$ &  
$A_{7,T}=\omega(\lambda \sigma,\lambda_y \lambda)_0^T$ & 
$A_{8,T}=\omega(\lambda^2,\lambda_y \sigma)_0^T$ \\
$A_{9,T}=\omega( \sigma^2,\lambda \sigma_z)_0^T$ & 
$A_{10,T}=\omega(\lambda^2,\lambda_{yy} \sigma)_0^T$ & 
$A_{11,T}=\omega(\sigma^2,\lambda \sigma_{zz})_0^T$& \\
\hline
\end{tabular}\\[.35em]
  \caption{Weight coefficients involving 2 multiple integrals } \label{TabforAcoeff}
\end{table}

\begin{table}[htbp]
  \centering
\begin{tabular}{|c|c|c|}
\hline 
 $B_{1,T}=\omega(\lambda^2,\lambda\lambda_y, \lambda\lambda_y )_0^T$ & $B_{2,T}=\omega(\lambda^2, \lambda^2,\lambda_{yy}\lambda)_0^T$  &
 $B_{3,T}=\omega(\lambda^2, \lambda^2,\lambda_{y}^2)_0^T$ \\
 $B_{4,T}=\omega(\lambda \sigma,\lambda_y \sigma, \lambda_y \sigma )_0^T$ & $B_{5,T}= \omega(\lambda \sigma,\lambda \sigma_z, \lambda_y \sigma )_0^T$ &
 $B_{6,T}= \omega(\lambda \sigma,\lambda \sigma_z, \lambda \sigma_z )_0^T$ \\
 $B_{7,T}=\omega(\lambda \sigma,\lambda \sigma, \lambda_y \sigma_z )_0^T$ & $B_{8,T}= \omega(\lambda \sigma,\lambda \sigma, \lambda_{yy} \sigma )_0^T$ & 
 $B_{9,T}= \omega(\lambda \sigma,\lambda \sigma, \lambda \sigma_{zz} )_0^T$ \\
 $B_{10,T}= \omega(\lambda^2,\lambda_y \sigma, \lambda_y \sigma )_0^T $ & $B_{11,T}= \omega(\lambda \sigma,\lambda_y \lambda, \lambda_y \sigma )_0^T $ & 
 $B_{12,T}= \omega(\lambda \sigma,\lambda_y \sigma, \lambda_y \lambda )_0^T$ \\
 $B_{13,T}= \omega(\sigma^2,\lambda \sigma_z, \lambda_y \sigma )_0^T$ & $B_{14,T}= \omega(\lambda \sigma, \sigma^2, \lambda_y \sigma_z )_0^T$ & 
 $B_{15,T}=  \omega(\sigma^2, \lambda \sigma, \lambda_y \sigma_z )_0^T $\\
 $B_{16,T}= \omega(\lambda^2,\lambda \sigma, \lambda_{yy} \sigma )_0^T$ & $B_{17,T}= \omega(\lambda \sigma, \lambda^2,\lambda_{yy} \sigma )_0^T$ &  
 $B_{18,T}= \omega(\lambda \sigma, \lambda \sigma,\lambda_{yy} \lambda )_0^T$\\
 $B_{19,T}=\omega(\lambda \sigma, \lambda \sigma, \lambda_y^2 )_0^T $ & $B_{20,T}=  \omega(\lambda \sigma, \lambda \sigma_z, \lambda_y \lambda )_0^T$ &  
 $B_{21,T}= \omega(\sigma^2, \lambda \sigma_z, \lambda \sigma_z )_0^T$ \\
 $B_{22,T}=\omega(\sigma \lambda, \lambda^2, \lambda_y \sigma_z )_0^T $ & $B_{23,T}=  \omega(\lambda^2, \sigma \lambda, \lambda_y \sigma_z )_0^T $ & 
 $B_{24,T}= \omega(\lambda \sigma, \sigma^2, \lambda \sigma_{zz} )_0^T$ \\
 $B_{25,T}= \omega(\sigma^2, \lambda \sigma, \lambda \sigma_{zz} )_0^T $& $B_{26,T}= \omega(\lambda^2,\lambda_y \lambda, \lambda_y \sigma )_0^T $& 
 $B_{27,T}= \omega(\lambda^2, \lambda_y \sigma, \lambda_y \lambda )_0^T$\\
 $B_{28,T}=\omega(\lambda \sigma, \lambda_y \lambda, \lambda_y \lambda )_0^T $ & $B_{29,T}= \omega(\lambda^2, \lambda^2,\lambda_{yy}\sigma)_0^T$ & 
 $B_{30,T}= \omega(\lambda^2, \lambda \sigma,\lambda \lambda_{yy})_0^T$\\
 $B_{31,T}= \omega(\lambda \sigma, \lambda^2,\lambda \lambda_{yy})_0^T $ & $B_{32,T}= \omega(\sigma^2, \lambda \sigma_z,\lambda \lambda_{y})_0^T$ & 
 $B_{33,T}= \omega(\lambda^2, \sigma^2, \lambda_y \sigma_z )_0^T $\\
 $B_{34,T}= \omega(\sigma^2, \lambda^2, \lambda_y \sigma_z )_0^T $ &  $B_{35,T}= \omega(\lambda^2, \lambda \sigma, \lambda_y^2 )_0^T$ & 
 $B_{36,T}= \omega(\sigma^2, \sigma \sigma_z, \lambda \sigma_z)_0^T $ \\
 $B_{37,T} = \omega(\sigma^2, \sigma^2, \lambda \sigma_{zz} )_0^T$ &  
 $B_{38,T}= \omega(\lambda \sigma,\lambda \lambda_y , \lambda \sigma_z )_0^T$ &
 $B_{39,T}= \omega(\lambda^2, \sigma \lambda_y, \lambda \sigma_{z} )_0^T $\\ 
 $B_{40,T}=\omega(\lambda \sigma, \sigma \lambda_y, \lambda \sigma_{z} )_0^T $ &
 $B_{41,T}=\omega(\lambda \sigma, \sigma \sigma_z, \lambda \sigma_{z} )_0^T$&
 $B_{42,T}=  \omega(\lambda \sigma, \lambda^2, \lambda_y^2 )_0^T $ \\
\hline
\end{tabular}\\[.35em]
  \caption{Weight coefficients involving 3 multiple integrals } \label{TabforBcoeff}
\end{table}

\begin{table}[htbp]
  \centering
\begin{tabular}{|c|c|c|}
\hline 
 $C_{1,T}= \omega(\lambda \sigma, \lambda \sigma , \lambda \sigma_{z}, \lambda \sigma_{z}  )_0^T $ & 
 $C_{2,T}= \omega(\lambda \sigma, \lambda \sigma_{z},\lambda \sigma, \lambda \sigma_{z}  )_0^T$ &  
 $C_{3,T}= \omega(\lambda \sigma, \lambda \sigma,\lambda_y \sigma, \lambda \sigma_{z}  )_0^T $\\
 $C_{4,T}= \omega(\lambda \sigma, \lambda_y \sigma, \lambda \sigma, \lambda \sigma_{z}  )_0^T$ & 
 $C_{5,T}= \omega(\lambda^2, \lambda_y \lambda, \sigma^2, \lambda \sigma_{z}  )_0^T$ & 
 $C_{6,T}= \omega(\lambda^2, \sigma^2, \lambda_y \lambda, \lambda \sigma_{z}  )_0^T$ \\
 $C_{7,T}= \omega(\sigma^2, \lambda^2, \lambda_y \lambda, \lambda \sigma_{z}  )_0^T$& 
 $C_{8,T}= \omega(\lambda^2, \lambda_y \sigma, \sigma^2, \lambda \sigma_{z}  )_0^T  $ & 
 $C_{9,T}= \omega(\lambda \sigma, \lambda_y \lambda, \sigma^2, \lambda \sigma_{z}  )_0^T$\\
 $C_{10,T}= \omega(\sigma^2, \lambda \sigma_z, \sigma^2, \lambda \sigma_{z}  )_0^T$& 
 $C_{11,T}= \omega(\sigma^2, \sigma^2, \lambda \sigma_z, \lambda \sigma_{z}  )_0^T$ & 
 $C_{12,T}= \omega(\lambda^2, \sigma^2, \lambda_y \sigma, \lambda \sigma_{z}  )_0^T$\\
 $C_{14,T}= \omega(\sigma^2, \lambda^2, \lambda_y \sigma, \lambda \sigma_{z}  )_0^T $&  
 $C_{15,T}= \omega(\lambda \sigma, \sigma^2, \lambda_y \lambda, \lambda \sigma_{z}  )_0^T$  & 
 $C_{16,T}= \omega(\sigma^2, \lambda \sigma, \lambda_y \lambda, \lambda \sigma_{z}  )_0^T$  \\
 $C_{17,T}= \omega(\lambda^2, \lambda_y \lambda, \lambda \sigma, \lambda \sigma_{z}  )_0^T $ & 
 $C_{18,T}= \omega(\lambda \sigma , \lambda^2, \lambda_y \lambda, \lambda \sigma_{z}  )_0^T$ & 
 $C_{19,T}= \omega(\lambda^2,  \sigma \lambda, \lambda_y \lambda , \lambda \sigma_{z}  )_0^T$ \\
 $C_{20,T}= \omega(\lambda \sigma, \lambda_y \sigma, \sigma^2, \lambda \sigma_{z}  )_0^T $& 
 $C_{21,T}= \omega(\lambda \sigma, \sigma^2, \lambda_y \sigma, \lambda \sigma_{z}  )_0^T $ & 
 $C_{22,T}= \omega(\sigma^2, \lambda \sigma, \lambda_y \sigma, \lambda \sigma_{z})_0^T$\\
 $C_{23,T}= \omega(\sigma \lambda, \lambda \sigma_z, \sigma^2, \lambda \sigma_{z}  )_0^T$ &  
 $C_{24,T}=  \omega(\lambda \sigma, \sigma^2, \lambda \sigma_z, \lambda \sigma_{z}  )_0^T$ & 
 $C_{25,T}= \omega(\sigma^2, \lambda \sigma, \lambda \sigma_z, \lambda \sigma_{z}  )_0^T$\\
 $C_{26,T}= \omega(\lambda \sigma , \lambda^2, \lambda_y \sigma, \lambda \sigma_{z}  )_0^T $ &  
 $C_{27,T}= \omega(\lambda \sigma,  \sigma \lambda, \lambda_y \lambda , \lambda \sigma_{z}  )_0^T$ & 
 $C_{28,T}= \omega(\lambda^2,  \sigma \lambda, \lambda_y \sigma , \lambda \sigma_{z}  )_0^T$\\
 $C_{29,T}= \omega(\sigma^2, \lambda \sigma_z, \lambda \sigma, \lambda \sigma_{z}  )_0^T $ &
 $C_{30,T}= \omega(\lambda^2, \lambda_y \sigma, \lambda \sigma, \lambda \sigma_{z}  )_0^T$ &
 $C_{31,T}= \omega(\lambda \sigma, \lambda_y \lambda, \lambda \sigma, \lambda \sigma_{z}  )_0^T$\\
 $C_{32,T}= \omega(\lambda^2, \lambda_y \lambda, \lambda^2, \lambda \lambda_y   )_0^T$&
 $C_{33,T}= \omega(\lambda^2, \lambda^2, \lambda_y \lambda , \lambda \lambda_y   )_0^T $ &
 $C_{34,T}= \omega(\lambda \sigma, \lambda_y \sigma, \lambda \sigma , \lambda \lambda_y   )_0^T  $\\
 $C_{35,T}= \omega(\lambda \sigma, \lambda \sigma, \lambda_y \sigma, \lambda \lambda_y   )_0^T$ & 
 $C_{36,T}=  \omega(\lambda \sigma, \lambda \sigma_z, \lambda \sigma, \lambda \lambda_y   )_0^T$ & 
 $C_{37,T}= \omega(\lambda \sigma, \lambda \sigma, \lambda \sigma_z, \lambda \lambda_y   )_0^T $\\
 $C_{38,T}= \omega(\lambda^2, \lambda_y \sigma, \lambda \sigma, \lambda \lambda_y   )_0^T $& 
 $C_{39,T}= \omega(\lambda \sigma, \lambda_y \lambda, \lambda \sigma, \lambda \lambda_y   )_0^T $ & 
 $C_{40,T}=\omega(\lambda \sigma, \lambda_y \sigma, \lambda^2, \lambda \lambda_y   )_0^T$ \\
 $C_{41,T}= \omega( \sigma^2, \lambda \sigma_z, \lambda \sigma, \lambda \lambda_y   )_0^T $& 
 $C_{42,T}= \omega(\lambda^2, \lambda \sigma, \lambda_y \sigma, \lambda \lambda_y   )_0^T $& 
 $C_{43,T}= \omega(\lambda \sigma, \lambda^2, \lambda_y \sigma, \lambda \lambda_y   )_0^T$\\
 $C_{44,T}= \omega(\lambda \sigma, \lambda \sigma, \lambda_y \lambda , \lambda \lambda_y   )_0^T$ & 
 $C_{45,T}= \omega(\lambda \sigma, \sigma^2, \lambda \sigma_z, \lambda \lambda_y   )_0^T $& 
 $C_{46,T}=  \omega(\sigma^2, \lambda \sigma, \lambda \sigma_z, \lambda \lambda_y   )_0^T$\\
 $C_{47,T}= \omega(\lambda \sigma, \lambda \sigma_z, \lambda^2, \lambda \lambda_y   )_0^T $ & 
 $C_{48,T}= \omega(\lambda \sigma, \lambda^2, \lambda \sigma_z, \lambda \lambda_y   )_0^T $ & 
 $C_{49,T}= \omega(\lambda^2, \lambda \sigma, \lambda \sigma_z, \lambda \lambda_y   )_0^T $ \\
 $C_{50,T}= \omega(\lambda^2, \lambda_y \lambda , \lambda \sigma, \lambda \lambda_y   )_0^T $ & 
 $C_{51,T}= \omega(\lambda^2, \lambda_y \sigma, \lambda^2, \lambda \lambda_y   )_0^T $ & 
 $C_{52,T}=  \omega(\lambda \sigma , \lambda_y \lambda , \lambda^2, \lambda \lambda_y   )_0^T $\\
 $C_{53,T}=  \omega(\sigma^2, \lambda \sigma_z, \lambda^2, \lambda \lambda_y   )_0^T $ & 
 $C_{54,T}= \omega(\lambda^2, \lambda^2, \lambda_y \sigma , \lambda \lambda_y   )_0^T $ & 
 $C_{55,T}= \omega(\lambda^2, \lambda \sigma, \lambda_y \lambda , \lambda \lambda_y   )_0^T $ \\
 $C_{56,T}= \omega(\lambda \sigma, \lambda^2, \lambda_y \lambda , \lambda \lambda_y   )_0^T $& 
 $C_{57,T}=  \omega(\lambda^2, \sigma^2, \lambda \sigma_z , \lambda \lambda_y   )_0^T $& 
 $C_{58,T}= \omega(\sigma^2, \lambda^2, \lambda \sigma_z , \lambda \lambda_y   )_0^T $\\
 $C_{59,T}= \omega(\lambda \sigma, \lambda_y \sigma, \lambda \sigma ,  \lambda_y \sigma)_0^T $& 
 $C_{60,T}=  \omega(\lambda \sigma, \lambda \sigma , \lambda_y \sigma, \lambda_y \sigma)_0^T $& 
 $C_{61,T}= \omega(\lambda \sigma, \lambda \sigma_z , \lambda \sigma, \lambda_y \sigma)_0^T $ \\
 $C_{62,T}= \omega(\lambda \sigma, \lambda \sigma, \lambda \sigma_z, \lambda_y \sigma)_0^T $& 
 $C_{63,T}=  \omega(\lambda^2 , \lambda_y \sigma, \lambda \sigma, \lambda_y \sigma)_0^T $& 
 $C_{64,T}=  \omega(\lambda \sigma , \lambda_y \lambda, \lambda \sigma, \lambda_y \sigma)_0^T $\\
 $C_{65,T}= \omega(\lambda \sigma , \lambda_y \sigma , \lambda^2 , \lambda_y \sigma)_0^T $& 
 $C_{66,T}= \omega(\sigma^2 , \lambda \sigma_z , \lambda \sigma  , \lambda_y \sigma)_0^T $& 
 $C_{67,T}= \omega(\lambda^2 , \lambda \sigma, \lambda_y \sigma, \lambda_y \sigma)_0^T$\\
 $C_{68,T}= \omega(\lambda \sigma, \lambda^2, \lambda_y \sigma, \lambda_y \sigma)_0^T $& 
 $C_{69,T}= \omega(\lambda \sigma, \lambda \sigma, \lambda_y \lambda, \lambda_y \sigma)_0^T $& 
 $C_{70,T}= \omega(\lambda \sigma, \sigma^2, \lambda \sigma_z, \lambda_y \sigma)_0^T$\\
 $C_{71,T}= \omega(\sigma^2 , \lambda \sigma , \lambda \sigma_z  , \lambda_y \sigma)_0^T$& 
 $C_{72,T}=  \omega(\lambda \sigma, \lambda_{y} \lambda, \lambda \sigma, \lambda_y \sigma)_0^T$ & 
 $C_{73,T}= \omega(\lambda \sigma , \lambda \sigma_z , \lambda^2 , \lambda_y \sigma)_0^T $\\
 $C_{74,T}= \omega(\lambda \sigma, \lambda^2, \lambda \sigma_z, \lambda_y \sigma)_0^T$& 
 $C_{75,T}= \omega(\lambda^2 , \lambda \sigma, \lambda \sigma_z, \lambda_y \sigma)_0^T $& 
 $C_{76,T}= \omega(\lambda \sigma, \lambda \sigma, \lambda \lambda_y , \lambda_y \sigma)_0^T $\\
 $C_{77,T}= \omega(\lambda^2 , \lambda_y \lambda, \lambda^2 , \lambda_y \sigma)_0^T  $& 
 $C_{78,T}=  \omega(\lambda^2 , \lambda^2, \lambda_y \lambda , \lambda_y \sigma)_0^T $& 
 $C_{79,T}= \omega(\lambda^2 , \lambda_y \lambda, \lambda \sigma, \lambda_y \sigma)_0^T $\\
 $C_{80,T}= \omega(\lambda^2 , \lambda_y \sigma, \lambda^2 , \lambda_y \sigma)_0^T $& 
 $C_{81,T}= \omega(\lambda \sigma , \lambda_y \lambda , \lambda^2 , \lambda_y \sigma)_0^T $& 
 $C_{82,T}= \omega(\sigma^2 , \lambda \sigma_z , \lambda^2 , \lambda_y \sigma)_0^T $\\
 $C_{83,T}=\omega(\lambda^2 , \lambda^2, \lambda_y \sigma , \lambda_y \sigma)_0^T $& 
 $C_{84,T}=  \omega(\lambda^2 , \lambda \sigma, \lambda_y \lambda , \lambda_y \sigma)_0^T $& 
 $C_{85,T}= \omega(\lambda \sigma , \lambda^2, \lambda_y  \lambda , \lambda_y \sigma)_0^T $\\
 $C_{86,T}= \omega(\lambda^2 , \sigma^2, \lambda \sigma_z , \lambda_y \sigma)_0^T $& 
 $C_{87,T}= \omega( \sigma^2, \lambda^2, \lambda \sigma_z , \lambda_y \sigma)_0^T$  & \\
\hline
\end{tabular}\\[.35em]
  \caption{Weight coefficients involving 4 multiple integrals } \label{TabforCcoeff}
\end{table}

\subsection*{Acknowledgements}

Philip Ngare gratefully acknowledges the financial support from an IMU Berlin Einstein Foundation Fellowship. 
The authors also thank the seminar participants at Global Derivatives for their fruitful comments.

\bibliography{bibliography}

\begin{thebibliography}{45}
\providecommand{\natexlab}[1]{#1}
\providecommand{\url}[1]{\texttt{#1}}
\expandafter\ifx\csname urlstyle\endcsname\relax
  \providecommand{\doi}[1]{doi: #1}\else
  \providecommand{\doi}{doi: \begingroup \urlstyle{rm}\Url}\fi

\bibitem[Albanese et~al.(2001)Albanese, Campolieti, Carr, and Lipton]{ACCL01}
C.~Albanese, G.~Campolieti, P.~Carr, and A.~Lipton.
\newblock {Black--Scholes} goes hypergeometric.
\newblock \emph{Risk}, 14\penalty0 (12):\penalty0 99--103, 2001.

\bibitem[Andreasen and Andersen(2000)]{And00}
J.~Andreasen and L.~Andersen.
\newblock Volatility skews and extensions of the {LIBOR} market model.
\newblock \emph{Applied Mathematical Finance}, 7:\penalty0 1--32, 2000.

\bibitem[Bally et~al.(2010)Bally, Caramellino, and Lombardi]{BallyCaraLombl10}
V.~Bally, L.~Caramellino, and L.~Lombardi.
\newblock An introduction to {M}alliavin calculus and its applications to
  finance.
\newblock Lecture notes, 2010.

\bibitem[Bell(2006)]{Bell06}
D.~Bell.
\newblock \emph{The Malliavin Calculus}.
\newblock Dover Publications, 2006.

\bibitem[Benhamou et~al.(2009)Benhamou, Gobet, and Miri]{MiriGobetBen02}
E.~Benhamou, E.~Gobet, and M.~Miri.
\newblock Smart expansion and fast calibration for jump diffusion.
\newblock \emph{Finance and Stochastics}, 13:\penalty0 563--589, 2009.

\bibitem[Benhamou et~al.(2010{\natexlab{a}})Benhamou, Gobet, and
  Miri]{BenGobetMiri10a}
E.~Benhamou, E.~Gobet, and M.~Miri.
\newblock Expansion formulas for {E}uropean options in a local volatility
  model.
\newblock \emph{International Journal of Theoretical and Applied Finance},
  13:\penalty0 602--634, 2010{\natexlab{a}}.

\bibitem[Benhamou et~al.(2010{\natexlab{b}})Benhamou, Gobet, and
  Miri]{MiriGobetBen00}
E.~Benhamou, E.~Gobet, and M.~Miri.
\newblock Time dependent {H}eston model.
\newblock \emph{SIAM Journal on Financial Mathematics}, 1:\penalty0 289--325,
  2010{\natexlab{b}}.

\bibitem[Benhamou et~al.(2012)Benhamou, Gobet, and Miri]{MiriGobetBenIR02}
E.~Benhamou, E.~Gobet, and M.~Miri.
\newblock Analytical formulas for local volatility model with stochastic rates.
\newblock \emph{Quantitative Finance}, 12\penalty0 (2):\penalty0 185--198,
  2012.

\bibitem[Berestycki et~al.(2002)Berestycki, Busca, and
  Florent]{BeresBusFlorent02}
H.~Berestycki, J.~Busca, and I.~Florent.
\newblock Asymptotics and calibration of local volatility models.
\newblock \emph{Quantitative Finance}, 2:\penalty0 61--69, 2002.

\bibitem[Boenkost and Schmidt(2003)]{BoenSch03}
W.~Boenkost and W.~M. Schmidt.
\newblock Notes on convexity and quanto adjustments for interest rates and
  related options.
\newblock Preprint No. 47, Frankfurt School of Finance and Management, 2003.

\bibitem[Bompis and Hok(2014)]{BompisHok14}
R.~Bompis and J.~Hok.
\newblock Forward implied volatility expansion in time-dependent local
  volatility models.
\newblock \emph{ESAIM: Proceedings and Surveys}, 45:\penalty0 88--97, 2014.

\bibitem[Brigo and Mercurio(2006)]{BrigiMer06}
D.~Brigo and F.~Mercurio.
\newblock \emph{Interest Rate Models -- Theory and Practice}.
\newblock Springer, 2nd edition, 2006.

\bibitem[Christoffersen and Jacobs(2004)]{ChrisJac04}
P.~Christoffersen and K.~Jacobs.
\newblock The importance of the loss function in option valuation.
\newblock \emph{Journal of Financial Economics}, 72:\penalty0 291--318, 2004.

\bibitem[Cox(1975)]{Cox75}
J.~Cox.
\newblock Notes on option pricing {I}: Constant elasticity of diffusions.
\newblock Working paper, 1975.

\bibitem[Derman and Kani(1998)]{DerKa98}
E.~Derman and I.~Kani.
\newblock Stochastic implied trees: Arbitrage pricing with stochastic term and
  strike structure of volatility.
\newblock \emph{International Journal of Theoretical and Applied Finance},
  1\penalty0 (1):\penalty0 61--110, 1998.

\bibitem[Dupire(1994)]{Dupire94}
B.~Dupire.
\newblock Pricing with a smile.
\newblock \emph{Riks}, 7\penalty0 (1):\penalty0 18--20, 1994.

\bibitem[Foschi et~al.(2013)Foschi, Pagliarani, and Pascucci]{FoPagPas13}
P.~Foschi, S.~Pagliarani, and A.~Pascucci.
\newblock {Black--Scholes} formulae for {Asian} options in local volatility
  models.
\newblock \emph{Journal of Computational and Applied Mathematics},
  237:\penalty0 442--459, 2013.

\bibitem[Giese(2012)]{Giese12}
A.~Giese.
\newblock Quanto adjustments in the presence of stochastic volatility.
\newblock \emph{Risk}, pages 67--71, May 2012.

\bibitem[Gobet and Bompis(2014)]{GobetBompis14}
E.~Gobet and R.~Bompis.
\newblock Stochastic approximation finite element method for analytical
  approximation of multidimensional diffusion process.
\newblock \emph{SIAM Journal on Numerical Analysis}, 52:\penalty0 3140--3164,
  2014.

\bibitem[Gobet and Hok(2014)]{GobetHok14}
E.~Gobet and J.~Hok.
\newblock Expansion formulas for bivariate payoffs with application to best-of
  options on equity and inflation.
\newblock \emph{International Journal of Theoretical and Applied Finance},
  17\penalty0 (2):\penalty0 1450010, 2014.

\bibitem[Gobet and Miri(2014)]{GobetMiri14}
E.~Gobet and M.~Miri.
\newblock Weak approximation of averaged diffusion processes.
\newblock \emph{Stochastic Processes and their Applications}, 124:\penalty0
  475--504, 2014.

\bibitem[Hagan and Woodward(1999)]{HaganWood99}
P.~Hagan and D.~Woodward.
\newblock Equivalent {Black} volatilities.
\newblock \emph{Applied Mathematical Finance}, 6:\penalty0 147--157, 1999.

\bibitem[Henry-Labord\`ere(2005)]{PHLabordere05}
P.~Henry-Labord\`ere.
\newblock A general asymptotic implied volatility for stochastic volatility
  models.
\newblock Preprint, arXiv:0504317, 2005.

\bibitem[Hull and Suo(2002)]{HullSuo02}
J.~Hull and E.~Suo.
\newblock A methodology for assessing model risk and its application to the
  implied volatility function model.
\newblock \emph{Journal of Financial and Quantitative Analysis}, 37:\penalty0
  297--318, 2002.

\bibitem[Hull and White(2000)]{Hullwhite00}
J.~Hull and A.~White.
\newblock Forward rate volatilities, swap rate volatilities, and the
  implementation of the {LIBOR} market model.
\newblock \emph{The Journal of Fixed Income}, 10\penalty0 (2):\penalty0 46--62,
  2000.

\bibitem[J\"ackel(2008)]{Jac08}
P.~J\"ackel.
\newblock Hyperbolic local volatility.
\newblock Preprint, {www.jaeckel.org}, 2008.

\bibitem[J\"ackel(2016)]{Jackel10}
P.~J\"ackel.
\newblock Quanto skew.
\newblock Preprint, {www.jaeckel.org}, 2016.

\bibitem[Kunita(1997)]{Kun97}
H.~Kunita.
\newblock \emph{Stochastic Flows and Stochastic Differential Equations}.
\newblock Cambridge University Press, 1997.

\bibitem[Kunitomo and Takahashi(2004)]{KumiTaka92}
N.~Kunitomo and A.~Takahashi.
\newblock Applications of the asymptotic expansion approach based on
  {Malliavin-Watanabe} calculus in financial problems.
\newblock In S.~Watanabe, editor, \emph{Stochastic Processes and Applications
  to Mathematical Finance}, pages 195--232. World Scientific, 2004.

\bibitem[Musiela and Rutkowski(2005)]{MusielaRut05}
M.~Musiela and M.~Rutkowski.
\newblock \emph{Martingale Methods in Financial Modelling}.
\newblock Springer, 2nd edition, 2005.

\bibitem[Ng and Sun(2008)]{NgSun08}
K.~H. Ng and Y.~Sun.
\newblock Risk management with the {CEV LIBOR} market model.
\newblock Preprint, {SSRN:497462}, 2008.

\bibitem[Nualart(2005)]{Nualart05}
D.~Nualart.
\newblock \emph{The Malliavin Calculus and Related Topics}.
\newblock Springer, 2005.

\bibitem[Pagliarani and Pascucci(2012)]{PagPas12}
S.~Pagliarani and A.~Pascucci.
\newblock Analytical approximation of the transition density in a local
  volatility model.
\newblock \emph{Cent. Eur. J. Math.}, 10\penalty0 (1):\penalty0 250--270, 2012.

\bibitem[Pellser(2000)]{Pelsser00}
A.~Pellser.
\newblock \emph{Efficient Methods for Valuing Interest Rate Derivatives}.
\newblock Springer, 2000.

\bibitem[Priouret(2005)]{Priouret05}
P.~Priouret.
\newblock Introduction aux processus de diffusion.
\newblock \emph{Course for Master de Sciences et Technologies}, 2005.

\bibitem[Reiner(1992)]{Reiner92}
E.~Reiner.
\newblock Quanto mechanics.
\newblock \emph{Risk}, 5\penalty0 (3):\penalty0 59--63, 1992.

\bibitem[Ren et~al.(2007)Ren, Madan, and Qian]{MadQianRen07}
Y.~Ren, D.~Madan, and M.~Qian.
\newblock Calibrating and pricing with embedded local volatility models.
\newblock \emph{Risk}, pages 138--143, September 2007.

\bibitem[Romo(2012)]{Romo12}
J.~M. Romo.
\newblock The quanto adjustment and the smile.
\newblock \emph{Journal of Futures Markets}, 32:\penalty0 877--908, 2012.

\bibitem[Rubinstein(1994)]{Rubi94}
M.~Rubinstein.
\newblock Implied binomial trees.
\newblock \emph{Journal of Finance}, 49:\penalty0 771--818, 1994.

\bibitem[Schl{\"o}gl(2002)]{Schloegl02}
E.~Schl{\"o}gl.
\newblock A multicurrency extension of the lognormal interest rate market
  models.
\newblock \emph{Finance and Stochastics}, 6:\penalty0 173--196, 2002.

\bibitem[Takahashi(1995)]{Taka95}
A.~Takahashi.
\newblock \emph{Essays on the Valuation Problems of Contingent Claims}.
\newblock PhD thesis, University of California, Berkeley, 1995.

\bibitem[Takahashi(1999)]{Taka99}
A.~Takahashi.
\newblock An asymptotic expansion approach to pricing financial contingent
  claims.
\newblock \emph{Asia-Pacific Financial Markets}, 6\penalty0 (2):\penalty0
  115--151, 1999.

\bibitem[Takahashi(2015)]{Taka09}
A.~Takahashi.
\newblock Asymptotic expansion approach in finance.
\newblock In P.~K. Friz, J.~Gatheral, A.~Gulisashvili, A.~Jacquier, and
  J.~Teichmann, editors, \emph{Large Deviations and Asymptotic Methods in
  Finance}, pages 345--411. Springer, 2015.

\bibitem[Vong and Rojas-Carulla(2014)]{VongMateo14}
G.~Vong and M.~Rojas-Carulla.
\newblock Quanto derivatives in local volatility models.
\newblock Presentation at Global Derivatives, SSRN:2512510, 2014.

\bibitem[Yoshida(1992)]{Yoshi92b}
N.~Yoshida.
\newblock Asymptotic expansions for statistics related to small diffusions.
\newblock \emph{Journal of the Japan Statistical Society}, 22:\penalty0
  139--159, 1992.

\end{thebibliography}
\bibliographystyle{abbrvnat}

\end{document}